\documentclass[12pt]{article}
\usepackage[a4paper]{geometry}
\geometry{textwidth=17cm,inner=1.5cm,top=2cm,textheight=24cm}
\usepackage{babel}

\usepackage{fancyhdr}
\pagestyle{fancy}
\lhead{}
\rhead{}

\usepackage{amsmath}
\usepackage{amsfonts}
\usepackage{amstext}
\usepackage{amssymb}
\usepackage{amsthm}
\usepackage{amscd}

\usepackage{soul}
\usepackage{comment}

\usepackage[pagebackref,draft=false]{hyperref}
\hypersetup{colorlinks,
linkcolor=myrefcolor,
citecolor=mycitecolor,
urlcolor=myurlcolor}

\usepackage[capitalize]{cleveref}
\usepackage{caption}
\usepackage{etaremune}

\usepackage{xcolor}
\definecolor{myurlcolor}{rgb}{0,0,0.4}
\definecolor{mycitecolor}{rgb}{0,0.5,0}
\definecolor{myrefcolor}{rgb}{0.5,0,0}
\usepackage{graphicx}
\usepackage{tikz}
\usepackage{tikz-cd}
\usepackage{mathrsfs}

\usepackage{etoolbox}
\usepackage{makeidx}
\usepackage{sectsty}
\usepackage{dsfont}
\usepackage{enumitem} 
\usepackage[]{latexsym}
\usepackage{braket}
\usepackage{caption}
\usepackage[utf8]{inputenx}
\usepackage[T1]{fontenc}
\usepackage{lmodern}
\usepackage{textcomp}
\usepackage{microtype}
\usepackage{totcount}
\usepackage{blindtext}

\newtheorem{remark}{Remark}

\newtheorem{theorem}{Theorem}

\newtheorem{proposition}{Proposition}
\newtheorem{definition}{Definition}

\newtheorem*{proof*}{Proof}

\newcommand{\be}{\begin{equation}}
\newcommand{\ee}{\end{equation}}
\newcommand{\bea}{\begin{eqnarray}}
\newcommand{\eea}{\end{eqnarray}}
\newcommand{\vsp}{\vspace{0.4cm}}
\newcommand{\grit}[1]{{\bfseries {\itshape {#1}}}}


\newcommand{\ra}{\rightarrow}

\newcommand{\hh}{\mathcal{H}}
\newcommand{\bh}{\mathcal{B}(\mathcal{H})}
\newcommand{\Glh}{\mathcal{GL}(\mathcal{H})}

\newcommand{\Uh}{\mathcal{U}(\mathcal{H})}

\newcommand{\SUh}{\mathcal{SU}(\mathcal{H})}

\newcommand{\Tr}{\textit{Tr}}


\newcommand{\stsp}{\mathscr{S}}

\newcommand{\stav}{\mathscr{V}}

\newcommand{\pos}{\mathscr{P}}

\newcommand{\gr}{\mathrm{g}}

\newcommand{\GG}{\mathrm{G}}

\title{Monotone metric tensors in \\ Quantum Information Geometry}

\author{F. M. Ciaglia$^{1,6}$ \href{https://orcid.org/0000-0002-8987-1181}{\includegraphics[scale=0.7]{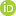}}, F. Di Cosmo$^{1,2,7}$  \href{https://orcid.org/0000-0002-8987-1181}{\includegraphics[scale=0.7]{ORCID.png}}, F. Di Nocera$^{3,8}$\href{https://orcid.org/0000-0002-1415-2422}{\includegraphics[scale=0.7]{ORCID.png}}, P. Vitale$^{4,5,9}$\href{https://orcid.org/0000-0002-5146-410X}{\includegraphics[scale=0.7]{ORCID.png}}\\
\footnotesize{$^{1}$\textit{ Depto. de Matem\'aticas, Univ. Carlos III de Madrid, Legan\'es, Madrid, Spain}} \\
\footnotesize{$^{2}$\textit{ ICMAT, Instituto de Ciencias Matem\'{a}ticas (CSIC-UAM-UC3M-UCM)}}  \\
\footnotesize{$^{3}$\textit{ Max Planck Institute for Mathematics in the Sciences, Leipzig, Germany}} \\
\footnotesize{$^{4}$\textit{ Dipartimento di Fisica ``E. Pancini'', Universit\`a di Napoli Federico II,  Naples, Italy}} \\
\footnotesize{$^{5}$\textit{ INFN-Sezione di Napoli, Naples, Italy}} \\
\footnotesize{$^{6}$\textit{ e-mail:   \texttt{ fciaglia[at]math.uc3m.es }}}\\
\footnotesize{$^{7}$\textit{ e-mail: \texttt{fcosmo[at]math.uc3m.es}}} \\
\footnotesize{$^{8}$\textit{ e-mail: \texttt{fabiodncr[at]gmail.com} and \texttt{dinocer[at]mis.mpg.de}}} \\
\footnotesize{$^{9}$\textit{ e-mail: \texttt{vitale[at]na.infn.it}}} 
}

\begin{document}

\maketitle

\begin{abstract}
We review some geometrical aspects pertaining to the world of monotone quantum metrics in finite dimensions. 
Particular emphasis is given to an unfolded perspective for quantum states that is built out of the spectral theorem and is naturally suited to investigate the comparison with the classical case of probability distributions.
\end{abstract}

\tableofcontents

\thispagestyle{fancy}

\section{Introduction}

The investigation of the geometrical aspects of the space of quantum states is a well-established subject of theoretical and mathematical physics.
In particular, given the recent interest in quantum information theory stemming from the possible applications of quantum technologies, and given the success of the application of the methods of the so-called information geometry to diverse fields like estimation theory, hypothesis testing, and machine learning, the investigation of the differential geometric properties of the space of a finite-level quantum system, especially those related with aspects of Riemannian geometry, is an incredibly active field of research.

The aim of this work is precisely to review (some aspects of) what is known about the Riemannian geometry of the space of quantum states of a finite-level system as seen from a recently introduced point of view.  The latter  is based on the unfolding of a quantum state $\rho$ into a couple $(\vec{p},\mathbf{U})$, where $\vec{p}$ is the probability vector of eigenvalues of $\rho$, and $\mathbf{U}$ is a suitable unitary operator diagonalizing $\rho$.
The appearance of a probability vector immediately calls for a parallel with the classical case where probability vectors play the role of quantum states, and where the Fisher-Rao metric tensor determines the relevant Riemannian geometry. 
Indeed, we will  see that the unfolded perspective actually allows for a direct and clear understanding of the Riemannian metric tensors used in quantum information geometry in terms of a purely classical-like contribution and a quantum contribution.
Quite interestingly, the classical-like contribution coincides with the Fisher-Rao metric tensor mentioned before.
Regarding the purely quantum contribution, it can be essentially seen as a  “restriction” to the orbits of the special unitary group of a weighted version of the Cartan-Killing form, whose weights depend on the eigenvalues of the quantum states.

The work is structured as follows.
In section \ref{sec: geometry of quantum states}, we will review the differential geometric properties of the space of quantum states that will be needed throughout the rest of the paper.
In particular, we will briefly review the partition of the space of quantum states into the disjoint union of orbits of a nonlinear action of the general linear group of the Hilbert space of the system, and we will devote some time in reviewing three alternative identifications of tangent vectors to faithful quantum states, namely, the Jordan, the square-root, and the exponential identification.
These identifications of tangent vectors seem to be seldom used in the current literature despite their usefulness in connection with three different monotone metric tensors as argued in subsections \ref
{subsec: B-H metric tensor}, \ref{subsec: W-Y metric tensor}, and \ref{subsec: B-K-M metric tensor}.

The purpose of section \ref{sec: Cencov and Petz} is to review Petz's classification of all the Riemannian geometries on the manifold of faithful quantum states  which are monotone with respect to the action of completely-positive, trace-preserving maps.
This family of metric tensors determines all the Riemannian geometries on the manifold of faithful quantum states that are relevant from the point of view of quantum information geometry, very much like the Fisher-Rao metric tensor determines the Riemannian geometry of classical probability distributions.

In section \ref{sec: from rel entropies to mono metrics},  we discuss the deep relation existing between quantum relative entropies and the monotone metric tensors reviewed in the previous section. 
The purely mathematical aspects of this relation are then thoroughly analysed section \ref{sec: from 2-functions to covariant tensors},  which contains  a coordinate-free approach to the algorithm needed to extract a Riemannian metric tensor from a relative entropy function.

Section \ref{sec: unfolding of quantum states}  represents the main part of the work. It is    devoted to the introduction of the unfolding procedure for quantum states alluded to before, and to  its application to the family of monotone metric tensors reviewed in section \ref{sec: Cencov and Petz}. Moreover,  the extraction algorithm presented in section \ref{sec: from 2-functions to covariant tensors} shall be   applied to a suitable unfolded version of the family of relative g-entropies introduced in section  \ref{sec: from rel entropies to mono metrics}.
What emerges from the results presented in this section is a clear picture where the same classical-like contribution appears in the unfolding of every monotone metric tensor, and is found to coincide exactly  with the Fisher-Rao metric tensor.
Moreover, this classical-like contribution is also responsible for the appearance of a family of universal geodesics which are common to all monotone metric tensors and, as argued in subsection \ref{subsec: universal geodesics}, it can be visualised as the geodesics of the Fisher-Rao metric tensor, properly immersed in the manifold of faithful quantum states.

Finally, in section \ref{sec: conclusions}, we share some concluding remarks on the work done and on some of its possible future applications.

\section{Geometrical aspects of the space of quantum states}\label{sec: geometry of quantum states}

In this section we  introduce the \emph{space of quantum states} and those of its geometrical aspects that  are more relevant to quantum information geometry.
In the standard picture of quantum mechanics, a quantum system is described by means of mathematical structures that are built out of a complex Hilbert space $\hh$ which identifies the system under investigation.
For instance, (bounded) observables are associated with self-adjoint elements in $\bh$, and quantum states are associated with suitable normalized, positive linear functionals on observables.
Roughly speaking, the relation between quantum states and observables in quantum mechanics is analogous to the relation between real-valued random variables and probability distributions in classical probability.
Let us elaborate a little bit on that.
Let us consider a probability space $(\mathcal{X},\mu)$ associated with a classical probabilistic system, and a quantum system with Hilbert space $\hh$.
Real-valued random variables on $(\mathcal{X},\mu)$ form an algebra just like self-adjoint operators on $\hh$, however, the latter, when endowed with the anti-commutator product
\be\label{eqn: Jordan product}
\{\mathbf{a},\mathbf{b}\}\,:=\,\frac{1}{2}\left(\mathbf{ab} + \mathbf{ba}\right)
\ee
coming from the associative product in $\bh$,  form a non-associative Jordan algebra rather than an associative algebra like random variables on $(\mathcal{X},\mu)$.
This instance is at the hearth of the differences between the classical and quantum realms.
Every probability distribution on $(\mathcal{X},\mu)$ which is absolutely continuous with respect to $\mu$ can be identified with a real-valued function $p$ in $\mathcal{L}^{1}(\mathcal{X},\mu)$ which is non-negative and is such that its integral on the whole $\mathcal{X}$ is 1.
Every such probability distribution determines a linear functional on random variables by means of integration.
Specifically, given   $p$ and the real-valued random variable $f$, we compute the mean value $\langle f\rangle_{p}$ according to
\be
\langle f\rangle_{p}\,:=\,\int_{\mathcal{X}}\,f\,p\,\mathrm{d}\mu,
\ee
and it is clear that the mean value is a linear functional on random variables.
Since $p$ is a probability density function, the mean value functional will take non-negative values whenever the random variable $f$ is non-negative, and will give $1$ when applied to the identity function.
If we focus only on random variables which are essentially bounded with respect to $\mu$, that is, elements of $\mathcal{L}^{\infty}(\mathcal{X},\mu)$, we see that the mean value functional takes only finite values.

In the quantum context, random variables are replaced by self-adjoint operators in $\bh$, where $\hh$ is the Hilbert space  of the system, and probability distributions are replaced by quantum states, i.e.,  trace-class operators in $\bh$ which are positive semidefinite and have unit trace.
Recall that an operator $\rho\in\bh$ is called positive semidefinite, written $\rho\geq 0$, if $\langle\psi|\rho(\psi)\rangle\geq 0$ for all $\psi\in\hh$.
Then, the quantum counterpart of the mean value functional reads
\be\label{eqn: quantum mean value}
\langle\mathbf{a}\rangle_{\rho}\,:=\,\Tr_{\hh}(\rho\,\mathbf{a}),
\ee
where $\Tr_{\hh}$ is the standard Hilbert space trace, and we may look at $\Tr_{\hh}$ as the quantum analogue of the measure $\mu$ in the classical case.
The fact that $\rho$ is positive semidefinite implies that the mean value functional in equation \eqref{eqn: quantum mean value} takes non-negative values on every quantum observable $\mathbf{a}$ which is itself positive semidefinite.
Note that, however, a positive semidefinite observable $\mathbf{a}$ need not be trace-class just as, in the classical case, a positive random variable in $\mathcal{L}^{\infty}(\mathcal{X},\mu)$ need not be in $\mathcal{L}^{1}(\mathcal{X},\mu)$ (but, of course, this instance becomes relevant only in the infinite-dimensional case).
 
Motivated by the previous discussion, we give the following definition

\begin{definition}
Given a  quantum system with Hilbert space $\hh$, let us denote with $\mathcal{B}_{tc}(\hh)$ the space of bounded, trace-class linear operators on $\hh$.
Then, the space of quantum states $\overline{\stsp}(\hh)$ is defined as
\be
\overline{\stsp}(\hh):=\left\{\rho\in\mathcal{B}_{tc}(\hh)\,|\;\;\rho^{\dagger}=\rho,\;\;\rho\geq 0,\;\Tr_{\hh}(\rho)=1\right\}.
\ee
\end{definition}
The space of quantum states is a convex set and it turns out it is also compact.
However, when $\hh$ is infinite-dimensional, the topology in which $\overline{\stsp}(\hh)$ is convex is not the norm topology on $\bh$ \cite[p. 53]{B-R-1987-1}. 

Up to now, there is not a unique, satisfactory theory of quantum information geometry in the infinite-dimensional case, and some attempts may be found, for instance, in \cite{A-V-2005,C-I-J-M-2019,C-L-M-1983,C-M-P-1990,G-K-M-S-2018,G-S-2000,Jencova-2006,Streater-2004}.
Therefore, throughout the paper, we will work only with finite-level quantum systems for which the associated Hilbert space is  finite-dimensional, and for which a satisfactory formulation of quantum information geometry is possible.
In this case, it is not hard to see that, when we look at  $\overline{\stsp}(\hh)$  as a convex subset of the hyperplane $\mathcal{B}_{sa}^{1}(\hh)$ of self-adjoint elements in $\bh$ with unit trace, it has a non-empty open interior.
Specifically, the interior  of $\overline{\stsp}(\hh)$ in $\mathcal{B}_{sa}^{1}(\hh)$ turns out to be
\be
\stsp(\hh)\,=\,\left\{\rho\in\stsp(\hh)\,|\;\;\rho>0\right\},
\ee
that is, quantum states in $\stsp(\hh)$ are invertible as linear operators.
Since $\stsp(\hh)$ is open in $\mathcal{B}_{sa}^{1}(\hh)$ and the latter is an affine space, it follows that $\stsp(\hh)$ is a smooth manifold of real dimension $(n^{2}-1)$, where $\mathrm{dim}(\hh)=n$.

\begin{definition}
The manifold $\stsp(\hh)$  is referred to as the space of faithful quantum states on $\hh$.
\end{definition}

The manifold $\stsp(\hh)$ is the main character in quantum information geometry, but it is not the only one.
Indeed, because of the Krein-Millman's theorem $\overline{\stsp}(\hh)$ is the closed convex hull of its set of extreme points.
These are precisely the rank-one projectors in $\bh$, which form a smooth manifold isomorphic with the complex projective space\footnote{Note that this also works in infinite-dimensions \cite{C-L-M-1983,C-M-P-1990}.} $\mathbb{CP}(\hh)$ associated with $\hh$.
The diffeomorphism $F$ between  $\mathbb{CP}(\hh)$ and the space of rank-one projectors is given by
\be\label{eqn: pure states rank1 projections}
F([\psi]):=\frac{|\psi\rangle\langle\psi|}{\langle\psi|\psi\rangle}\equiv\rho_{\psi},
\ee
where $[\psi]$ is the equivalence class in  $\mathbb{CP}(\hh)$ representing $\psi\in\hh$.
In the following, we will denote the manifold of extreme points of $\overline{\stsp}(\hh)$ by $\stsp_{1}(\hh)$ where the subscript $1$ is telling us the rank of the density operators belonging to $\stsp_{1}(\hh)$.

From a physical point of view, the extreme points in $\overline{\stsp}(\hh)$ are the \grit{pure states} of the theory.
These are all those quantum states that can not be written as the convex combination of more than one quantum state.
All other states in $\overline{\stsp}(\hh)$, are called \grit{mixed states},  and are used to describe statistical mixtures of quantum states. The eigenvalue of the density matrix $\rho$ associated to an eigenstate $\ket{\psi}$ is then interpreted as the weight of the fraction in the mixture that lays in the state $\ket{\psi}$. 
Thus, the fact that every $\rho$ is a positive semidefinite operator with unit trace implies that these weights are real, non-negative numbers whose sum is $1$, i.e., they form a probability vector.

The unitary group $\Uh$ naturally acts on $\hh$, and this action descends to the quotient $\mathbb{CP}(\hh)$ in the sense that we have
\be
\varphi_{U}([\psi]):=[\mathbf{U}\psi],
\ee
and, because of equation \eqref{eqn: pure states rank1 projections}, we immediately obtain the map
\be
\Phi_{U}(\rho_{\psi}):=\frac{|\mathbf{U}\psi\rangle\langle\mathbf{U}\psi|}{\langle\mathbf{U}\psi|\mathbf{U}\psi\rangle} = \mathbf{U}\rho_{\psi}\mathbf{U}^{\dagger}.
\ee
This map is clearly well-defined on the whole space $\overline{\stsp}(\hh)$ of quantum states, and provides a left action of $\Uh$ on it given by
\be\label{eqn: unitary action}
\Phi_{U}(\rho ):=  \mathbf{U}\rho \mathbf{U}^{\dagger} \equiv \rho_{U} .
\ee
The relevance of this action of $\Uh$ is partly due to the fact that the  time evolution of a closed quantum system is usually described in terms of it.
Indeed, the time evolution of the wave function $\psi\in\mathcal{H}$ determined by Schr\"{o}dinger equation reads $U_{t}|\psi\rangle$, with $U_{t}=\mathrm{e}^{i tH}$, and with $H$ being the Hamiltonian operator of Schr\"{o}dinger equation. Thus, the dynamical evolution of a quantum state $\rho$,  determined by the one-parameter group of unitary operators associated with the Schr\"{o}dinger equation, reads $U_{t}\rho U_{t}^{\dagger}$.

When $\mathrm{dim}(\hh)>2$, the manifolds $\stsp_{1}(\hh)$ and $\stsp(\hh)$ are not enough to fully describe the space of quantum states because the structure of the boundary of $\overline{\stsp}(\hh)$ becomes definitely more complex.
Indeed, $\overline{\stsp}(\hh)$ becomes the disjoint union of $n$ smooth manifolds of increasing dimensions.
In particular,   the following result regarding the structure of $\overline{\stsp}(\hh)$ holds \cite{DA-F-2021, G-K-M-2005}.

\begin{theorem}\label{thm: stratification of quantum states}
The space of quantum states $\overline{\stsp}(\hh)$ of a finite-level quantum system with Hilbert space $\hh$ decomposes as the disjoint union
\be\label{eqn: decomposition of the space of states}
\overline{\stsp}(\hh)\,=\,\bigsqcup_{k=1}^{n}\,\stsp_{k}(\hh),
\ee
where each $\stsp_{k}(\hh)$ is made up of quantum states with fixed rank equal to $k$.
In particular, $\stsp_{n}(\hh)$ coincides with the space of faithful quantum states $\stsp(\hh)$ introduced before.
Moreover, $\stsp_{k}(\hh)$  is a smooth and connected real manifold of  dimension $2 n k - k^2 -1$.  

\end{theorem}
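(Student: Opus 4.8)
The plan is to handle the three assertions separately, putting essentially all the work into the manifold structure. The disjoint decomposition is tautological: every $\rho\in\overline{\stsp}(\hh)$ is positive semidefinite with unit trace, so its rank is a well-defined integer in $\{1,\dots,n\}$ (it cannot be $0$ since $\Tr_{\hh}(\rho)=1$), and defining $\stsp_{k}(\hh)$ to be the states of rank exactly $k$ partitions $\overline{\stsp}(\hh)$. For the identification $\stsp_{n}(\hh)=\stsp(\hh)$ I would simply note that a positive semidefinite operator on an $n$-dimensional space has rank $n$ iff all its eigenvalues are strictly positive, i.e.\ iff it is invertible, which is the defining property of a faithful state.

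The heart of the argument is to realise each $\stsp_{k}(\hh)$ as a single orbit of the nonlinear action $\alpha_{g}(\rho)=\frac{g\rho g^{\dagger}}{\Tr_{\hh}(g\rho g^{\dagger})}$ of $\Glh$ alluded to earlier. First I would check that this is a well-defined smooth action (the denominator is strictly positive, since $g\rho g^{\dagger}$ is a nonzero positive semidefinite operator) and that it preserves rank, because conjugation by an invertible operator does. Transitivity on the rank-$k$ stratum is the one genuine linear-algebra input: taking the reference state $\rho_{0}=\frac1k\sum_{i\le k}|e_{i}\rangle\langle e_{i}|$ and an arbitrary $\sigma=\sum_{i\le k}\mu_{i}|f_{i}\rangle\langle f_{i}|\in\stsp_{k}(\hh)$, the operator sending $e_{i}\mapsto\sqrt{k\mu_{i}}\,f_{i}$ for $i\le k$ and completing $\{f_{i}\}$ to an orthonormal basis for $i>k$ is invertible and carries $\rho_{0}$ to $\sigma$ (the trace is already $1$, so no normalisation is needed). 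Hence $\stsp_{k}(\hh)=\Glh\cdot\rho_{0}\cong\Glh/H_{k}$, with $H_{k}$ the isotropy subgroup of $\rho_{0}$.

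It then remains to identify $H_{k}$ and count dimensions. Writing $g$ in blocks adapted to $\hh=\mathrm{ran}(\rho_{0})\oplus\ker(\rho_{0})$, the equation $\alpha_{g}(\rho_{0})=\rho_{0}$ forces the lower-left block to vanish and the upper-left block $A$ to satisfy $AA^{\dagger}\propto I_{k}$, while the upper-right block is free and the lower-right block is free subject to invertibility. This exhibits $H_{k}$ as a closed subgroup with $\dim H_{k}=(k^{2}+1)+2k(n-k)+2(n-k)^{2}$; subtracting from $\dim\Glh=2n^{2}$ gives $\dim\stsp_{k}(\hh)=2nk-k^{2}-1$, exactly as claimed. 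Because $H_{k}$ is closed, $\Glh/H_{k}$ is a smooth manifold of this dimension, and connectedness of $\stsp_{k}(\hh)$ follows at once, since $\Glh\cong\mathcal{GL}(n,\mathbb{C})$ is connected and the orbit is its continuous image.

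The main obstacle I anticipate is not the dimension count but verifying that the orbit is an \emph{embedded} submanifold of $\overline{\stsp}(\hh)$ and not merely an injectively immersed one, so that the smooth structure on $\Glh/H_{k}$ transports faithfully to $\stsp_{k}(\hh)$. The clean route is to observe that $\stsp_{k}(\hh)$ is locally closed in the ambient affine space of unit-trace self-adjoint operators, being the intersection of the closed condition $\mathrm{rank}\le k$ with the open condition $\mathrm{rank}\ge k$, and to invoke the standard fact that a locally closed orbit of a Lie group action is automatically embedded with the orbit map a diffeomorphism. As an independent check on the dimension I would also record the fibration $\rho\mapsto\mathrm{ran}(\rho)$ over the complex Grassmannian $\mathrm{Gr}(k,n)$ with fibre the faithful states $\stsp(\mathbb{C}^{k})$, which reproduces $2k(n-k)+(k^{2}-1)=2nk-k^{2}-1$.
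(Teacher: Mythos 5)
Your proposal is correct, and it follows essentially the route the paper itself points to: the paper states Theorem \ref{thm: stratification of quantum states} without proof, citing \cite{DA-F-2021, G-K-M-2005}, and then in subsection \ref{subsec: nonlinear action of GL(H)} realises each stratum $\stsp_{k}(\hh)$ as a homogeneous space for the nonlinear $\Glh$-action, with transitivity deferred to \cite{C-I-J-M-2019}. Your dimension count via the isotropy subgroup checks out ($2n^{2}-(k^{2}+1)-2k(n-k)-2(n-k)^{2}=2nk-k^{2}-1$), and you rightly flag the embeddedness issue, resolving it by local closedness of the rank-$k$ locus. The only substantive difference is your transitivity argument: you exhibit a single invertible operator $e_{i}\mapsto\sqrt{k\mu_{i}}f_{i}$ carrying the reference state to an arbitrary rank-$k$ state in one step, whereas the argument sketched in the paper (following \cite{C-I-J-M-2019}) proceeds in two stages, first conjugating by a unitary to make the two states commute and then acting by a positive invertible operator commuting with the reference state; the two constructions are equivalent in content, yours being marginally more compact, theirs making the $\Uh$-orbit structure inside each stratum more visible.
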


It turns out that $\overline{\stsp}(\hh)$ with the decomposition in equation \eqref{eqn: decomposition of the space of states} is actually a stratified manifold \cite{DA-F-2021} whose strata are precisely  the manifolds $\stsp_{k}(\hh)$  with $k=1,..,n$.
Accordingly, $\stsp_{1}(\hh)$ is often referred to as the minimal stratum, while $\stsp(\hh)=\stsp_{n}(\hh)$ is often referred to as the maximal stratum.

\subsection{The nonlinear action of the general linear group}\label{subsec: nonlinear action of GL(H)}

It is important to note that the minimal stratum $\stsp_{1}(\hh)$ is not just a real smooth manifold, but it is a homogeneous space for the unitary group $\Uh$.
Indeed, the action $(\mathbf{U},\rho_{\psi})\mapsto \mathbf{U}\rho_{\psi}\mathbf{U}^{\dagger}$ is clearly smooth and transitive on $\stsp_{1}(\hh)$.
Moreover, this group action is strong enough to determine the so-called Fubini-Study metric on $\stsp_{1}(\hh)\cong\mathbb{CP}(\hh)$ as the unique (up to a constant factor) unitary invariant Riemannian metric tensor \cite{B-Z-2006}.
An interesting comparison between the Fubini-Study metric tensor and the Fisher-Rao metric tensor may be found in \cite{F-K-M-M-S-V-2010}.

As mentioned before, the map in equation \eqref{eqn: unitary action}  defines an action of $\Uh$ on the whole space of quantum states.
Then, since the minimal stratum is a homogeneous space of $\Uh$, it is reasonable to investigate all other orbits of the action of $\Uh$ on the space of states.
Clearly, since $\mathbf{U}$ is unitary, $\rho$ and $\mathbf{U}\rho\mathbf{U}^{\dagger}$ have the same eigenvalues and, conversely, if $\rho$ and $\rho'$ are quantum states having the same eigenvalues, then there is a unitary operator $\mathbf{U}$ such that $\rho'=\mathbf{U}\rho\mathbf{U}^{\dagger}$.
It turns out that the sets of isospectral quantum states, i.e., quantum states having the same eigenvalues, are smooth homogeneous spaces of the unitary group.
Clearly, if $\stsp_{k}(\hh)$ is a stratum in $\overline{\stsp}(\hh)$ with $k>1$, it is clear that  it contains an infinite number of manifolds of isospectral states of rank $k$.
Therefore, the action of the unitary group is not enough to move transitively through the strata of the stratification of $\overline{\stsp}(\hh)$ given in equation \ref{eqn: decomposition of the space of states}.
It is a remarkable fact that we may ``enlarge'' the action of the unitary group on $\overline{\stsp}(\hh)$ to be an action of the general linear group $\mathcal{GL}(\hh)$ (which is the complexification of $\Uh$), in such a way that each $\stsp_{k}(\hh)$ becomes an homogeneous space of $\mathcal{GL}(\hh)$.
Specifically, this action is given by
\be\label{eqn GL-action}
\rho\,\mapsto\,\alpha(\gr,\rho)\equiv\rho_{\gr}\,=\,\frac{\gr\,\rho\,\gr^{\dagger}}{\Tr_{\hh}(\gr\rho\gr^{\dagger})},
\ee
where $\gr\in\mathcal{GL}(\hh)$.
Notice that the factor in the  denominator is necessary to ensure that $\rho_{\gr}$ is a quantum state.
Quite interestingly, this action of $\mathcal{GL}(\hh)$ does not preserve the convex structure of $\overline{\stsp}(\hh)$, that is, it is a nonlinear action of the general linear group.
However, it is readily seen that when $\gr$ is  a unitary operator, then we recover the isospectral action of $\Uh$ we discussed above.
The fact that the action $\alpha$ is transitive on each $\stsp_{k}(\hh)$ is proved in  \cite{C-I-J-M-2019}, and the idea behind it is to first use a unitary matrix to transform any quantum state $\rho'$ of rank $k$ into a quantum state $\rho'_{U}$ of rank $k$ commuting with a fixed quantum state $\rho$ of rank $k$, and then use a self-adjoint element in $\mathcal{GL}(\hh)$ commuting with $\rho$ to transform $\rho'_{U}$ in $\rho$.

Since $\stsp(\hh)$ is topologically trivial, it holds the global trivialization  $T\stsp(\hh)\cong \stsp(\hh)\times \mathcal{B}_{sa}^{0}(\hh)$ of the tangent bundle $T\stsp(\hh)$ of $\stsp(\hh)$, where 
\be
\mathcal{B}_{sa}^{0}(\hh)\,:=\,\left\{\mathbf{a}\in\bh \;|\;\;\;\mathbf{a}^{\dagger}=\mathbf{a},\;\;\Tr_{\hh}(\mathbf{a})=0\right\},
\ee
i.e., the space of traceless, self-adjoint operators.
This global trivialization   depends on a particular identification of  the tangent space $T_{\rho}\stsp(\hh)$ with $\mathcal{B}_{sa}^{0}(\hh)$ which does not depend on the base point $\rho$.
Specifically, for every $\rho\in\stsp(\hh)$ and for a suitable $\epsilon>0$ (depending on $\rho$), we may consider the curve 
\be
\rho_{\mathbf{a}}(t):=\rho + t\mathbf{a}
\ee
with $\mathbf{a}\in\mathcal{B}_{sa}^{0}(\hh)$, which is a curve inside $\stsp(\hh)$ for all $|t|<\epsilon$, and thus obtain the identification of a tangent vector at $\rho$ with  $\mathbf{a}$.
We call this identification of $T_{\rho}\stsp(\hh)$ with $\mathcal{B}_{sa}^{0}(\hh)$ the \grit{linear identification} because it makes use of the natural convex structure that $\overline{\stsp}(\hh)$ (and thus $\stsp(\hh)$) inherits from the ambient space.
However, note that the curve $\rho_{\mathbf{a}}(t)$ always escapes $\stsp(\hh)$, and also $\overline{\stsp}(\hh)$, after a finite time.

The linear identification is probably the most used identification of tangent vectors at $\rho\in\stsp(\hh)$, however, it is definitely not unique.
Indeed, since $\stsp(\hh)$ is a homogeneous space of $\mathcal{GL}(\hh)$, we may realize the tangent space $T_{\rho}\stsp(\hh)$ in terms of the fundamental vector fields of the action of $\mathcal{GL}(\hh)$ on $\stsp(\hh)$, and this will lead us to another identification of tangent vectors that, according to \cite{Ciaglia-2020,C-J-S-2020}, is particularly tailored for the monotone metric tensor known as the Bures-Helstrom metric tensor, as reviewed in subsection \ref{subsec: B-H metric tensor}.
Specifically, we first note that every $\gr\in\mathcal{GL}(\hh)$ can be written as $\gr=\mathrm{e}^{\frac{1}{2}(\mathbf{a} + i\mathbf{b})}$, where $\mathbf{a},\mathbf{b}$ are self-adjoint elements in $\bh$ (recall that the Lie algebra of $\mathcal{GL}(\hh)$ is $\bh$).
Then, we consider the curve $g(t) = e^{\frac{t}{2}(\mathbf{a} + i \mathbf{b})}$ starting at the identity element, and compute the associated fundamental tangent vector $\Gamma_{\mathbf{ab}}(\rho)$ at $\rho$ according to
\be\label{eqn: GL-fundamental tangent vectors}
\begin{split}
\Gamma_{\mathbf{ab}}(\rho) &= \frac{\mathrm{d}}{\mathrm{d}t} \left(\frac{g(t)\rho \, g^{\dagger}(t)}{\Tr_{\hh}(g(t)\rho \, g^{\dagger}(t))}  \right)_{t=0} = \\
&= \frac{1}{2}(\rho\mathbf{a} + \mathbf{a}\rho) - \Tr_{\hh}(\mathbf{a}\rho) \rho +  \frac{i}{2}( \mathbf{b} \rho - \rho \mathbf{b})\\ 
& =  \{\rho, \mathbf{a}\} - \Tr(\mathbf{a}\rho) \rho  + [[\rho,\mathbf{b}]]  
\end{split}
\ee
where we used equation \eqref{eqn: Jordan product} and $[[\rho,\mathbf{b}]]=\frac{i}{2}( \mathbf{b} \rho - \rho \mathbf{b})$.
Therefore, we obtain that every tangent vector $\mathbf{v}_{\rho}$ in $T_{\rho}\stsp(\hh)$ can be written as
\be\label{eqn: GL-fundamental tangent vectors 2}
\mathbf{v}_{\rho}= \{\rho, \mathbf{a}\} - \Tr(\mathbf{a}\rho) \rho  + [[\rho,\mathbf{b}]]= \{\rho,\mathbf{a} - \mathbb{I}_{\hh}\Tr_{\hh}(\rho\mathbf{a})\} + [[\rho,\mathbf{b}]],
\ee
where $\mathbb{I}_{\hh}$ is the identity operator on $\hh$.
Clearly, when $\mathbf{a}=\mathbf{0}$ we obtain the action of the unitary group, and thus $\Gamma_{\mathbf{0b}}$ may be read as the fundamental vector field of the action of $\Uh$, and we set $\mathbb{X}_{\mathbf{b}}:=\Gamma_{\mathbf{0b}}$.
Similarly, we write 
\be\label{eqn: gradient vector fields}
\mathbb{Y}_{\mathbf{a}}:=\Gamma_{\mathbf{a0}},
\ee
and we note that, unlike the $\mathbb{X}_{\mathbf{b}}$'s, these vector fields do not close a Lie algebra and thus do not give rise to a group action. 
It is then clear that every fundamental vector field $\Gamma_{\mathbf{ab}}$ of $\mathcal{GL}(\hh)$ on $\stsp(\hh)$ can be written as the sum
\be\label{eqn: hamiltonian + gradient vector fields}
\Gamma_{\mathbf{ab}}\,=\,\mathbb{Y}_{\mathbf{a}} + \mathbb{X}_{\mathbf{b}}.
\ee
It turns out that, contrarily to the vector fields $\mathbb{X}_{\mathbf{b}}$, that generate the tangent space of the manifolds of isospectral states in $\stsp(\hh)$, the  $\mathbb{Y}_{\mathbf{a}}$'s are enough to generate the tangent space at each $\rho\in\stsp(\hh)$ \cite{C-J-S-2020}.
Roughly speaking, this follows from the fact that the linear super-operator
\be
\mathbf{a}\,\mapsto A_{\rho}(\mathbf{a}):=\{\rho,\mathbf{a}\}
\ee
is invertible for all $\rho\in\stsp(\hh)$.
This means that every tangent vector $\mathbf{v}_{\rho}\in T_{\rho}\stsp(\hh)$ can be written as
\be\label{eqn: Jordan identification}
\mathbf{v}_{\rho} \,=\, \,\{\rho, \mathbf{a}\} - \Tr(\mathbf{a}\rho) \rho \,=\, \{\rho,\mathbf{a} - \mathbb{I}_{\hh}\Tr_{\hh}(\rho\mathbf{a})\}\equiv J_{\rho}^{\mathbf{a}}
\ee
for some self-adjoint $\mathbf{a}\in\bh$.
We call this identification the \grit{Jordan identification} because it makes use of the Jordan product $\{\cdot,\cdot\}$ associated with the (scaled) anti-commutator.
Note that, in general, it may happen that the same $\mathbf{v}_{\rho}$ may be associated with different operators  $\mathbf{a}_{1}\neq\mathbf{a}_{2}$ through the Jordan identification.

We will see that the vector fields $\mathbb{Y}_{\mathbf{a}}$   are the gradient vector fields, in the sense of Riemannian geometry, associated with linear functions by means of the so-called Bures-Helstrom monotone metric tensor (see also \cite{Ciaglia-2020,C-J-S-2020}).

\begin{remark}
Since every $\stsp_{k}(\hh)$ is a smooth homogenous space of $\mathcal{GL}(\hh)$, equations \eqref{eqn: GL-fundamental tangent vectors} and \eqref{eqn: GL-fundamental tangent vectors 2} make sense for every $\rho\in\stsp_{k}(\hh)$ for all $k=1,..,n$ and give an identification of tangent vectors at $\rho\in\stsp_{k}(\hh)$ with elements in $\mathcal{B}_{sa}^{0}(\hh)$.
Clearly, we do not recover all $\mathcal{B}_{sa}^{0}(\hh)$ because $\rho$ is not full-rank. 
However, it turns out that even if $k<n$, the vector fields $\mathbb{Y}_{\mathbf{a}}$  generate the tangent space at each $\rho\in\stsp_{k}(\hh)$ \cite{C-J-S-2020}, and this means that  the Jordan identification actually makes sense for all strata of the space of quantum states.
Of course,  the proof can not rely on the invertibility of $A_{\rho}$ because the latter  is not invertible when $\rho$ is not full-rank.


\end{remark}

\subsection{The square-root identification}\label{eqn: square-root identification}

Once we have the transitive action of $\Glh$ on $\stsp(\hh)$  given in equation \eqref{eqn GL-action}, we can immediately define an infinite number of transitive actions of $\Glh$ on $\stsp(\hh)$ by suitably intertwining  $\alpha$ with a diffeomorphism, $\varphi$, of the cone of positive, invertible operators on $\hh$ in itself,  such that $\varphi(c\rho)=c\varphi(\rho)$ for all positive $c\in\mathbb{R}$.
Specifically, it is a matter of direct inspection to check that 
\be
\alpha_{\varphi}(\gr,\rho):=\frac{\varphi^{-1}\left(\gr\,\varphi(\rho)\,\gr^{\dagger}\right)}{\Tr_{\hh}(\varphi^{-1}\left(\gr\varphi (\rho)\gr^{\dagger}\right))}
\ee
is again a transitive action of $\Glh$ on $\stsp(\hh)$.
In particular, if we take $\varphi(\rho)=\sqrt{\rho}$, we obtain the action $\alpha_{\varphi}\equiv \beta$ given by
\be
\beta(\gr,\rho):=\frac{\left(\gr\,\sqrt{\rho}\,\gr^{\dagger}\right)^{2}}{\Tr_{\hh}\left(\left(\gr\sqrt{\rho}\gr^{\dagger}\right)^{2}\right)}.
\ee
Then, proceeding as in equation \eqref{eqn: GL-fundamental tangent vectors}, we obtain the fundamental vector fields $\Upsilon_{\mathbf{ab}}$ given by
\be\label{eqn: sqrt-GL-fundamental tangent vectors}
\begin{split}
\Upsilon_{\mathbf{ab}}(\rho) &= \frac{\mathrm{d}}{\mathrm{d}t} \left(\beta(\gr(t),\rho)  \right)_{t=0}  =  \{\rho, \mathbf{a}\} + \sqrt{\rho}\,\mathbf{a}\sqrt{\rho} - 2\Tr(\mathbf{a}\rho) \rho  + [[\rho,\mathbf{b}]]  .
\end{split}
\ee
Clearly, we can write
\be
\Upsilon_{\mathbf{ab}}\,=\,\mathbb{W}_{\mathbf{a}} + \mathbb{X}_{\mathbf{b}},
\ee
where $\mathbb{X}_{\mathbf{b}}=\Upsilon_{\mathbf{0}\mathbf{b}}=\Gamma_{\mathbf{0}\mathbf{b}}$  is a fundamental vector field for the action of the unitary group, and 
\be\label{eqn: square-root gradient vector fields}
\mathbb{W}_{\mathbf{a}}=\Upsilon_{\mathbf{a}\mathbf{0}},
\ee
is $\varphi$-related to the gradient vector field $\mathbb{Y}_{\mathbf{a}}=\Gamma_{\mathbf{a}\mathbf{0}}$.
Therefore, the vector fields $\mathbb{W}_{\mathbf{a}}$ are enough to generate the tangent space of $\stsp(\hh)$ at each $\rho$, and we call the identification of a tangent vector $\mathbf{v}_{\rho}$ given by
\be\label{eqn: sqrt-GL-fundamental tangent vectors 2}
\mathbf{v}_{\rho} = 2\{\sqrt{\rho},\{\sqrt{\rho}, \mathbf{a} - \mathbb{I}_{\hh}\Tr_{\hh}(\rho\mathbf{a})\}\} = \{\rho, \mathbf{a}\} + \sqrt{\rho}\,\mathbf{a}\sqrt{\rho} - 2\Tr(\mathbf{a}\rho) \rho  \equiv  S_{\rho}^{\mathbf{a}} ,
\ee
the \grit{square-root} identification.
We will see that the vector fields $\mathbb{W}_{\mathbf{a}}$   are the gradient vector fields, in the sense of Riemannian geometry, associated with linear functions by means of the so-called Wigner-Yanase monotone metric tensor (see also \cite{Ciaglia-2020}).

\subsection{The exponential identification}\label{subsec: exponential identification}

There is yet another quite interesting identification of tangent vectors that it is worth   recalling.
To this, we first notice that the set $\pos(\hh)$ of positive, invertible operators on $\hh$ is an open subset of the vector space $\mathcal{B}_{sa}(\hh) $ of Hermitian (self-adjoint) linear operators on $\hh$.
Moreover, every $\mathbf{h}\in \mathcal{B}_{sa}(\hh)$ gives rise to an element in $\pos(\hh)$ by means of  $\mathrm{e}^{\mathbf{h}}$, and every $\rho\in\pos(\hh)$ gives rise to an element in $\mathcal{B}_{sa}(\hh)$ by means of $\ln(\rho)$.
Essentially, the map $\psi\colon\pos(\hh)\ra\mathcal{B}_{sa}(\hh)$ given by
\be
\psi(\rho)\,:=\,\ln(\rho)
\ee
is a diffeomorphism with inverse
\be
\psi^{-1}(\mathbf{h})\,=\,\mathrm{e}^{\mathbf{h}}\,.
\ee
Inspired by what we have done in the previous subsection, we use $\psi$ to define an action of the Euclidean group on $\pos(\hh)$.
Indeed, $\mathcal{B}_{sa}(\hh)$ is a real Euclidean space with respect to the (restriction of) the Hilbert-Schmidt product
\be
\langle\mathbf{h},\mathbf{k}\rangle\,=\,\mathrm{Tr}_{\hh}(\mathbf{h}\,\mathbf{k}),
\ee
and the Euclidean group acts on $\mathcal{B}_{sa}(\hh)$ as
\be\label{eqn: Eclidean group action}
A_{R,\mathbf{a}}(\mathbf{h})\,=\,R(\mathbf{h}) + \mathbf{a},
\ee
where $R$ is an element of the orthogonal group and $\mathbf{a}\in\mathcal{B}_{sa}(\hh)$.
The unitary group $\Uh$ may be realised as a subgroup of the orthogonal group of $\mathcal{B}_{sa}(\hh)$ according to 
\be\label{eqn: coadjoint action of unitary group}
R_{\mathbf{U}}(\mathbf{h})\,:=\,\mathbf{U}\mathbf{h}\mathbf{U}^{\dagger},
\ee
because it is easily checked that $R_{\mathbf{U}}$ preserves the Euclidean product
\be
\langle R_{U}(\mathbf{h}),R_{U}(\mathbf{k})\rangle\,=\,\langle\mathbf{U}\mathbf{h}\mathbf{U}^{\dagger},\mathbf{U}\mathbf{k}\mathbf{U}^{\dagger}\rangle\,=\,\langle\mathbf{h},\mathbf{k}\rangle.
\ee 
We thus obtain an action of the group $\Uh\rtimes_{R} \stav$  on $\mathcal{B}_{sa}(\hh)$ by  restricting the action of the Euclidean group given in equation \eqref{eqn: Eclidean group action}. This   can be transported to $\pos(\hh)$,  thus giving rise to 
\be
\widetilde{\Xi}((\mathbf{U},\mathbf{a}),\rho)=\mathrm{e}^{\mathbf{U}\ln(\rho)\mathbf{U}^{\dagger} + \mathbf{a}}.
\ee 
We can now normalize $\widetilde{\Xi}$ to $\stsp(\hh)$ and obtain the action $\Xi$ of $\Uh\rtimes_{R} \stav$ on $\stsp(\hh)$ given by 
\be
\Xi((\mathbf{U},\mathbf{a}),\rho)\,:=\, \frac{\mathrm{e}^{\mathbf{U}\ln(\rho)\mathbf{U}^{\dagger} + \mathbf{a}}}{\Tr(\mathrm{e}^{\mathbf{U}\ln(\rho)\mathbf{U}^{\dagger} + \mathbf{a}})}.
\ee

\begin{remark}
The Lie group  $\Uh\rtimes_{R} \stav$  is diffeomorphic to the cotangent bundle of the unitary group.
Indeed, if $G$ is any Lie group, the cotangent space $T^{*}G\cong G\times\mathfrak{g}^{*}$  is endowed with the structure of  Lie group \cite{A-G-M-M-1994,A-G-M-M-1998} according to
\be
(\gr_{1},\,a_{1})\cdot(\gr_{2},\,a_{2})\,:=\,(\gr_{1}\gr_{2},\,Ad_{\gr_{1} }^{*}(a_{2}) + a_{1}),
\ee
where $Ad^{*}$ is the dual of the adjoint action of $G$ on its Lie algebra $\mathfrak{g}$.
The resulting Lie group is also denoted by $G\rtimes_{Ad^{*}}\mathfrak{g}^{*}$  to emphasise the fact that the group structure is associated with a semidirect product.
Now, when $G=\Uh$, its Lie algebra $\mathfrak{g}$ is given by skew-adjoint operators on $\hh$ according to
\be
i\mathbf{b}\,\mapsto\,\mathbf{U}=\mathrm{e}^{i\mathbf{b}},
\ee
where $\mathbf{b}$ is an Hermitian operator.
Then, we can identify the dual space $\mathfrak{g}^{*}$ with the vector space $\stav$ of Hermitian operators by means of  the pairing
\be
\langle\mathbf{a},i\mathbf{b}\rangle\,:=\,\mathrm{Tr}(\mathbf{a}\,\mathbf{b}).
\ee
Consequently, the coadjoint action reads
\be
Ad_{\mathbf{U} }^{*}(\mathbf{a})\,=\,\mathbf{U}\,\mathbf{a}\,\mathbf{U}^{\dagger}\,=\, R_{\mathbf{U}}(\mathbf{a}),
\ee
where we used  equation \eqref{eqn: coadjoint action of unitary group} in the last equality, and  we conclude that $\Uh\rtimes_{R} \stav$  is actually diffeomorphic to the Lie group $T^{*}\Uh\equiv\Uh\rtimes_{Ad^{*}}\mathfrak{g}^{*}$ as claimed.

\end{remark}

The fundamental vector fields  $\Psi_{\mathbf{ab}} $ of $\Xi $ are easily computed to be 
\be
\Upsilon_{\mathbf{ab}} \,=\,\mathbb{Z}_{\mathbf{a}}  + \mathbb{X}_{\mathbf{b}}
\ee
where the $\mathbb{X}_{\mathbf{b}}$'s are the vector fields generating the standard action of the unitary group, and the $\mathbb{Z}_{\mathbf{a}} $'s are given by
\be\label{eqn:  exponential identification 2}
\mathbb{Z}_{\mathbf{a}} (\rho)\,=\,\frac{\mathrm{d}}{\mathrm{d}t}\left(\frac{\mathrm{e}^{ \ln(\rho)  + t\mathbf{a}}}{\Tr\left(\mathrm{e}^{ \ln(\rho)  + t\mathbf{a}}\right)}\right)_{t=0}=\int_{0}^{1}\,\mathrm{d}\lambda\,\left(\rho^{\lambda}\,\mathbf{a} \,\rho^{1-\lambda}\right) - \Tr(\rho\,\mathbf{a})\,\rho\,.
\ee
By construction, the $\mathbb{Z}_{\mathbf{a}}$'s are enough to generate the tangent space at each $\rho\in\stsp(\hh)$,  and we call the identification of a tangent vector $\mathbf{v}_{\rho}$ given by
\be\label{eqn:  exponential identification 3}
\mathbf{v}_{\rho}=\int_{0}^{1}\,\mathrm{d}\lambda\,\left(\rho^{\lambda}\,\mathbf{a} \,\rho^{1-\lambda}\right) - \Tr(\rho\,\mathbf{a})\,\rho \equiv E_{\rho}^{\mathbf{a}}  ,
\ee
the \grit{exponential} identification.
We will see that the vector fields  $\mathbb{Z}_{\mathbf{a}}$ are the gradient vector fields, in the sense of Riemannian geometry, associated with linear functions by means of the so-called Bogoliubov-Kubo-Mori monotone metric tensor (see also \cite{Ciaglia-2020}).
It is also worth mentioning the recent work \cite{A-L-2021} where the finite transformations associated with the vector fields $\mathbb{Z}_{\mathbf{a}}$ are exploited in the definition of a Hilbert space structure on $\stsp(\hh)$.

\section{Riemannian geometries of  quantum states}\label{sec: Cencov and Petz}

In the finite dimensional setting, the basic building blocks of Classical Information Geometry are probability distributions on a finite sample space. Let us denote this sample space with $\mathcal{X}_n$, where the subscript $n$ indicates its cardinality. As it is well-known, the space $\mathcal{P}(\mathcal{X}_{n})$ of probability distributions on $\mathcal{X}_{n}$ can be identified with the unit simplex in the Euclidean space $\mathbb{R}^{n}$, that is, with the set
\be\label{eqn: the simplex}
\overline{\Delta}_{n}:=\left\{ \mathbf{p}\in\mathbb{R}^{n}\;\colon \; p^{j}\geq 0, \; \sum_{j=1}^{n}\,p^{j}=1\right\}.
\ee
This closed convex set is the closure of the convex set $\Delta_{n}$ of strictly positive probability vectors, that is, elements in $\overline{\Delta}_{n}$ for which $p^{j}>0$.
The set  $\Delta_{n}$ is a smooth manifold (actually, it is an open submanifold of the hyperplane in $\mathbb{R}^{n}$ determined by the condition $ \sum_{j=1}^{n} x^j=1$), and it is referred to as the open interior of the n-simplex.
Loosely speaking, the n-simplex $\overline{\Delta}_{n}$ may be thought of as the classical counterpart of the space $\overline{\stsp(\hh)}$ of quantum states on an n-dimensional Hilbert space $\hh$, while the open interior $\Delta_{n}$ may be thought of as the classical counterpart of the manifold $\stsp(\hh)$ of invertible quantum states.

The manifold $\Delta_{n}$ can be endowed with a particular Riemannian metric tensor $\GG_{FR}^{n}$, known as the Fisher-Rao metric tensor, whose explicit expression in the over-complete, Cartesian coordinate chart\footnote{See section \ref{sec: unfolding of quantum states} for a thorough discussion on this.} $\{p^{j}\}_{j=1,...,n}$ inherited by $\mathbb{R}^{n}$ reads
\be\label{eqn: Fisher-Rao metric tensor}
\GG_{FR}^{n}=\sum_{j=1}^{n}\frac{1}{p^{j}}\,\mathrm{d}p^{j}\otimes\mathrm{d}p^{j}.
\ee
The latter was introduced by Rao \cite{Rao-1945} following Fisher \cite{Fisher-1922} and  Mahalanobis  \cite{Mahalanobis-1936}.
It is really impossible to overestimate the significance of $\GG_{FR}^{n}$ in Classical Information Geometry, Statistics, and related fields. Much of this importance has to be ascribed to a peculiar feature of $\GG_{FR}^{n}$ that has been elucidated by Cencov \cite{Cencov-1982} in the finite-dimensional case, and by others in different infinite-dimensional settings \cite{A-J-L-S-2017,B-B-M-2016}.
Remaining in the finite-dimensional case, it turns out that $\GG_{FR}^{n}$ is the unique Riemannian metric tensor on $\Delta_{n}$ (up to an overall multiplicative factor) whose associated distance function is invariant under \grit{congruent embeddings}.
Specifically, given $\Delta_{n}\subset\mathbb{R}^{n}$ and $\Delta_{m}\subset\mathbb{R}^{m}$,  we first define a Markov kernel $M$ between $\Delta_{n}$ and $\Delta_{m}$ to be a stochastic linear map $M\colon\mathbb{R}^{n}\ra\mathbb{R}^{m}$.
Then, we define $M$ to be a \grit{congruent embedding} if it is also such that the image $M(\Delta_{n})\subseteq\Delta_{m}$ is diffeomorphic to $\Delta_{n}$.
In particular, this requires that $n\geq m$.
A typical example of a congruent embedding is given by a permutation in $\mathbb{R}^{n}$.
Then, Cencov's uniqueness result states that $\GG_{FR}^{n}$ is the only Riemannian metric tensor on $\Delta_{n}$ (up to an overall multiplicative factor) satisfying
\be
M^{*}\GG_{FR}^{m}=\GG_{FR}^{n}
\ee
for every congruent embedding $M$.
It is clear that this result is not properly a result on  $\GG_{FR}^{n}$ for a fixed $n\in\mathbb{N}$, but, rather, it refers to a family of Riemannian metric tensors defined on the family of finite-dimensional simplexes.
Indeed, Cencov's original formulation makes use of the power of Category Theory to correctly handle this instance.

\vsp

Since we are interested in Quantum Information Geometry, it is reasonable to ask if there is an analogue of Cencov's theorem in this context.
Once properly reformulated, it turns out that Cencov's remarkable uniqueness result does not apply to the quantum case, and we get an infinite number of quantum counterparts of the Fisher-Rao metric tensors, first discovered by Cencov and Morozowa \cite{C-M-1991}, and then completely classified by Petz \cite{Petz-1996}.
We will now briefly review these results.

In the context of quantum information theory,  of particular importance are the linear maps $\Phi\colon\mathcal{B}(\hh)\ra\mathcal{B}(\mathcal{K})$ that are completely-positive and trace preserving (CPTP), and such that $\Phi(\stsp(\hh ))\subseteq\stsp(\mathcal{K})$.
We recall that a linear map $\Phi\colon\mathcal{B}(\hh)\ra\mathcal{B}(\mathcal{K})$ is a CPTP map if it maps positive operators to positive operators, and if it is such that the map $\Phi\otimes\mathrm{Id}_{n}\colon\bh\otimes\mathcal{M}_{n}(\mathbb{C})\ra\mathcal{B}(\mathcal{K})\otimes\mathcal{M}_{n}(\mathbb{C})$ given by  
\be
\Phi\otimes \mathrm{Id}_{n}(\mathbf{a}\otimes M)\,:=\,\Phi(\mathbf{a})\otimes M
\ee
sends positive elements to positive elements for all $n>0$ \cite{Choi-1975, Stinespring-1955}.
Inspired by \cite{Cencov-1982,C-M-1991}, a CPTP map satisfying the additional requirement $\Phi(\stsp(\hh ))\subseteq\stsp(\mathcal{K})$ is called a \grit{quantum Markov map}.
These  maps are at the heart of Petz's classification of the family of monotone metric tensors we will review below, and  may be considered to be the quantum counterpart (in the finite-dimensional, Hilbert space framework) of the congruent embeddings appearing in Cencov's theorem on the uniqueness of the Fisher-Rao metric tensor in the classical case.
The family of quantum Markov maps is a distinguished family of transformations of quantum states that turns out to be physically relevant, either because of statistical considerations \cite{Holevo-2001,Holevo-2011}, or because of dynamical considerations \cite{G-K-S-1976,Lindblad-1976}.

The problem raised by Cencov and Morozowa \cite{C-M-1991}, and then completely solved by Petz \cite{Petz-1996}, was to find all the Riemannian metric tensors $\GG^{\hh}$ on $\stsp(\hh)$ such that
\be\label{eqn: monotonicity condition for quantum metrics}
\Phi^{*}\GG^{\mathcal{K}} \leq \GG^{\hh}
\ee
for all quantum Markov maps.
Note that this quantum problem does not require equality  of the metric tensor as in the classical case, but only the fact that $\GG^{\hh} - \Phi^{*}\GG^{\mathcal{K}}$ is positive semidefinite.
This instance may be ascribed to the fact that the image of $\stsp(\hh)$ through the quantum Markov map $\Phi$ is not required to be diffeomorphic to $\stsp(\hh)$ as it happened for simplexes and congruent embeddings.
However, it is clear that, when $\Phi(\stsp(\hh))$ is actually diffeomorphic with $\stsp(\hh)$, as it may happen when $\Phi$ is induced by unitary operators through $\Phi(\rho)=\mathbf{U}\rho\mathbf{U}^{\dagger}$, then equation \eqref{eqn: monotonicity condition for quantum metrics} must hold with an equality.
Consequently, all the Riemannian metric tensors satisfying equation \eqref{eqn: monotonicity condition for quantum metrics} will be unitary invariant, and all of them will be invariant with respect to the quantum analogue of a congruent embedding.

\begin{remark}

Like in the classical case, the problem is not properly related to a single Riemannian metric tensor $\GG^{\hh}$ defined on a manifold of states $\stsp(\hh)$  for a fixed system with Hilbert space $\hh$, but, rather, it  refers to a family of Riemannian metric tensors defined on the family of manifolds of faithful quantum states of all possible finite-dimensional quantum system.
Unlike the classical case, there is not yet a proper categorical formulation of the results presented in \cite{C-M-1991,Petz-1996}.
It would be interesting to develop a unification of these two categories into a single category that would allow for a better understanding of those properties that are exquisitely quantum.
Specifically, one may define the category $\mathsf{FdFvNS}$ whose objects are spaces of faithful states on finite-dimensional von-Neumann algebras, and whose morphisms are maps that are dual to completely-positive, unital maps between von-Neumann algebras.
Indeed, on the one hand,  probability vectors may be identified with states on finite-dimensional, Abelian von Neumann algebras, and, through this identification, every congruent embedding is dual to a completely-positive, unital map  between suitable Abelian von-Neumann algebras.
On the other hand, quantum states on $\hh$ are precisely states (in the functional analytic sense) on the finite-dimensional von Neumann algebra $\bh$, and quantum Markov maps are, essentially by definition, dual to  completely-positive, unital maps.
As it is, the category $\mathsf{FdFvNS}$ would take into account only basic properties of classical and quantum systems, but would be nevertheless useful in obtaining a unification of the classification of metric tensors performed by Cencov in the classical case and by Petz in the quantum case.
Of course, to deal with more advanced aspects like composition of systems, marginalization, and similar instances, one would be forced to consider more complex categories along the lines of the classical and quantum Markov categories defined in \cite{Fritz-2020,Parzygnat-2020}.

\end{remark}

A Riemannian metric tensor $\GG^{\hh}$ on $\stsp(\hh)$ consists of an assignment, for every $\rho\in\stsp(\hh)$, of a positive definite, symmetric, bilinear form $\GG_{\rho}^{\hh}$ on $T_{\rho}\stsp_{n}$ which is  smooth in the sense that the map
\be\label{eqn: metric tensor 1}
\rho\,\mapsto\,\GG_{\rho}^{\hh}(V_{\rho},W_{\rho})
\ee
is a smooth map for all vector fields $V,W$ on $\stsp(\hh)$.
If we exploit the trivialization $T\stsp(\hh)\cong\stsp(\hh)\times \mathcal{B}_{sa}^{0}(\hh)$ induced by the linear identification, $V_{\rho}$ and $W_{\rho}$ can be identified with elements in $\mathcal{B}_{sa}^{0}(\hh)$, and the smoothness assumption on $\GG^{\hh}$ translates into the smothness of the function
\be\label{eqn: metric tensor 2}
\rho\,\mapsto\,\GG_{\rho}^{\hh}(\mathbf{a},\mathbf{b})
\ee
for all $\mathbf{a},\mathbf{b}\in\mathcal{B}_{sa}^{0}(\hh)$.
This means that $\GG_{\rho}^{\hh}$ can be thought of as a symmetric bilinear form on $\mathcal{B}_{sa}^{0}(\hh)$, and every such bilinear form is obtained by restriction of a Hermitian form on $\bh$.
Moreover, every Hermitian form $H^{\hh}$ on $\bh$ can be written in terms of the standard Hilbert-Schmidt product $\langle,\rangle_{\hh}^{HS}$ on $\bh$ defined by
\be
\langle\mathbf{a},\mathbf{b}\rangle_{\hh}^{HS}:=\Tr_{\hh}\left(\mathbf{a}^{\dagger}\,\mathbf{b}\right)
\ee
for all $\mathbf{a},\mathbf{b}\in\bh$.
Specifically, 
\be
H^{\hh}(\mathbf{a},\mathbf{b})=\langle\mathbf{a},T^{\hh}(\mathbf{b})\rangle_{\hh}^{HS}=\Tr_{\hh}\left(\mathbf{a}^{\dagger}\,T^{\hh}(\mathbf{b})\right),
\ee
where $T^{\hh}$ is a positive definite  (with respect to the Hilbert-Schmidt product)  operator on $\bh$.
Therefore, we have that $\GG_{\rho}^{\hh}$ can always be written as the restriction to $\mathcal{B}_{sa}^{0}(\hh)$ of an Hermitean product on $\bh$ of the form
\be\label{eqn: Hermitean form}
H_{\rho}^{\hh}(\mathbf{a},\mathbf{b})=\langle\mathbf{a},T_{\rho}^{\hh}(\mathbf{b})\rangle_{\hh}^{HS}=\Tr_{\hh}\left(\mathbf{a}^{\dagger}\,T_{\rho}^{\hh}(\mathbf{b})\right),
\ee
where the positive definite operator $T_{\rho}^{\hh}$ depends on $\rho$.
Of course, we have to ensure  that the map $\rho\mapsto T_{\rho}^{\hh}$ is such that   $\rho\mapsto H_{\rho}^{\hh}(\mathbf{a},\mathbf{b})$ is smooth for all $\mathbf{a},\mathbf{b}\in\mathcal{B}_{sa}^{0}(\hh)$. 

Petz's classification states that a Riemannian metric tensors on $\stsp(\hh)$ satisfies equation \eqref{eqn: monotonicity condition for quantum metrics} if and only if its associated superoperator is of the form   $T_{\rho}^{f}= (K^{f}_{\rho})^{-1}$, with $f\colon(0,\infty)\ra(0,\infty)$  an operator monotone function such that
\be\label{eqn: Petz function}
f(x)=xf(x^{-1}),\quad f(1)=1,
\ee
and 
\be\label{eqn: Petz metric 2}
K^{f}_{\rho}= f(L_{\rho}\,R_{\rho^{-1}})\,R_{ \rho },
\ee
where $L_{\rho}(\mathbf{a})=\rho\mathbf{a}$ and $R_{\rho}(\mathbf{a})=\mathbf{a}\rho$.
This means that every Riemannian metric tensors on $\stsp(\hh)$ satisfying equation \eqref{eqn: monotonicity condition for quantum metrics} can be written as $\GG^{\hh}_{f}$ with 
\be\label{eqn: Petz metric 1}
(\GG^{\hh}_{f})_{\rho}(\mathbf{v}_{\rho},\mathbf{w}_{\rho})=\langle\mathbf{v}_{\rho}, T^{f}_{\rho}(\mathbf{w}_{\rho})\rangle^{HS}_{\hh}.
\ee
If we introduce the operators $\mathbf{e}^{\rho}_{lm}$ diagonalizing $\rho$, that is, such that
\be
\rho=\sum_{j=1}^{n}\,p_{j}^{\rho}\,\mathbf{e}_{jj}^{\rho},
\ee
we can also introduce the super-operators  $E_{kj}^{\rho}$ acting on $\bh$ according to
\be\label{eqn: rho super eigenprojectors}
E_{kj}^{\rho}\left( \mathbf{e}^{\rho}_{lm}\right)\,=\,\delta_{jl}\,\delta_{km}\mathbf{e}_{jk}^{\rho},
\ee
and it is then a matter of straightforward computation to check that
\be
K^{f}_{\rho}=\sum_{j,k=1}^{n}\,p_{k}^{\rho}\,f\left(\frac{p_{j}^{\rho}}{p_{k}^{\rho}}\right)\,E_{kj}^{\rho}
\ee
where $p_{1}^{\rho},...,p_{n}^{\rho}$ are the eigenvalues of $\rho$.
Therefore, we also have
\be\label{eqn: eigendecomposition of Petz superoperator}
T^{f}_{\rho}=\sum_{j,k=1}^{n}\,\left(p_{k}^{\rho}\,f\left(\frac{p_{j}^{\rho}}{p_{k}^{\rho}}\right)\right)^{-1}\,E_{kj}^{\rho}.
\ee
Now, whenever $[\mathbf{w}_{\rho},\rho]=0$, using equation \eqref{eqn: eigendecomposition of Petz superoperator}  it is easily seen that 
\be
(\GG^{\hh}_{f})_{\rho}(\mathbf{v}_{\rho},\mathbf{w}_{\rho})=\sum_{j=1}^{n}\frac{v_{\rho}^{jj} w_{\rho}^{jj}}{p_{j}^{\rho}},
\ee
where $v_{\rho}^{jj}$ and $w_{\rho}^{jj}$ are the diagonal elements of $\mathbf{v}_{\rho}$ and $\mathbf{w}_{\rho}$ with respect to the basis of eigenvectors of $\rho$.
It is relevant to note then that in this case we have
\be\label{eqn: Petz and F-R}
(\GG^{\hh}_{f})_{\rho}(\mathbf{v}_{\rho},\mathbf{w}_{\rho})\,=\,(\GG_{FR})_{\vec{p}}\,(\vec{a},\vec{b}),
\ee
where $\GG_{FR}$ is the classical Fisher-Rao metric tensor on probability distributions, and we have set $\vec{p}=(p_{1}^{\rho},...,p_{n}^{\rho})$, $\vec{a}=(v_{\rho}^{11},...,v_{\rho}^{nn})$, and $\vec{b}=(w_{\rho}^{11},...,w_{\rho}^{nn})$.
Equation \eqref{eqn: Petz and F-R} holds for every choice of the operator monotone function $f$.

\vsp

In the rest of this section, we will briefly discuss three distinguished members of the family of quantum monotone Riemannian metric tensors classified by Petz.
Specifically, the so-called Bures-Helstrom metric tensor, the Wigner-Yanase metric tensor, and the Bogoliubov-Kubo-Mori metric tensor.
These members stand out for different reasons.
In particular, we will see that the Bures-Helstrom metric is intimately connected with the Jordan identification and the action of $\mathcal{GL}(\hh)$ introduced in subsection \ref{subsec: nonlinear action of GL(H)}, that the Wigner-Yanase metric is intimately connected with the the square-root identification and the action of $\mathcal{GL}(\hh)$ introduced in subsection \ref{eqn: square-root identification}, whereas the Bogoliubov-Kubo-Mori metric tensor is intimately connected with the exponential identification and the action of $T^{*}\SUh$ introduced in subsection \ref{subsec: exponential identification}.

\subsection{The Bures-Helstrom metric tensor}\label{subsec: B-H metric tensor}

An example of   monotone quantum Riemannian metric tensor belonging to the family classified by Petz is the so-called Bures-Helstrom metric tensor $\GG_{BH}^{\hh}$ \cite{B-Z-2006, Dittmann-1993, Dittmann-1995, Helstrom-1967, Helstrom-1968, Helstrom-1969, Helstrom-1976, S-A-G-P-2020, Uhlmann-1992,  Safranek-2017, Safranek-2018}.
This Riemannian metric corresponds to the choice $f(x)=\frac{1 +x}{2}$, from which it follows that equation \eqref{eqn: Petz metric 2} becomes
\be
f(L_{\rho}\,R_{\rho^{-1}})\,R_{\rho}=\frac{1}{2}\left(L_{\rho} + R_{\rho}\right)\equiv A_{\rho},
\ee
and thus, writing $\GG^{\hh}_{f}\equiv\GG_{BH}^{\hh}$, equation \eqref{eqn: Petz metric 1} leads us to
\be\label{eqn: BH metric tensor}
(\GG^{\hh}_{BH})_{\rho}(\mathbf{v}_{\rho},\mathbf{w}_{\rho})\,=\,\Tr_{\hh}\left(\mathbf{v}_{\rho}\,A^{-1}_{\rho}(\mathbf{w}_{\rho})\right)\,,
\ee
where $\mathbf{v}_{\rho},\mathbf{w}_{\rho}\in\mathcal{B}_{sa}^{0}(\hh)\cong T_{\rho}\stsp(\hh)$.
Equation \eqref{eqn: BH metric tensor} implies that, when we employ the linear identification of tangent vectors at $\rho\in\stsp$, the explicit computation of the Bures-Helstrom scalar product requires the computation of the inverse of the operator $A_{\rho}$ at each $\rho$.
This task may be computationally demanding.
However, the linear identification of tangent vectors is not the only possible, and indeed, by exploiting the Jordan identification introduced in equation \eqref{eqn: Jordan identification}, we immediately conclude that 
\be\label{eqn: BH metric tensor 2}
(\GG^{\hh}_{BH})_{\rho}(J_{\rho}^{\mathbf{a}},J_{\rho}^{\mathbf{b}})\,=\,\Tr_{\hh}\left(\rho\,\left\{ \mathbf{a} , \mathbf{b} \right\}\right) - \Tr_{\hh}(\rho\, \mathbf{a} )\,\Tr_{\hh}(\rho\mathbf{b} ).
\ee
This expression for the Bures-Helstrom metric tensor is clearly easier to handle than the one in equation \eqref{eqn: BH metric tensor}, but it  forces  to work with the lesser known  Jordan identification for tangent vectors.

The expression of the Bures-Helstrom metric tensor in terms of the Jordan identification also leads us to appreciate an unexpected link between this metric tensor and the nonlinear action of $\mathcal{GL}(\hh)$ introduced in subsection \ref{subsec: nonlinear action of GL(H)}.
Recalling the definition of the gradient vector fields $\mathbb{Y}_{\mathbf{a}},\mathbb{Y}_{\mathbf{b}}$ in equations  \eqref{eqn: gradient vector fields} and \eqref{eqn: GL-fundamental tangent vectors}, we see that
\be
\left(\GG_{BH}^{\hh}\right)_{\rho}\left(\mathbb{Y}_{\mathbf{a}}(\rho),\mathbb{Y}_{\mathbf{b}}(\rho)\right)=(\GG^{\hh}_{BH})_{\rho}(J_{\rho}^{\mathbf{a}},J_{\rho}^{\mathbf{b}})=\Tr_{\hh}\left(\rho\,\{\mathbf{a},\mathbf{b}\}\right) - \Tr_{\hh}(\rho\mathbf{a})\,\Tr_{\hh}(\rho\mathbf{b}),
\ee
and thus
\be
\GG^{\hh}_{BH}\left(\mathbb{Y}_{\mathbf{a}},\mathbb{Y}_{\mathbf{b}}\right)=\mathbb{Y}_{\mathbf{b}}(l_{\mathbf{a}}),
\ee
where $l_{\mathbf{a}}$ is the linear function $l_{\mathbf{a}}(\rho)=\Tr_{\hh}(\rho\mathbf{a})$.
Since the gradient vector fields generate the tangent space at each $\rho\in\stsp(\hh)$, and since the differential of linear functions on $\stsp(\hh)$ generate the cotangent space at each point $\rho\in\stsp(\hh)$, we conclude\footnote{See also \cite{Ciaglia-2020, C-J-S-2020} for alternative proofs of this instance.} that the gradient vector field  (in the sense of Riemannian geometry) associated with the linear function  $l_{\mathbf{a}}$ by means of the Bures-Helstrom metric tensor $\GG^{\hh}_{BH}$  is precisely the vector field  $\mathbb{Y}_{\mathbf{a}}$.
This justifies the name gradient vector field we chose for  $\mathbb{Y}_{\mathbf{a}}$ when we introduced it.

The Bures-Helstrom metric tensor is particularly relevant for all the information-theoretical tasks related with the quantum formulation of estimation theory \cite{C-J-S-2020-02, L-Y-L-W-2020, Paris-2009, Suzuki-2019, S-Y-H-2020, T-A-D-2020} because it allows to give the lowest quantum version of the classical Cramer-Rao bound for unbiased estimators \cite{Fuchs-1996}.
From a more geometrical point of view,  $\GG_{BH}^{\hh}$ is naturally connected with the concept of purification for quantum states \cite{D-U-1999, Uhlmann-1986, Uhlmann-1992}, and with the Jordan product (anticommutator) among self-adjoint operators \cite{C-J-S-2020}.
Moreover,  if we consider the GNS Hilbert space $\hh_{mm}$ associated with the maximally mixed state on $\hh$, it turns out there is a Riemannian submersion from the unit sphere in $\hh_{mm}$ endowed with the standard round metric and the manifold $\stsp(\hh)$ endowed with the Bures-Helstrom metric tensor \cite{C-J-S-2020}.
Since the geometrical properties of the n-sphere with the round metric are well-known, with the help of the Riemannian submersion discussed above it is possible to compute the geodesics, the Riemann curvature, the Ricci tensor, the scalar curvature, and the sectional curvature for $(\stsp(\hh),\GG_{BH}^{\hh})$, and thus understand  all the geometrical features of this Riemannian manifold.

\subsection{The Wigner-Yanase metric tensor }\label{subsec: W-Y metric tensor}

Another example of monotone quantum Riemannian metric tensor belonging to the family classified by Petz  is the so-called Wigner-Yanase metric tensr $\GG_{WY}^{\hh}$ \cite{G-I-2001,G-I-2003,Hasegawa-1993,Hasegawa-1995,Hasegawa-2003,H-P-1997,Jencova-2003-2}.
This Riemannian metric corresponds to the choice $f(x)=(1 + \sqrt{x})^{2}$, from which it follows that equation \eqref{eqn: Petz metric 2} becomes
\be
f(L_{\rho}\,R_{\rho^{-1}})\,R_{\rho}= \left(L_{\sqrt{\rho}} + R_{\sqrt{\rho}}\right)\equiv 4 A_{\sqrt{\rho}}^{2},
\ee
and thus, by writing $\GG_{f}^{\hh}=\GG_{WY}^{\hh}$, equation \eqref{eqn: Petz metric 1} leads  to
\be\label{eqn: WY metric tensor}
(\GG^{\hh}_{WY})_{\rho}(\mathbf{v}_{\rho},\mathbf{w}_{\rho})\,=\,\frac{1}{4}\,\Tr_{\hh}\left(\mathbf{v}_{\rho}\,A^{-2}_{\sqrt{\rho}}(\mathbf{w}_{\rho})\right)\,,
\ee
where $\mathbf{v}_{\rho},\mathbf{w}_{\rho}\in\mathcal{B}_{sa}^{0}(\hh)\cong T_{\rho}\stsp(\hh)$.

Analogously to what happens for the Bures-Helstrom metric tensor, using the linear identification for tangent vectors forces us to compute first the square-root of $\rho$ and then the inverse of the operator $A^{-2}_{\sqrt{\rho}}$ in order to explicitly compute the Wigner-Yanase scalar product between $\mathbf{v}_{\rho}$ and $\mathbf{w}_{\rho}$.
However, if we exploit the square-root identification introduced in equation \eqref{eqn: sqrt-GL-fundamental tangent vectors 2}, we obtain
\be\label{eqn: WY metric tensor 2}
(\GG^{\hh}_{WY})_{\rho}(S_{\rho}^{\mathbf{a}},S_{\rho}^{\mathbf{b}})\,=\,\Tr_{\hh}\left(\rho\,\{\mathbf{a},\mathbf{b}\}\right) + \Tr_{\hh}\left(\sqrt{\rho}\,\mathbf{a}\sqrt{\rho}\mathbf{b}\right) - 2\Tr_{\hh}(\rho\mathbf{a})\,\Tr_{\hh}(\rho\mathbf{b})\,.
\ee
Again, this expression is easier to handle than the one in equation \eqref{eqn: WY metric tensor}, but we have to deal with the square-root identification for tangent vectors which is less used and thus one may feel less comfortable to work with.

Proceeding in analogy with what was done for $\GG_{BH}^{\hh}$, we recall  the expression of the vector fields  $\mathbb{W}_{\mathbf{a}}$ and $\mathbb{W}_{\mathbf{b}}$ given by equation  \eqref{eqn: square-root gradient vector fields}    and equation \eqref{eqn: sqrt-GL-fundamental tangent vectors}, and we immediately conclude that
\be
\begin{split}
\left(\GG_{WY}\right)_{\rho}\left(\mathbb{W}_{\mathbf{a}}(\rho),\mathbb{W}_{\mathbf{b}}(\rho)\right)&=(\GG^{\hh}_{WY})_{\rho}(S_{\rho}^{\mathbf{a}},S_{\rho}^{\mathbf{b}})=\\
&=\Tr_{\hh}\left(\rho\,\{\mathbf{a},\mathbf{b}\}\right) + \Tr_{\hh}\left(\sqrt{\rho}\,\mathbf{a}\sqrt{\rho}\mathbf{b}\right) - 2\Tr_{\hh}(\rho\mathbf{a})\,\Tr_{\hh}(\rho\mathbf{b}),
\end{split}
\ee
so that
\be
\GG^{\hh}_{WY}\left(\mathbb{W}_{\mathbf{a}},\mathbb{W}_{\mathbf{b}}\right)=\mathbb{W}_{\mathbf{b}}(l_{\mathbf{a}}),
\ee
where $l_{\mathbf{a}}$ is the linear function $l_{\mathbf{a}}(\rho)=\Tr_{\hh}(\rho\mathbf{a})$.
Then, just as it happens for the Bures-Helstrom metric tensor, since the   vector fields of the type $\mathbb{W}_{\mathbf{a}}$ generate the tangent space at each $\rho\in\stsp(\hh)$, and since the differential of linear functions on $\stsp(\hh)$ generate the cotangent space at each point $\rho\in\stsp(\hh)$, we conclude\footnote{See also \cite{Ciaglia-2020, C-J-S-2020} for alternative proofs of this instance.} that the gradient vector field  (in the sense of Riemannian geometry) associated with the linear function  $l_{\mathbf{a}}$ by means of the Wigner-Yanase metric tensor $\GG^{\hh}_{WY}$  is precisely the vector field  $\mathbb{W}_{\mathbf{a}}$.

Another similarity between  $\GG_{WY}^{\hh}$ and $\GG_{BH}^{\hh}$ concerns their relation with the geometry of a suitably big sphere endowed with the round metric.
However, in the case of the Wigner-Yanase metric tensor, this relation can be seen as the quantum counterpart of a similar property possessed by the Fisher-Rao metric tensor.
In particular, it is easy to see that the Fisher-Rao metric tensor on the open interior of the unit simplex $\Delta_{n}$ can be seen as the pull-back of the standard round metric on the n-sphere with radius $r=\frac{1}{4}$ with respect to the square-root map $\mathbf{p}\mapsto\sqrt{\mathbf{p}}$ given by $p^{j}\mapsto\sqrt{p^{j}}$.
Then, it turns out that $\GG_{WY}$ is the pull-back of the standard round metric on the unit sphere in $\bh$ with respect to the square-root map $\rho\mapsto\sqrt{\rho}$ \cite{G-I-2001}.
Since the geometrical properties of the n-sphere with the round metric are well-known, with the square-root map at our disposal, we can compute the geodesics, the Riemann curvature, the Ricci tensor, the scalar curvature, and the sectional curvature for $(\stsp(\hh),\GG_{WY}^{\hh})$, and thus understand  all the geometrical features of this Riemannian manifold.

\subsection{The Bogoliubov-Kubo-Mori metric tensor}\label{subsec: B-K-M metric tensor}

The last example of monotone Riemannian metric tensor we consider is the so-called    Bogoliubov-Kubo-Mori metric tensor $\GG_{BKM}^{\hh}$ \cite{F-M-A-2019,N-V-W-1975,Naudts-2021,Petz-1994,P-T-1993}.
It corresponds to the function 
\be
f(x)=\frac{x-1}{\ln(x)}=\int_{0}^{1}\,x^{t}\mathrm{d}t
\ee
so that  equation \eqref{eqn: Petz metric 2} becomes
\be
f(L_{\rho}\,R_{\rho^{-1}})\,R_{\rho}=\int_{0}^{1}\,(L_{\rho})^{t}\,(R_{\rho})^{1-t}\mathrm{d}t  \equiv C_{\rho},
\ee
and thus, writing $\GG_{f}^{\hh}=\GG_{BKM}^{\hh}$, equation \eqref{eqn: Petz metric 1} leads us to
\be\label{eqn: BKM metric tensor}
(\GG^{\hh}_{BKM})_{\rho}(\mathbf{v}_{\rho},\mathbf{w}_{\rho})\,=\,\Tr_{\hh}\left(\mathbf{v}_{\rho}\, C^{-1}_{ \rho }(\mathbf{w}_{\rho}) \right)\,,
\ee
where $C_{\rho}(\mathbf{a})=[\rho,\mathbf{a}]=(\rho\mathbf{a} - \mathbf{a}\rho)$, and where $\mathbf{v}_{\rho},\mathbf{w}_{\rho}\in\mathcal{B}_{sa}^{0}(\hh)\cong T_{\rho}\stsp(\hh)$.

Clearly, equation \eqref{eqn: BKM metric tensor} implies that, if we want to compute the Bogoliubov-Kubo-Mori inner product in the linear identification, we need to invert the operator $C_{\rho}$ which may be computationally difficult.
However, once again, it is possible to bypass this step if we select the appropriate identification for tangent vectors.
In this case, it turns out that the  correct one  is the exponential identification  introduced in equation \eqref{eqn:  exponential identification 3} in terms of which it is immediate to check that  equation \eqref{eqn: BKM metric tensor} becomes
\be\label{eqn: BKM metric tensor 2}
(\GG^{\hh}_{BKM})_{\rho}(E^{\mathbf{a}}_{\rho},E^{\mathbf{b}}_{\rho})=\int_{0}^{1}\mathrm{d}\lambda\,\Tr_{\hh}\left( \rho^{\lambda}\,\mathbf{a} \,\rho^{1-\lambda} \mathbf{b} \right) - \Tr_{\hh}\left(\rho\mathbf{b} \right)\Tr(\rho\,\mathbf{a}).
\ee
Then, it follows from equation   \eqref{eqn: BKM metric tensor 2} that 
\be
\begin{split}
\left(\GG_{BKM}\right)_{\rho}\left(\mathbb{Z}_{\mathbf{a}}(\rho),\mathbb{Z}_{\mathbf{b}}(\rho)\right)&=(\GG^{\hh}_{BKM})_{\rho}(E^{\mathbf{a}}_{\rho},E^{\mathbf{b}}_{\rho})= \\
&= \int_{0}^{1}\,\mathrm{d}\lambda\,\Tr_{\hh}\left(\rho^{\lambda}\,\mathbf{a} \,\rho^{1-\lambda}\mathbf{b}\right) - \Tr_{\hh}(\rho\,\mathbf{a})\,\Tr_{\hh}(\rho\mathbf{b}),
\end{split}
\ee
which means
\be
\GG_{BKM} \left(\mathbb{Z}_{\mathbf{a}} ,\mathbb{Z}_{\mathbf{b}}\right)=\mathbb{Z}_{\mathbf{b}}(l_{\mathbf{a}})
\ee
where $l_{\mathbf{a}}$ is the linear function $l_{\mathbf{a}}(\rho)=\Tr_{\hh}(\rho\mathbf{a})$.
Therefore, as before, we conclude that $\mathbb{Z}_{\mathbf{a}}$ is the gradient vector field (in the sense of Riemannian geometry) associated with $l_{\mathbf{a}}$ by means of the  Bogoliubov-Kubo-Mori  metric tensor.
Moreover, it is worth noting that the integral curves of $\mathbb{Z}_{\mathbf{a}}$ are geodesics of the so-called exponential connection (compare equation \eqref{eqn:  exponential identification 2} with equation  34 in  \cite{F-M-A-2019}).
This instance is particularly intriguing because it “promotes” the integral curves of gradient vector fields to geodesics of a connection which has a role in information geometry, thus giving a first hint to understand the role of  gradient vector fields in information geometry.
Moreover, we are naturally lead to ask if something similar happens also for the gradient vector fields of the Bures-Helstrom metric tensor and of the Wigner-Yanase metric tensor, but we will leave this investigation to future work.

\section{From relative entropies to monotone metrics}\label{sec: from rel entropies to mono metrics}

It is a remarkable fact in classical information geometry that the Fisher-Rao metric tensor can be obtained by a suitable second-order expansion of the Kullback-Leibler relative entropy \cite{K-L-1951}.
In a certain sense, the Kullback-Leibler relative entropy is an asymmetric measure of how much two given probability distributions differ from each other with respect to certain information-theoretic tasks like hypothesis testing \cite{K-L-1951}.
To roughly explain how the Fisher-Rao metric tensor can be extracted from the Kullback-Leibler relative entropy, let us consider the classical case of probability distributions on a discrete, n-element set $\mathcal{X}_{n}$.
The set of all such probability distributions is the n-simplex $\overline{\Delta_{n}}$ introduced in equation \ref{eqn: the simplex}.
The Kullback-Leibler relative entropy $D_{KL}$ between $\mathbf{p}$ and $\mathbf{q}$ reads
\be
D_{KL} (\mathbf{p},\mathbf{q})=\sum_{j=1}^{n}\,p^{j}\ln(p^{j}) - p^{j}\ln(q^{j})\,.
\ee
This quantity is well-defined  when $q^{j}=0$ implies $p^{j}=0$  and we exploit
\be
\lim_{x\ra 0^{+}}\,x\ln(x)=0.
\ee
In particular, if we focus on probability distributions lying in the interior $\Delta_{n}$ of the simplex, $D_{KL}$ is well-defined and smooth on $\Delta_{n}\times\Delta_{n}$ (recall that $\Delta_{n}$ is a smooth manifold because it is the intersection of the open positive hyperoctant with the hyperplane defined by $\sum_{j=1}^{n}p^{j}=1$).
Now,  if we define $g_{jk}$ to be
\be\label{eqn: KL to FR}
g_{jk}\,:=\,-\left(\frac{\partial^{2}}{\partial p^{j}\partial q^{k}}\,D_{KL}\right)_{\mathbf{p}=\mathbf{q}}\,=\,\delta_{jk}\frac{1}{p^{j}},
\ee
it immediately follows that we can express the Fisher-Rao metric tensor $\GG_{FR}^{n}$ on $\Delta_{n}$ as
\be\label{eqn: KL to FR 2}
\GG_{FR}^{n} \,=\, g_{jk}\,\mathrm{d}p^{j}\otimes\mathrm{d}p^{k}.
\ee
Equations \eqref{eqn: KL to FR} and \eqref{eqn: KL to FR 2}  explain in which sense $\GG_{FR}^n $ may be considered as  a sort of second-order expansion of $D_{KL} $.
This seemingly ad-hoc procedure may be actually reformulated to produce a coordinate-free algorithm that works for a particular class of smooth functions on $M\times M$ where $M$ is a smooth manifold, and, in section \ref{sec: from 2-functions to covariant tensors}, we briefly recall this approach as developed in  \cite{C-DC-L-M-M-V-V-2018, M-M-V-V-2017} in order to keep the present exposition as much self-contained as possible.
Moreover, it is possible to lift this coordinate-free procedure to an even more sophisticated layer by using Lie groupoids and their associated Lie algebroids \cite{G-G-K-M-2019,G-G-K-M-2020}.
 

The Kullback-Leibler relative entropy is not the only possible notion of relative entropy available in classical information theory and statistics.
For instance, there is a family of relative entropies known as f-divergences \cite{A-S-1966,Csizar-1963,Morimoto-1963} which is indexed by a convex function  $f\colon\mathbb{R}^{+}_{0}\ra\mathbb{R}^{+}_{0}$ (satisfying $f(1)=0$) and whose members, in the discrete case, are given by
\be
D_{f} (\mathbf{p},\mathbf{q}):=\sum_{j=1}^{n}\,f\left(\frac{p^{j}}{q^{j}}\right)\,q^{j}.
\ee
Note that $D_{KL} =D_{f} $ with $f(x)=x\ln(x)$.
This family is particularly relevant because it is the only family whose members are such that
\be\label{eqn: monotonicity of f-divergences}
D_{f}(\mathbf{p},\mathbf{q})\geq D_{f} (\Phi(\mathbf{p}),\Phi(\mathbf{q}))
\ee
for every  classical Markov map  $\Phi$ from $\Delta_{n}$ to $\Delta_{m}$.
Quite interestingly, if we define $\GG_{jk}^{f}$ to be
\be\label{eqn: f to FR}
\GG_{jk}^{f}\,:=\,-\left(\frac{\partial^{2}}{\partial p^{j}\partial q^{k}}\,D_{f}\right)_{\mathbf{p}=\mathbf{q}}\,=\,\delta_{jk}\frac{f''(1)}{p^{j}},
\ee
it immediately follows that
\be\label{eqn: f to FR 2}
f''(1)\,\GG_{FR}^n \,=\,\GG_{jk}^{f}\,\mathrm{d}p^{j}\otimes\mathrm{d}p^{k},
\ee
which means that the second-order expansion (in the sense explained above) of an f-divergence is a constant multiple of the Fisher-Rao metric tensor.
This  result is not completely surprising if we recall equation \eqref{eqn: monotonicity of f-divergences} and the fact that the family of Fisher-Rao metric tensors is the only family of classical monotone metric tensors  (up to an overall multiplicative constant).

Given this classical picture, it is natural to ask what is its quantum counterpart.
It turns out that also in this case there are different notions of quantum relative entropies (quantum divergences) satisfying a monotonicity property with respect to quantum Markov maps.
However, recalling the non-uniqueness of monotone metric tensors arising in Petz's classification, it turns out that different quantum divergences lead, in principle, to different monotone metric tensors.
For instance, if we consider the von Neumann-Umegaki relative entropy \cite{Umegaki-1962-4}
\be
S_{vNU}(\rho,\sigma)=\Tr_{\hh}\left(\rho\ln(\rho)- \rho\ln(\sigma)\right),
\ee
which is evidently a formal analogue of the Kullback-Leibler relative entropy, it is possible to prove \cite{Ciaglia-2020, F-M-A-2019, Petz-1994} that the monotone metric tensor we can extract from it is the Bogoliubov-Kubo-Mori metric tensor.
Moreover, if we consider the so-called quantum fidelity for mixed states \cite{Cantoni-1975, F-J-R-2016, Josza-1994, Uhlmann-1976, Uhlmann-2011} (also known as Bures metric \cite{Bures-1969}) given by
\be
S_{B}(\rho,\sigma)\,=\left[\Tr_{\hh}\left(\sqrt{\sqrt{\rho}\,\sigma\sqrt{\rho}}\right)\right]^{2},
\ee
it is possible to prove \cite{C-DC-L-M-M-V-V-2018, Uhlmann-1992} that the monotone metric tensor we extract from $S_{B}$ is the Bures-Helstrom metric tensor reviewed in subsection \ref{subsec: B-H metric tensor}.

An  important family of quantum relative entropies is the family of $\alpha$\textit{-R\'enyi relative entropies} ($\alpha$-RREs) \cite{D-L-2014, Tomamichel-2016}
\be
\widetilde{D}_{\alpha}(\rho,\varrho)=\frac{1}{\alpha-1}\log\text{Tr}\,\bigl(\rho^q\varrho^{1-q}\bigr),
\ee
where $\alpha\in(0,1)\cup(1,\infty)$, whose members allow to describe the cut-off rates in quantum binary state discrimination \cite{M-H-2011}. 

A sort of non-commutative version of the $\alpha$-RREs is given by the family of  $\alpha$-\textit{quantum R\'enyi divergences} ($\alpha$-QRDs) \cite{ML-D-S-F-T-2013} - also known as  \textit{sandwiched Rényi $\alpha$-divergences} - given by
\be \label{Sqrd}
D_{\alpha}(\rho,\varrho)=\frac{1}{\alpha-1}\log\text{Tr}\,\bigl(\varrho^{\frac{1-\alpha}{2\alpha}}\rho\varrho^{\frac{1-\alpha}{2\alpha}}\bigr)^{\alpha},
\ee
where again $\alpha\in(0,1)\cup(1,\infty)$.
Indeed, it is clear that $D_{\alpha}(\rho,\varrho)=\widetilde{D}_{\alpha}(\rho,\varrho)$ whenever $[\rho,\,\varrho]=0$.
However,  it turns out that the $\alpha$-QRDs do not satisfy the monotonicity property
\be 
D_{\alpha}(\Phi(\rho)|\Phi(\varrho))\leq D_{q}(\rho|\varrho) 
\ee
with respect to every CPTP map $\Phi$   for $\alpha\in(0,1/2)$ \cite{T-F-2017}.
The family of metric tensors associated with the family of $\alpha$-QRDs  was thoroughly investigated in \cite{M-M-V-V-2017}.

Both the  $\alpha$-RREs and the $\alpha$-QRDs can be looked at as one-parameter subfamilies of a two-parameter family of quantum relative entropies known as $\alpha$-$z$-\textit{R\'enyi Relative Entropies} ($\alpha$-$z$-RREs)  \cite{A-D-2015,J-O-P-P-2012} and given by
\be 
D_{\alpha,z}(\rho,\varrho)=\frac{1}{q-1}\log\text{Tr}\,\bigl(\rho^{\frac{\alpha}{2z}}\varrho^{\frac{1-\alpha}{z}}\rho^{\frac{\alpha}{2z}}\bigr)^z\,.
\ee
According to \cite{A-D-2015}, it holds
\be 
\begin{split}
\widetilde{D}_{\alpha}(\rho,\varrho)&=\lim_{z\to1} D_{\alpha,z}(\rho,\varrho)\equiv =\frac{1}{q-1}\log\text{Tr}\,\bigl(\rho^q\varrho^{1-q}\bigr) \\
D_{\alpha}(\rho,\varrho)&=\lim_{z\to \alpha} D_{\alpha,z}(\rho,\varrho),
\end{split}
\ee
and we also obtain the  von Neumann-Umegaki relative entropy as
\be 
S_{vNU}(\rho,\sigma)=\lim_{z=\alpha\to 1} D_{\alpha,z}(\rho,\varrho).
\ee
The monotonicity property under CPTP maps of the  the  $\alpha$-$z$-RREs  was studied in \cite{C-F-L-2016} and completely characterized in \cite{Zhang-2020}, while the family of metric tensors generated by this family of quantum relative entropies was studied in \cite{C-DC-L-M-M-V-V-2018}. 

According to the results recalled in section \ref{sec: from 2-functions to covariant tensors},  whenever a quantum relative entropy satisfies the monotonicity property under CPTP maps, the associated Riemannian metric tensor falls into the class of monotone metric tensors classified by Petz.
Consequently, both the $\alpha$-QRDs and the $\alpha$-$z$-RREs lead to monotone metric tensors, but it is not clear if all of these metric tensors can be recovered from these families of quantum relative entropies.

On the other hand, there is a family of  quantum relative entropies  defined by Petz on arbitrary von Neumann algebras in \cite{Petz-1985}, and then thoroughly analysed in  the finite-dimensional setting in \cite{Petz-1986}, that essentially encodes all the monotone metric tensors on $\stsp(\hh)$.
To define this family of quantum relative entropies in the case we are interested in, we first need to select a  continuous function $g$ on $(0,\infty)$.
Then, to any such function, we associate a \grit{relative g-entropy} $S_{g}^{\hh}$ according to 
\be\label{def: entropy Lev Ruskai}
S_{g}^{\hh}(\rho,\sigma)=\langle\sqrt{\rho},g(L_{\sigma}R_{\rho^{-1}})(\sqrt{\rho})\rangle^{HS}_{\hh}=\Tr_{\hh}\left(\sqrt{\rho}\,g(L_{\sigma}R_{\rho^{-1}})(\sqrt{\rho})\right).
\ee
When $-g$ is operator monotone, so that $g$ is operator convex, it turns out that 
\be
S_{g}^{\hh}(\rho,\sigma)\geq S_{g}^{\mathcal{K}}(\Phi(\rho),\Phi(\sigma))
\ee
for all CPTP maps $\Phi\colon\bh\ra\mathcal{B}(\mathcal{K})$ \cite[thm. 4]{Petz-1986}.

The metric tensors on $\stsp(\hh)$ associated with this family of quantum relative entropies have been studied in \cite{L-R-1999}, where it was shown that, under suitable additional conditions on $g$, they coincide with the monotone metric tensors classified by Petz \cite{Petz-1996} and briefly discussed in section \ref{sec: Cencov and Petz}.
In subsection \ref{subsec: unfolding rel entropies and mono metrics} we will give an alternative proof of the results  in \cite{L-R-1999} which is based on the unfolding procedure presented in \cite{C-DC-L-M-M-V-V-2018, M-M-V-V-2017} and thoroughly discussed in the whole section \ref{sec: unfolding of quantum states}.

\section{From two-point functions to covariant tensor fields}\label{sec: from 2-functions to covariant tensors}

In this section, we will describe the procedure used in \ref{subsec: unfolding rel entropies and mono metrics} to extract a symmetric, covariant tensor field of type $(0,2)$ on a smooth manifold $\mathcal{M}$ from a function $f \in \mathcal{F} (\mathcal{M}\times\mathcal{M})$, which we refer to as a \emph{two-point function}. 
The aim is to give an intrinsic version of the expression given in equation \eqref{eqn: KL to FR}.
Roughly speaking, we will achieve this result by considering Lie derivatives of a two point function with respect to suitable lifts on $\mathcal{M}\times\mathcal{M}$ of vector fields  on $\mathcal{M}$. 
Some of the results will be presented without proof for which  we refer the reader to \cite{C-DC-L-M-M-V-V-2018}.

Let us stress that the  procedure we are about to present is by no means unique, and different approaches to the same construction have been proposed. 
For instance, in \cite{M-M-V-V-2017} an analogous construction is performed in terms of bi-forms on $\mathcal{M} \times \mathcal{M}$, while in \cite{G-G-K-M-2019,G-G-K-M-2020} the fact that $\mathcal{M} \times \mathcal{M}$ is a \emph{pair groupoid} of which $T\mathcal{M}$ is the associated \emph{Lie algebroid} is exploited.

\vsp

Given a smooth manifold $\mathcal{M}$, there exist two canonical projections from $\mathcal{M}\times\mathcal{M}$ to $\mathcal{M}$ given by
\begin{equation}\label{def: r and l projections M}
\begin{split}
\pi_l &: \mathcal{M}\times\mathcal{M} \ni (m_1,m_2) \mapsto m_1 \in \mathcal{M}, \\
\pi_r &: \mathcal{M}\times\mathcal{M} \ni (m_1,m_2) \mapsto m_2 \in \mathcal{M}.
\end{split}
\end{equation}
These projection maps allow us to define two particular classes of vector fields on $\mathcal{M}\times\mathcal{M}$ that can be thought of as ``lifted'' vector fields in a sense that will be clear in a moment.

\begin{definition} \label{def: l and r lifts}
Let  $\mathfrak{X}(\mathcal{M})$ denote the space of smooth vector fields on $\mathcal{M}$, and let $X \in \mathfrak{X}(\mathcal{M})$.
We define $\mathbb{X}_l \in \mathfrak{X}(\mathcal{M} \times \mathcal{M})$ as the vector field which is $\pi_l$-related to $X$ and $\pi_r$-related to the null vector field $\mathbf{0} \in \mathfrak{X}(\mathcal{M})$, that is, $ \mathbb{X}_l $ is such that
	\begin{equation}\label{eqn: left lift}
	\begin{split}
	T \pi_l \circ \mathbb{X}_l & = X \circ \pi_l, \\
	T \pi_r \circ \mathbb{X}_l & = \mathbf{0} \circ \pi_r.
	\end{split}
	\end{equation}
We call $\mathbb{X}_{l}$ the \grit{left lift} of $X$.
Analogously, the \grit{right lift}	$\mathbb{X}_r$ of $X$ is defined as the vector field which is $\pi_r$-related to $X$ and $\pi_l$-related to the null vector field $\mathbf{0} \in \mathfrak{X}(\mathcal{M})$, that is, $ \mathbb{X}_r $ is such that
	\begin{equation}\label{eqn: right lift}
	\begin{split}
	T \pi_r \circ \mathbb{X}_r & = X \circ \pi_r, \\
	T \pi_l \circ \mathbb{X}_r & = \mathbf{0} \circ \pi_l.
	\end{split}
	\end{equation}
In proposition \ref{eqn: generic expression of vector field wrt product coordinate chart}, it will be proved that both the \grit{left} and \grit{right lift} of $X$ are uniquely defined.
\end{definition}
In order to get a better understanding of the role of the lifted vector fields let us introduce a coordinate chart $(x^j, y^k)$ with $j,k=1,2,\dots,dim (\mathcal{M})$ on $\mathcal{M}\times\mathcal{M}$ adapted to its product structure.
Specifically, this means that both $(x^j)$ and $(y^k)$ are coordinate charts on $\mathcal{M}$, so that specifying both coordinates spells out a point in $\mathcal{M} \times \mathcal{M}$. 
In this chart, a vector field $Y\in \mathfrak{X}(\mathcal{M}\times\mathcal{M})$ can be written as
\begin{equation}\label{eqn: generic expression of vector field wrt product coordinate chart}
Y = Y_x^j \frac{\partial}{\partial x^j} +  Y_y^k \frac{\partial}{\partial y^k}
\end{equation}
and we have the following proposition.
\begin{proposition} \label{prop: properties of lifted fields}
In the coordinate chart $(x^j, y^k)$ introduced above,  the vector fields $\mathbb{X}_l$ and $\mathbb{X}_r$ on $\mathcal{M} \times \mathcal{M}$ associated with the vector field $X$ on $\mathcal{M}$ according to definition \ref{def: l and r lifts} have the expressions
	\begin{equation} \label{eqn: coord expr X_l and X_r}	
	\begin{split}
		\mathbb{X}_l (x,y) & = X^j(x) \frac{\partial}{\partial x^j}, \\
		\mathbb{X}_r (x,y) & = X^k(y) \frac{\partial}{\partial y^k}.
		\end{split}
	\end{equation}
Moreover, given any $X,Y \in \mathfrak{X}(\mathcal{M})$ and $f\in \mathcal{F}(\mathcal{M})$,  setting $f_l = \pi_l^* f$ and $f_r = \pi_r^* f$, and denoting with $L$ the Lie derivative, the following equalities hold:
	\begin{align} \label{eqn: properties lift 1}
	[\mathbb{X}_l,\mathbb{Y}_l] &= [X,Y]_l, & [\mathbb{X}_r,\mathbb{Y}_r] &= [X,Y]_r, & [\mathbb{X}_r,\mathbb{Y}_l] &= [\mathbb{X}_l,\mathbb{Y}_r] = 0, \\
	(fX)_l &= f_l \mathbb{X}_l, & (fX)_r &= f_r \mathbb{X}_r, & L_{\mathbb{X}_l}f_r &= L_{\mathbb{X}_r}f_l = 0. \label{eqn: properties lift 2}
	\end{align}
\end{proposition}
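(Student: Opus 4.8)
The plan is to establish the coordinate expressions in equation \eqref{eqn: coord expr X_l and X_r} first, since every remaining identity then reduces to an elementary computation in the adapted chart $(x^j,y^k)$. I would begin by writing the tangent maps in coordinates: because $\pi_l(x,y)=x$ and $\pi_r(x,y)=y$, the map $T\pi_l$ annihilates each $\partial/\partial y^k$ and sends $\partial/\partial x^j$ to $\partial/\partial x^j$, while $T\pi_r$ does the reverse. Expanding a generic $\mathbb{X}_l$ as in equation \eqref{eqn: generic expression of vector field wrt product coordinate chart}, the first condition in \eqref{eqn: left lift} forces the $x$-components to equal $X^j$ evaluated at the base point through $\pi_l$, i.e. $X^j(x)$, and the second forces the $y$-components to vanish. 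This simultaneously yields the claimed formula and shows that the two $\pi$-relatedness conditions pin down $\mathbb{X}_l$ uniquely, discharging the uniqueness claim deferred in Definition \ref{def: l and r lifts}; the argument for $\mathbb{X}_r$ is symmetric.

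With the coordinate expressions in hand I would verify the identities in \eqref{eqn: properties lift 1} directly. For $[\mathbb{X}_l,\mathbb{Y}_l]$ and $[\mathbb{X}_r,\mathbb{Y}_r]$ one computes the bracket in coordinates and recognizes the outcome as the coordinate form of $[X,Y]_l$ and $[X,Y]_r$; alternatively, one may invoke the general principle that $\pi$-relatedness is preserved under the Lie bracket, so that $[\mathbb{X}_l,\mathbb{Y}_l]$ is at once $\pi_l$-related to $[X,Y]$ and $\pi_r$-related to $[\mathbf{0},\mathbf{0}]=\mathbf{0}$, whence it equals $[X,Y]_l$ by the uniqueness just established. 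The vanishing of the cross-brackets $[\mathbb{X}_r,\mathbb{Y}_l]=[\mathbb{X}_l,\mathbb{Y}_r]=0$ is immediate from the coordinate expressions, since $\mathbb{X}_r$ carries only $y$-components depending on $y$ while $\mathbb{Y}_l$ carries only $x$-components depending on $x$, so the two first-order operators commute.

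Finally, the identities in \eqref{eqn: properties lift 2} follow by inspection. As $f_l=\pi_l^* f$ depends only on the $x$-coordinates, the product $(fX)_l$ has $x$-components $f(x)X^j(x)$, which is precisely $f_l\,\mathbb{X}_l$, and the statement for the right lift is symmetric. The Lie-derivative relations $L_{\mathbb{X}_l}f_r=L_{\mathbb{X}_r}f_l=0$ hold because $\mathbb{X}_l$ differentiates only in the $x$-directions whereas $f_r$ depends only on $y$ (and conversely). I do not anticipate a genuine obstacle: the proposition is a bookkeeping exercise once the adapted chart is fixed, and the only point deserving slight care is to note that the expression \eqref{eqn: coord expr X_l and X_r} is chart-independent. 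This is guaranteed because the defining relations \eqref{eqn: left lift} and \eqref{eqn: right lift} are themselves coordinate-free, so the locally computed lift patches together into a globally well-defined vector field on $\mathcal{M}\times\mathcal{M}$.
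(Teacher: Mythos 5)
Your proposal is correct and follows essentially the same route as the paper's proof: derive the coordinate expressions \eqref{eqn: coord expr X_l and X_r} from the two relatedness conditions in the adapted chart, observe that this pins down the lifts uniquely, and then reduce the identities \eqref{eqn: properties lift 1} and \eqref{eqn: properties lift 2} to straightforward local computations (which the paper omits and you sketch correctly). Your alternative derivation of $[\mathbb{X}_l,\mathbb{Y}_l]=[X,Y]_l$ via preservation of $\pi$-relatedness under the Lie bracket plus uniqueness is a clean, slightly more intrinsic variant, but it does not change the substance of the argument.
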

\begin{proof}
According to equation \eqref{eqn: generic expression of vector field wrt product coordinate chart}, the local expressions of the vector fields $\mathbb{X}_l$ and $\mathbb{X}_r$ in the coordinates $(x^j, y^k)$ read 
	\begin{equation}
		\begin{split}
			\mathbb{X}_l & = (\mathbb{X}_{l})^{j}_{x} \frac{\partial}{\partial x^j} +  (\mathbb{X}_{l})^{k}_{y}  \frac{\partial}{\partial y^k}, \\
			\mathbb{X}_r & = (\mathbb{X}_{r})^{j}_{x} \frac{\partial}{\partial x^j} +  (\mathbb{X}_{r})^{k}_{x} \frac{\partial}{\partial y^k},
		\end{split}
	\end{equation}
where the functions $ (\mathbb{X}_{l})^{j}_{x}, (\mathbb{X}_{l})^{k}_{y} $  are determined by the two conditions in equation \eqref{eqn: left lift}, while the functions $(\mathbb{X}_{r})^{j}_{x},(\mathbb{X}_{r})^{k}_{x}$ are determined by the two conditions in equation \eqref{eqn: right lift}.
In particular,  since $\pi_{l}^{j}(m)=x^{j}(m)$, a direct computation shows that the first condition in equation \eqref{eqn: left lift} implies
	\begin{equation}
		(\mathbb{X}_{l})^{j}_{x}  = X^j ,
	\end{equation}
while the   second condition in  equation \eqref{eqn: left lift} implies
	\begin{equation}
		(\mathbb{X}_{l})^{k}_{y} =0.
	\end{equation}
A completely analogous computation can be carried out for $\mathbb{X}_r$, and we proved the validity of equation \eqref{eqn: coord expr X_l and X_r}.
Moreover, let us notice that this instance also shows that definition \ref{def: l and r lifts} uniquely determines the vector fields $\mathbb{X}_l$ and $\mathbb{X}_r$ given the vector field $X$ on $\mathcal{M}$.
The proof of the results in equation \eqref{eqn: properties lift 1} and \eqref{eqn: properties lift 2} can be now obtained with straightforward computations using the local expressions of $\mathbb{X}_{l}$ and $\mathbb{X}_{r}$ given in equation \eqref{eqn: coord expr X_l and X_r} and thus is omitted.

\end{proof}

Besides the canonical projections $\pi_{l}$ and $\pi_{r}$, the procedure we want to describe makes use of the so-called  \emph{diagonal immersion} given by the map
\begin{equation} \label{def: diag immersion}
i_D:\mathcal{M}\ni m \mapsto (m,m) \in \mathcal{M}\times\mathcal{M}.
\end{equation}
It is immediate to check that the local expression with respect to the coordinate chart $(x^j, y^j)$ introduced above of the operation of taking the pull-back with respect to $i_{D}$ coincides with setting $x^{j}=y^{j}$.
At this point, we are ready to give  the coordinate-free counterpart of the procedure appearing in equation \eqref{eqn: KL to FR}.
\begin{proposition}\label{prop: properties of g}
Let $S\in\mathcal{F}(\mathcal{M}\times\mathcal{M})$ be smooth, and  take $X,Y\in\mathfrak{X}(\mathcal{M})$.
Then, the expressions
	\begin{align}
	g_{ll}(X,Y) & := i^*_D\left(L_{\mathbb{X}_l}L_{\mathbb{Y}_l} S \right), & g_{rr}(X,Y) & := i^*_D\left(L_{\mathbb{X}_r}L_{\mathbb{Y}_r} S \right), \\
	g_{lr}(X,Y) & := i^*_D\left(L_{\mathbb{X}_l}L_{\mathbb{Y}_r} S \right), & g_{rl}(X,Y) & := i^*_D\left(L_{\mathbb{X}_r}L_{\mathbb{Y}_l} S \right) 
	\end{align}
are such that
	\begin{enumerate}
		\item $g_{lr}$ and $g_{rl}$ are covariant $(0,2)$-type tensors on $\mathcal{M}$ such that 
		\begin{equation}
		g_{lr} (X,Y) = g_{rl} (Y,X);
		\end{equation}
		\item $g_{ll}$ is a symmetric covariant $(0,2)$-type tensors on $\mathcal{M}$ if and only if
		\begin{equation} \label{eqn: Lie der w.r.t. Xl of S}
		i^*_D(L_{\mathbb{X}_l}S) = 0 \quad \forall X \in \mathfrak{X}(\mathcal{M});
		\end{equation} 
		\item $g_{rr}$ is a symmetric covariant $(0,2)$-type tensors on $\mathcal{M}$ if and only if
		\begin{equation} \label{eqn: Lie der w.r.t. Xr of S}
		i^*_D(L_{\mathbb{X}_r}S) = 0 \quad \forall X \in \mathfrak{X}(\mathcal{M});
		\end{equation} 
		\item if $S$ satisfies both \eqref{eqn: Lie der w.r.t. Xl of S} and \eqref{eqn: Lie der w.r.t. Xr of S}, it holds
		\begin{equation}
		g := g_{ll} = g_{rr} = - g_{lr} = - g_{rl},
		\end{equation}
		and all these tensors are symmetric.
		Moreover,  the local expression of $g$ reads
		\be\label{eqn: local expression of 2-tensor}
		\begin{split}
		g&=\left(\frac{\partial^{2} S}{\partial x^{j}\partial x^{k}}\right)_{\mathbf{x}=\mathbf{y}}\mathrm{d}x^{j}\otimes\mathrm{d}x^{k}=  -\left(\frac{\partial^{2} S}{\partial x^{j}\partial y^{k}}\right)_{\mathbf{x}=\mathbf{y}}\mathrm{d}x^{j}\otimes\mathrm{d}x^{k} =\\
&=\left(\frac{\partial^{2} S}{\partial y^{j}\partial y^{k}}\right)_{\mathbf{x}=\mathbf{y}}\mathrm{d}x^{j}\otimes\mathrm{d}x^{k}
		=-\left(\frac{\partial^{2} S}{\partial y^{j}\partial x^{k}}\right)_{\mathbf{x}=\mathbf{y}}\mathrm{d}x^{j}\otimes\mathrm{d}x^{k},
		\end{split}
		\ee
		where $(x^{j},y^{k})$ is the product coordinate chart introduced above.
	\end{enumerate}
\end{proposition}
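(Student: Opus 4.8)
The plan is to reduce everything to the lift properties collected in Proposition~\ref{prop: properties of lifted fields}, together with two elementary facts about the diagonal immersion. First, since $\pi_l\circ i_D=\pi_r\circ i_D=\mathrm{id}_{\mathcal{M}}$, for every $f\in\mathcal{F}(\mathcal{M})$ one has $i_D^*f_l=i_D^*f_r=f$. Second, reading off the product chart one sees that $Ti_D\circ X=(\mathbb{X}_l+\mathbb{X}_r)\circ i_D$, i.e. $\mathbb{X}_l+\mathbb{X}_r$ is $i_D$-related to $X$; consequently, by naturality of the Lie derivative under related maps, $i_D^*\big(L_{\mathbb{X}_l+\mathbb{X}_r}\Phi\big)=L_X\big(i_D^*\Phi\big)$ for every function $\Phi$ on $\mathcal{M}\times\mathcal{M}$. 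With these in hand, parts (1)--(3) and the tensor identities of (4) can be obtained intrinsically, and only the local expression in (4) calls for the product chart.

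For part (1), tensoriality of $g_{lr}$ in its second argument follows from $(fY)_r=f_r\mathbb{Y}_r$ and $L_{\mathbb{X}_l}f_r=0$ (equation~\eqref{eqn: properties lift 2}): the Leibniz rule gives $L_{\mathbb{X}_l}L_{(fY)_r}S=f_r\,L_{\mathbb{X}_l}L_{\mathbb{Y}_r}S$, and applying $i_D^*$ together with $i_D^*f_r=f$ yields $g_{lr}(X,fY)=f\,g_{lr}(X,Y)$; tensoriality in the first argument is even simpler, using $(fX)_l=f_l\mathbb{X}_l$. The same computation works for $g_{rl}$. The relation $g_{lr}(X,Y)=g_{rl}(Y,X)$ is then immediate from $[\mathbb{X}_l,\mathbb{Y}_r]=0$ (equation~\eqref{eqn: properties lift 1}), which makes $L_{\mathbb{X}_l}$ and $L_{\mathbb{Y}_r}$ commute on $S$.

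For parts (2) and (3) I treat $g_{ll}$, the case of $g_{rr}$ being identical after exchanging $l\leftrightarrow r$. Tensoriality in the first argument is automatic from $(fX)_l=f_l\mathbb{X}_l$. In the second argument the Leibniz rule produces an obstruction: since $L_{\mathbb{X}_l}f_l=(Xf)_l$, one finds $g_{ll}(X,fY)=f\,g_{ll}(X,Y)+(Xf)\,i_D^*\!\big(L_{\mathbb{Y}_l}S\big)$. Hence $g_{ll}$ is $\mathcal{F}(\mathcal{M})$-linear in $Y$ if and only if $(Xf)\,i_D^*(L_{\mathbb{Y}_l}S)=0$ for all $X,f$; choosing $X,f$ with $(Xf)(p)\neq0$ at an arbitrary $p$ forces $i_D^*(L_{\mathbb{Y}_l}S)=0$, which is precisely condition~\eqref{eqn: Lie der w.r.t. Xl of S}. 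Conversely, that condition kills the obstruction and, applied to $[X,Y]$, also yields symmetry because $g_{ll}(X,Y)-g_{ll}(Y,X)=i_D^*\big(L_{[\mathbb{X}_l,\mathbb{Y}_l]}S\big)=i_D^*\big(L_{[X,Y]_l}S\big)$ by $[\mathbb{X}_l,\mathbb{Y}_l]=[X,Y]_l$. This establishes the two equivalences.

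For part (4), assume both \eqref{eqn: Lie der w.r.t. Xl of S} and \eqref{eqn: Lie der w.r.t. Xr of S}. Applying the $i_D$-relatedness identity to $\Phi=L_{\mathbb{Y}_l}S$ gives $g_{ll}(X,Y)+g_{rl}(X,Y)=i_D^*\big(L_{\mathbb{X}_l+\mathbb{X}_r}L_{\mathbb{Y}_l}S\big)=L_X\big(i_D^*(L_{\mathbb{Y}_l}S)\big)=0$, so $g_{ll}=-g_{rl}$; the analogous computation with $\Phi=L_{\mathbb{Y}_r}S$ gives $g_{rr}=-g_{lr}$. The remaining identity $g_{ll}=g_{rr}$ is then obtained by routing through part (1) and the symmetry of $g_{rr}$: $g_{ll}(X,Y)=-g_{rl}(X,Y)=-g_{lr}(Y,X)=g_{rr}(Y,X)=g_{rr}(X,Y)$, which collapses all four objects to a single symmetric $g:=g_{ll}=g_{rr}=-g_{lr}=-g_{rl}$. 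Finally, the local expression is read off in the product chart: with $\mathbb{X}_l=X^j(x)\,\partial_{x^j}$ and $\mathbb{Y}_l=Y^k(x)\,\partial_{x^k}$, the first-order term in $L_{\mathbb{X}_l}L_{\mathbb{Y}_l}S$ is proportional to $\partial_{x^k}S$, which vanishes on the diagonal by \eqref{eqn: Lie der w.r.t. Xl of S}; what survives is $g(X,Y)=X^jY^k\,(\partial^2_{x^jx^k}S)_{\mathbf{x}=\mathbf{y}}$, and the three alternative forms follow from the equalities $g=g_{rr}=-g_{lr}=-g_{rl}$ just proved. The step I expect to be the main obstacle is exactly this coincidence $g_{ll}=g_{rr}$: the direct manipulations only deliver the two sign relations, and merging them genuinely requires the cross-relation of part (1) and the symmetry from part (3) (equivalently, in coordinates, differentiating the diagonal identities $\partial_{x^k}S|_{\mathbf{x}=\mathbf{y}}=0$ and $\partial_{y^k}S|_{\mathbf{x}=\mathbf{y}}=0$ and invoking the Schwarz symmetry of the pure Hessians to force the mixed Hessian $(\partial^2_{x^jy^k}S)_{\mathbf{x}=\mathbf{y}}$ to be symmetric).
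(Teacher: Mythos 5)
Your proof is correct. Note that the paper does not actually supply an argument for this proposition: it declares the proof to be a ``(possibly tedious) direct check'' and defers entirely to the coordinate computations in the cited reference. Your route is genuinely different and, in my view, preferable: instead of expanding everything in the product chart, you reduce the whole statement to the algebraic identities of Proposition \ref{prop: properties of lifted fields} together with two intrinsic observations --- that $i_D^*f_l=i_D^*f_r=f$, and that $\mathbb{X}_l+\mathbb{X}_r$ is $i_D$-related to $X$, so that $i_D^*\bigl(L_{\mathbb{X}_l+\mathbb{X}_r}\Phi\bigr)=L_X\bigl(i_D^*\Phi\bigr)$. This second fact is the real key: it delivers the sign relations $g_{ll}=-g_{rl}$ and $g_{rr}=-g_{lr}$ of part (4) in one line each, and your observation that the remaining identity $g_{ll}=g_{rr}$ does \emph{not} follow from these alone but must be routed through the cross-relation $g_{lr}(X,Y)=g_{rl}(Y,X)$ and the symmetry of $g_{rr}$ is exactly right (in coordinates this is the Schwarz symmetry of the pure Hessians forcing the symmetry of the mixed one). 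Your treatment of the equivalences in (2)--(3) is also complete in both directions: the Leibniz obstruction $(Xf)\,i_D^*(L_{\mathbb{Y}_l}S)$ shows that mere $\mathcal{F}(\mathcal{M})$-linearity in the second slot already forces condition \eqref{eqn: Lie der w.r.t. Xl of S}, while conversely that condition kills both the obstruction and, via $[\mathbb{X}_l,\mathbb{Y}_l]=[X,Y]_l$, the antisymmetric part. The only place coordinates enter is the local expression at the end, where the vanishing of the first-order term $X^j(\partial_{x^j}Y^k)\,\partial_{x^k}S$ on the diagonal is correctly attributed to \eqref{eqn: Lie der w.r.t. Xl of S}. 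What the intrinsic approach buys is transparency about which hypothesis is responsible for which conclusion; what the coordinate check of the reference buys is brevity once one is willing to write out second derivatives. I see no gap.
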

\begin{proof}
The proof consists of a (possibly tedious) direct check, and we refer the reader to the proofs of Propositions 3 and 5 in \cite{C-DC-L-M-M-V-V-2018} for all the explicit details.
\end{proof}

Looking back at equation  \eqref{eqn: KL to FR}, we see that the Kullback-Leibler relative entropy is a two-point function on $\Delta_{n}$ satisfying both \eqref{eqn: Lie der w.r.t. Xl of S} and \eqref{eqn: Lie der w.r.t. Xr of S}, and thus the fact that equation  \eqref{eqn: KL to FR} defines the coefficients of a metric tensor on $\Delta_{n}$ is assured by proposition \ref{prop: properties of g} above.

Motivated by proposition \ref{prop: properties of g} above, we introduce the notion of \textbf{potential function} and \textbf{divergence function} as follows.

\begin{definition} \label{def: potential functions}
A smooth, two-point function $S$ on $\mathcal{M}$ is called  a \textbf{potential function} if it satisfies equation \eqref{eqn: Lie der w.r.t. Xl of S} and equation \eqref{eqn: Lie der w.r.t. Xr of S}.

\end{definition}

Clearly, a  \textbf{potential function} on $\mathcal{M}$ immediately leads to a possibly singular, pseudo-Riemannian tensor on it, that is, a symmetric, covariant tensor of type $(0,2)$ whose signature need not be positive.
In almost all the cases considered in Classical and Quantum Information Geometry, the two-point functions under consideration are relative entropies or distance functions satisfying a positivity property.
This instance leads us to introduce the notion of  \textbf{divergence function} as follows.

\begin{definition} \label{def: divergence functions}
A \textbf{divergence function} is a smooth, two-point function $S$ on $ \mathcal{M}$ satisfying
	\begin{equation}
		S(m_1,m_2) \ge 0, \quad \quad S(m,m) = 0  		
	\end{equation}
for all $m,m_{1},m_{2}\in \mathcal{M}$.

\end{definition}

\begin{proposition}\label{prop: divergence functions lead to positive semidefinite tensors}
Every \textbf{divergence function} $S$ on $\mathcal{M}$ is also a \textbf{potential function} on $\mathcal{M}$, and the covariant tensor extracted from $S$ according to proposition \ref{prop: properties of g} is positive semidefinite.
\end{proposition}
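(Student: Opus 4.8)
The plan is to reduce both assertions to a single elementary observation: for a divergence function, each of the two partial functions obtained by freezing one argument attains a global minimum, equal to zero, precisely on the diagonal. The first- and second-order conditions for such a minimum will then yield, respectively, the potential-function property and the positive semidefiniteness of the extracted tensor.

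First I would show that $S$ is a potential function, i.e. that it satisfies equations \eqref{eqn: Lie der w.r.t. Xl of S} and \eqref{eqn: Lie der w.r.t. Xr of S}. Working in a product coordinate chart $(x^j,y^k)$ and using the coordinate expression $\mathbb{X}_l = X^j(x)\,\partial/\partial x^j$ from proposition \ref{prop: properties of lifted fields}, together with the fact that $i_D^*$ acts by setting $\mathbf{x}=\mathbf{y}$, I obtain $i_D^*(L_{\mathbb{X}_l}S) = X^j\left(\partial S/\partial x^j\right)_{\mathbf{x}=\mathbf{y}}$. It therefore suffices to check that $\left(\partial S/\partial x^j\right)_{\mathbf{x}=\mathbf{y}}\equiv 0$. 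Fixing a point $m$ and setting $\phi_m(x):=S(x,m)$, the defining properties of a divergence function give $\phi_m\ge 0$ everywhere and $\phi_m(m)=S(m,m)=0$, so $x=m$ is a global minimum of the smooth function $\phi_m$; hence its first derivatives vanish there, $\partial S/\partial x^j\,(m,m)=0$. Since $m$ is arbitrary, $i_D^*(L_{\mathbb{X}_l}S)=0$ for every $X$. Applying the same argument to $\psi_m(y):=S(m,y)$ establishes equation \eqref{eqn: Lie der w.r.t. Xr of S}, so $S$ is a potential function and proposition \ref{prop: properties of g} applies.

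It remains to prove positive semidefiniteness. By item (4) of proposition \ref{prop: properties of g} the extracted tensor equals $g_{ll}$, with local expression $g=\left(\partial^2 S/\partial x^j\partial x^k\right)_{\mathbf{x}=\mathbf{y}}\,\mathrm{d}x^j\otimes\mathrm{d}x^k$. I would again invoke $\phi_m(x)=S(x,m)$, which as already noted has a global minimum at $x=m$. The coefficient matrix $g_{jk}(m)=\left(\partial^2 S/\partial x^j\partial x^k\right)_{\mathbf{x}=\mathbf{y}=m}$ is exactly the Hessian of $\phi_m$ at this minimum, and the second-order necessary condition for a minimum of a smooth function forces that Hessian to be positive semidefinite. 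As this holds at every $m$, the tensor $g$ is positive semidefinite, completing the proof.

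There is no genuine analytic obstacle here; the computations are routine. The only point deserving care is purely a matter of bookkeeping: one must verify that the mixed evaluation $(\cdot)_{\mathbf{x}=\mathbf{y}}$ produced by the lifts and $i_D^*$ really coincides with differentiating the frozen-argument function $\phi_m$ (or $\psi_m$) and only afterwards restricting to the diagonal, rather than the reverse order. Once one checks, via proposition \ref{prop: properties of lifted fields}, that the relevant $x$-derivatives commute with $i_D^*$ in the required way, both claims follow immediately from the first- and second-order minimum conditions.
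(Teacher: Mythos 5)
Your proposal is correct and follows essentially the same route as the paper: both arguments rest on the observation that the diagonal consists of global minima of $S$, with the first-order condition yielding the potential-function property and the second-order condition yielding positive semidefiniteness. The only cosmetic difference is that the paper obtains the second-order condition by restricting $S$ along the flow of a vector field, $F(t)=S(\phi^{X}_{t}(m),m)$, whereas you invoke directly the positive semidefiniteness of the Hessian of the frozen-argument function $\phi_{m}$ at its minimum; these are equivalent, and your bookkeeping remark about commuting the $x$-derivatives with $i_{D}^{*}$ is exactly the point the coordinate expressions in proposition \ref{prop: properties of lifted fields} settle.
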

\begin{proof}
It is clear that the diagonal part of $\mathcal{M}\times\mathcal{M}$ is a submanifold of critical points for $S$, specifically, of local minima.
Therefore, the fact that $S$ is a \textbf{potential function} follows upon comparing the local expression of a critical point for $S$  with the coordinate expressions of equation \eqref{eqn: Lie der w.r.t. Xl of S} and equation \eqref{eqn: Lie der w.r.t. Xr of S} in the coordinate chart $(x^j, y^j)$ introduced above.

To prove that $g$ is positive semidefinite, let us define the function
\be
F(t):=S(\phi_{t}^{X}(m),m),
\ee
where $\phi_{t}^{X}(m)$ denotes the flow of the vector field $X$ starting at $m\in M$.
Clearly, $F\geq 0$ for all $t\in(-\epsilon,\epsilon)$ with $\epsilon >0$, and $F(0)=0$.
Referring to the product coordinate chart $(x^{j},y^{k})$ introduced above, and taking the second derivative of $F$ at $t=0$, we obtain
\be
\begin{split}
\left.\frac{\mathrm{d}^{2}F}{\mathrm{d}t^{2}}\right|_{t=0}&= \left(\frac{\partial^{2} S}{\partial x^{j}\partial x^{k}}\right)_{\mathbf{x}=\mathbf{y}}X^{j} X^{k}  + \left(\frac{\partial S}{\partial x^{j}}\right)_{\mathbf{x}=\mathbf{y}}\left(\frac{\partial X^{j}}{\partial x^{k}} \right) X^{k}  \\
& = g_{jk} X^{j} X^{k},
\end{split}
\ee
where we used equation \eqref{eqn: Lie der w.r.t. Xl of S} and equation \eqref{eqn: local expression of 2-tensor}.
Now, recalling that $t=0$ is a critical point for $F$, if it were $F''(0)<0$ we would conclude that $t=0$ is a local maximum for $F$  which is impossible since $F\geq 0$ and $F(0)=0$.
Consequently, we must conclude that 
\be
F''(0)=g_{jk} X^{j} X^{k}\geq 0 .
\ee
Since this inequality holds for all $m\in M $ and for all vector fields $X$ on $M$, we conclude that $g$ is indeed positive semidefinite.
\end{proof}

\subsection{Potential functions and smooth maps }

In this subsection  we shall investigate the behaviour of potential functions and their associated tensors  with respect to smooth mappings. 
Understanding this behaviour will be important when we will compare the results of  the unfolding procedure for monotone metrics with the results of the unfolding procedure for quantum divergences in section \ref{sec: unfolding of quantum states}.

Let $\mathcal{M}$ and $\mathcal{N}$ be  smooth manifolds, and let   $\phi$ be a smooth map from $\mathcal{M}$ to $\mathcal{N}$.
We define the map $\Phi$ from $\mathcal{M} \times \mathcal{M}$ to $\mathcal{N} \times \mathcal{N}$ as
\begin{equation} \label{def: Phi}
	\Phi: \mathcal{M} \times \mathcal{M} \ni (m_1,m_2) \mapsto (\phi(m_1), \phi(m_2)) \in \mathcal{N} \times \mathcal{N},
\end{equation}
and a direct computation shows that
\begin{equation} \label{eqn: rel imm with phi}
	\Phi \circ i^{\mathcal{M}}_D = i^{\mathcal{N}}_D \circ \phi,
\end{equation}
where $i^{\mathcal{N}}_D$  and $i^{\mathcal{M}}_D$ are, respectively,  the diagonal immersion of $\mathcal{N}$ in $\mathcal{N} \times \mathcal{N}$ and the diagonal immersion of $\mathcal{M}$ in $\mathcal{M} \times \mathcal{M}$.

\begin{proposition}\label{prop: X and Z phi-related implies Xl and Zl Phi-related}
Let $X\in\mathfrak{X}(\mathcal{M})$ be $\phi$-related to $Z\in\mathfrak{X}(\mathcal{N})$, that is, $T\phi\circ X=Z\circ \phi$.
Then, $\mathbb{X}_{l}$ is $\Phi$-related to $\mathbb{Z}_{l}$, that is, $T\Phi\circ\mathbb{X}_{l}=\mathbb{Z}_{l}\circ \Phi$, and $\mathbb{X}_{r}$ is $\Phi$-related to $\mathbb{Z}_{r}$, that is, $T\Phi\circ\mathbb{X}_{r}=\mathbb{Z}_{r}\circ \Phi$.
\end{proposition}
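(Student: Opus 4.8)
The plan is to exploit the uniqueness of the lifts established in Proposition~\ref{prop: properties of lifted fields} together with the elementary fact that $\Phi$ intertwines the canonical projections. First I would record the two identities that follow at once by evaluating both sides on a point $(m_1,m_2)$ in the definition \eqref{def: Phi} of $\Phi$, namely
\[
\pi_l^{\mathcal{N}}\circ\Phi=\phi\circ\pi_l^{\mathcal{M}}, \qquad \pi_r^{\mathcal{N}}\circ\Phi=\phi\circ\pi_r^{\mathcal{M}}.
\]
The key structural observation I would then invoke is that a vector field on a product is completely determined by its two projections: the bundle map $(T\pi_l^{\mathcal{N}},T\pi_r^{\mathcal{N}})$ realises the canonical isomorphism $T(\mathcal{N}\times\mathcal{N})\cong T\mathcal{N}\oplus T\mathcal{N}$. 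Hence, to prove $T\Phi\circ\mathbb{X}_l=\mathbb{Z}_l\circ\Phi$ it suffices to check that the two sides coincide after composing with $T\pi_l^{\mathcal{N}}$ and with $T\pi_r^{\mathcal{N}}$.

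For the left projection I would chain the intertwining identity, the functoriality of $T$, the first defining relation of $\mathbb{X}_l$ in \eqref{eqn: left lift}, and finally the hypothesis $T\phi\circ X=Z\circ\phi$:
\begin{align*}
T\pi_l^{\mathcal{N}}\circ T\Phi\circ\mathbb{X}_l
&= T\phi\circ T\pi_l^{\mathcal{M}}\circ\mathbb{X}_l
= T\phi\circ X\circ\pi_l^{\mathcal{M}} \\
&= Z\circ\phi\circ\pi_l^{\mathcal{M}}
= Z\circ\pi_l^{\mathcal{N}}\circ\Phi,
\end{align*}
which is precisely $T\pi_l^{\mathcal{N}}\circ\mathbb{Z}_l\circ\Phi$ by the defining relation of $\mathbb{Z}_l$. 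For the right projection the second relation in \eqref{eqn: left lift} gives $T\pi_r^{\mathcal{M}}\circ\mathbb{X}_l=\mathbf{0}\circ\pi_r^{\mathcal{M}}$, and since $T\phi$ maps the zero section to the zero section, $T\pi_r^{\mathcal{N}}\circ T\Phi\circ\mathbb{X}_l=\mathbf{0}$; comparing with $\mathbb{Z}_l$, whose right projection is likewise null, closes the verification for the left lift. The assertion for the right lift is obtained verbatim by exchanging the roles of $l$ and $r$ throughout, using \eqref{eqn: right lift} in place of \eqref{eqn: left lift}.

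The argument is essentially formal, so I do not anticipate a serious obstacle; the only point demanding a word of care is the justification that agreement of the two projections forces equality of the candidate $\Phi$-related fields, which is exactly the canonical splitting of $T(\mathcal{N}\times\mathcal{N})$. Should one prefer to bypass this structural remark, the same conclusion follows by a direct computation in an adapted product chart, inserting the explicit expression $\mathbb{X}_l(x,y)=X^j(x)\,\partial/\partial x^j$ of Proposition~\ref{prop: properties of lifted fields} and the coordinate form $Z^a(\phi(x))=X^j(x)\,\partial\phi^a/\partial x^j$ of the $\phi$-relatedness hypothesis into $T\Phi$.
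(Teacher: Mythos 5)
Your proof is correct and rests on the same underlying fact as the paper's, namely the canonical splitting $T(\mathcal{N}\times\mathcal{N})\cong T\mathcal{N}\oplus T\mathcal{N}$ together with the defining relations \eqref{eqn: left lift} and \eqref{eqn: right lift} of the lifts. The paper carries out the verification as an explicit pointwise computation, writing $T\Phi(m_1,v_{m_1};m_2,v_{m_2})=\left(\phi(m_1),T_{m_1}\phi(v_{m_1});\phi(m_2),T_{m_2}\phi(v_{m_2})\right)$ and comparing the two components directly, whereas you phrase the same comparison as a diagram chase through the intertwining identities $\pi_l^{\mathcal{N}}\circ\Phi=\phi\circ\pi_l^{\mathcal{M}}$ and $\pi_r^{\mathcal{N}}\circ\Phi=\phi\circ\pi_r^{\mathcal{M}}$; the content is identical.
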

\begin{proof}
Let us start noting that a point in the tangent space   $TM$ can be denoted by $(m\,,v_{m})$, with $m\in \mathcal{M}$ and $v_{m}\in T_{m}\mathcal{M}$, but, since in general $T\mathcal{M}$ is not a Cartesian product,  the notation $(m\,,v_{m})$ should be handled keeping in mind that the second factor  is not independent from the first one.
A vector field $X\in\mathfrak{X}(\mathcal{M})$ can be thought of as a derivation of the associative algebra $\mathcal{F}(\mathcal{M})$ of real-valued, smooth functions on $\mathcal{M}$, or as a smooth section of the tangent bundle $T\mathcal{M}$, that is, a smooth map $X\colon \mathcal{M}\rightarrow T\mathcal{M}$ such that $\tau\circ X=id_{\mathcal{M}}$, where $\tau$ is the canonical projection of the tangent bundle.
In the latter case, we may write the evaluation of a  vector field on $m\in \mathcal{M}$ as $X(m)=(m\,,v^{X}_{m})$.
Then, a direct check shows that
\be
\begin{split}
T\phi\circ X(m)&=T\phi(m\,,v^{X}_{m})=\left(\phi(m)\,,T_{m}\phi(v^{x}_{m})\right)\,, \\
Z\circ\phi(m)&=Z(\phi(m))=\left(\phi(m)\,,v_{\phi(m)}^{Z}\right)\,,
\end{split}
\ee
from which it follows that $T\phi\circ X=Z\circ \phi$ implies 
\be\label{eqn: potential functions and differentiable mapping, relatedness of vector fields 1}
T_{m}\phi(v^{X}_{m})=v_{\phi(m)}^{Z}\,.
\ee
Now, we have
\be
T\Phi(m_{1}\,,v_{m_{1}}\,;m_{2}\,,v_{m_{2}})=\left(\phi(m_{1})\,,T_{m_{1}}\phi(v_{m_{1}})\,;\phi(m_{2})\,,T_{m_{2}}\phi(v_{m_{2}})\right)\,,
\ee
and thus, recalling that $\mathbb{X}_{l}$ is the left lift of $X$, it holds
\be\label{eqn: potential functions and differentiable mapping, relatedness of vector fields 2}
T\Phi\circ\mathbb{X}_{l}(m_{1}\,,m_{2})=T\Phi(m_{1}\,,v^{X}_{m_{1}}\,,m_{2}\,,0)=\left(\phi(m_{1})\,,T_{m_{1}}\phi(v^{X}_{m_{1}})\,;\phi(m_{2})\,,0\right)\,.
\ee
On the other hand, recalling that $\mathbb{Z}_{l}$ is the left lift of $Z$, it also holds
\be\label{eqn: potential functions and differentiable mapping, relatedness of vector fields 3}
\mathbb{Z}_{l}\circ\Phi(m_{1}\,,m_{2})=\mathbb{Z}_{l}(\phi(m_{1})\,,\phi(m_{2}))=\left(\phi(m_{1})\,,v_{\phi(m_{1})}^{Z}\,;\phi(m_{2})\,,0\right)\,.
\ee
Using equation \eqref{eqn: potential functions and differentiable mapping, relatedness of vector fields 1} in equation \eqref{eqn: potential functions and differentiable mapping, relatedness of vector fields 3}, and then comparing equation \eqref{eqn: potential functions and differentiable mapping, relatedness of vector fields 3} with equation \eqref{eqn: potential functions and differentiable mapping, relatedness of vector fields 2} we obtain 
\be
T\Phi\circ\mathbb{X}_{l}=\mathbb{Z}_{l}\circ\Phi
\ee
as claimed.
Proceeding analogously, we prove that $\mathbb{X}_{r}$ is $\Phi$-related to $\mathbb{Z}_{r}$.
\end{proof}

With proposition \ref{prop: X and Z phi-related implies Xl and Zl Phi-related} at hand we can prove the following:
\begin{proposition} \label{prop: S under smooth maps}
Let $\phi$ be a smooth map from $\mathcal{M}$ to $\mathcal{N}$ and $\Phi$ defined as in \eqref{def: Phi}.
If $S$ is a potential function on $\mathcal{N} \times \mathcal{N}$ then $\Phi^* S$ is a potential function on $\mathcal{M} \times \mathcal{M}$, moreover, the converse holds if $\phi$ (and thus $\Phi$) is surjective.
\end{proposition}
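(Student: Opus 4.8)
The plan is to reduce the statement to the two pointwise conditions \eqref{eqn: Lie der w.r.t. Xl of S} and \eqref{eqn: Lie der w.r.t. Xr of S} defining a potential function, and to transport them across $\phi$ using Proposition \ref{prop: X and Z phi-related implies Xl and Zl Phi-related} together with the intertwining relation \eqref{eqn: rel imm with phi}, namely $\Phi\circ i^{\mathcal{M}}_D = i^{\mathcal{N}}_D\circ\phi$. The single computational identity I will lean on is that, whenever $X\in\mathfrak{X}(\mathcal{M})$ is $\phi$-related to $Z\in\mathfrak{X}(\mathcal{N})$, the relatedness of the lifts (Proposition \ref{prop: X and Z phi-related implies Xl and Zl Phi-related}) gives the standard commutation $L_{\mathbb{X}_l}\Phi^*S = \Phi^*(L_{\mathbb{Z}_l}S)$, so that pulling back by $i^{\mathcal{M}}_D$ and using \eqref{eqn: rel imm with phi} yields
\[
i^{*}_D\bigl(L_{\mathbb{X}_l}\Phi^*S\bigr) = \phi^*\bigl(i^{*}_D(L_{\mathbb{Z}_l}S)\bigr),
\]
and likewise for the right lifts.

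For the forward implication I would not insist on global $\phi$-relatedness (which may fail at critical points of $\phi$), but argue pointwise instead. Fixing any $X\in\mathfrak{X}(\mathcal{M})$ and any $m$, I would expand $L_{\mathbb{X}_l}\Phi^*S = \mathbb{X}_l[S\circ\Phi]$, evaluate on the diagonal, and push forward through $T\Phi = T\phi\oplus T\phi$, obtaining $i^{*}_D(L_{\mathbb{X}_l}\Phi^*S)(m) = \mathrm{d}S|_{(\phi(m),\phi(m))}\bigl(T_m\phi(X(m)),0\bigr)$. Since the vector $(T_m\phi(X(m)),0)$ is of left-lift type at the diagonal point $(\phi(m),\phi(m))$, it coincides with $\mathbb{Z}_l(\phi(m),\phi(m))$ for any local $Z$ realizing the single tangent vector $T_m\phi(X(m))$, so the expression equals $i^{*}_D(L_{\mathbb{Z}_l}S)(\phi(m))$ and vanishes because $S$ is a potential function. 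As only the value $T_m\phi(X(m))$ enters, this covers every $X$; repeating the argument for the right lift and invoking \eqref{eqn: Lie der w.r.t. Xr of S} shows $\Phi^*S$ is a potential function.

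For the converse I would run the same identity in reverse. Given $Z\in\mathfrak{X}(\mathcal{N})$, I would produce a $\phi$-related $X\in\mathfrak{X}(\mathcal{M})$, apply the displayed identity, and use that $i^{*}_D(L_{\mathbb{X}_l}\Phi^*S)=0$ (as $\Phi^*S$ is a potential function) to get $\phi^*\bigl(i^{*}_D(L_{\mathbb{Z}_l}S)\bigr)=0$; surjectivity of $\phi$ then forces $i^{*}_D(L_{\mathbb{Z}_l}S)=0$, and the same for the right lift, giving that $S$ is a potential function. I expect the genuine obstacle to be the lifting step: producing an $X$ that is $\phi$-related to a prescribed $Z$ requires $Z(\phi(m))$ to lie in the image of $T_m\phi$, i.e. it needs the tangent maps of $\phi$ to be onto, not merely $\phi$ itself. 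I would therefore make explicit that here ``surjective'' must be read as surjective submersion (as is the case for the quotient-type maps of section \ref{sec: unfolding of quantum states}): the base-point surjectivity is precisely what lets one pass from $\phi^*g=0$ to $g=0$, while the submersion property is what makes the lift of $Z$ exist.
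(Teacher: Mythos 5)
Your argument has the same skeleton as the paper's proof: reduce the claim to the defining conditions \eqref{eqn: Lie der w.r.t. Xl of S} and \eqref{eqn: Lie der w.r.t. Xr of S}, and transport them through the identity $\Phi\circ i^{\mathcal{M}}_D=i^{\mathcal{N}}_D\circ\phi$ together with $\Phi^*(L_{\mathbb{Z}_l}S)=L_{\mathbb{X}_l}\Phi^*S$ for $\Phi$-related lifts. The difference is in how you handle the quantifiers, and on both counts your version is tighter than the one in the paper. For the forward implication the paper fixes a pair $(X,Z)$ with $X$ $\phi$-related to $Z$ and concludes $(i^{\mathcal{M}}_D)^*(L_{\mathbb{X}_l}\Phi^*S)=0$; but for a general smooth $\phi$ (e.g.\ a non-injective one) not every $X\in\mathfrak{X}(\mathcal{M})$ is $\phi$-related to \emph{any} $Z$, so as written the vanishing is only established for a subclass of vector fields. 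Your pointwise argument --- noting that $i^*_D(L_{\mathbb{X}_l}\Phi^*S)(m)$ depends only on $T_m\phi(X(m))$, which can always be realized as the value at $\phi(m)$ of some vector field $Z$ on $\mathcal{N}$ --- closes that gap and covers all $X$. For the converse you correctly locate the obstruction that the paper's proof also meets silently: writing $\Phi^*(L_{\mathbb{Z}_l}S)=L_{\mathbb{X}_l}\Phi^*S$ presupposes a lift of the prescribed $Z$ to a $\phi$-related $X$, which needs the tangent maps of $\phi$ to be onto, not merely $\phi$ itself; reading ``surjective'' as ``surjective submersion'' is the natural hypothesis under which both your proof and the paper's are complete, and it is the situation for the unfolding map $\pi$ restricted to $\mathcal{M}_{<}(\hh)$ where the converse is actually invoked. (Bare surjectivity could still be rescued by a Sard-plus-density argument, but neither you nor the paper takes that route.) In short: your proof is correct and amounts to a more careful rendering of the paper's own argument.
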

\begin{proof}
Let $X$ be a vector field on $\mathcal{M}$ and $Z$ a vector field on $\mathcal{N}$ which is $\phi$-related to $X$. Being $S$ a potential function (on $\mathcal{N} \times \mathcal{N}$) we have
\be 
(i^{\mathcal{N}}_D)^*(L_{\mathbb{Z}_l} S) = 0,
\ee
then
\be\label{eqn: S under smooth maps 1}
\begin{split}
0 &=  (i^{\mathcal{N}}_D \circ \phi)^*(L_{\mathbb{Z}_l} S)   = (\Phi \circ i^{\mathcal{M}}_D)^*(L_{\mathbb{Z}_l} S) = (i^{ \mathcal{M} }_D )^* \circ \Phi^* (L_{\mathbb{Z}_l} S) =  (i^{ \mathcal{M} }_D )^* ( L_{\mathbb{X}_l} 	\Phi^* S ),
\end{split}
\ee
where we used \eqref{eqn: rel imm with phi}, proposition \ref{prop: X and Z phi-related implies Xl and Zl Phi-related} and the fact that 
\be\label{eqn: lie derivative and pushforward}
\Phi^* (L_{\mathbb{Z}_l} S) =  L_{\mathbb{X}_l} 	\Phi^* S  ,
\ee
because $\mathbb{X}_l$ and $\mathbb{Z}_l$ are $\Phi$-related according to proposition \ref{prop: X and Z phi-related implies Xl and Zl Phi-related} (see proposition 4.2.8 in \cite{A-M-R-1988} for a proof of the validity of equation \eqref{eqn: lie derivative and pushforward}).
An analogous computation can be carried out to conclude that $(i^{ \mathcal{M} }_D )^* ( L_{\mathbb{X}_r} 	\Phi^* S ) = 0$, thus $\Phi^* S$ is a potential function on $\mathcal{M} \times \mathcal{M}$.

Assume now that $\Phi^* S$ is a potential function on $\mathcal{M} \times \mathcal{M}$ so that
\be
(i^{ \mathcal{M} }_D )^* \circ \Phi^* (L_{\mathbb{Z}_l} S)= (i^{ \mathcal{M} }_D )^* ( L_{\mathbb{X}_l} 	\Phi^* S ) =0.
\ee
because it always holds equation \eqref{eqn: lie derivative and pushforward}.
Now, we can essentially follow equation \eqref{eqn: S under smooth maps 1} in reverse order  to obtain
\be\label{eqn: S under smooth maps 2}
\begin{split}
0 &=   (i^{ \mathcal{M} }_D )^* \circ \Phi^* (L_{\mathbb{Z}_l} S) =  (\Phi \circ i^{\mathcal{M}}_D)^*(L_{\mathbb{Z}_l} S)  = (i^{\mathcal{N}}_D \circ \phi)^*(L_{\mathbb{Z}_l} S)  =\phi^* \circ (i^{\mathcal{N}}_D)^*(L_{\mathbb{Z}_l} S) .
\end{split}
\ee
If $\phi$, and thus $\Phi$, are surjective, equation \eqref{eqn: S under smooth maps 2} implies that  $(i^{\mathcal{N}}_D)^*(L_{\mathbb{Z}_l} S) =0$.
A similar proof holds for $(i^{\mathcal{N}}_D)^*(L_{\mathbb{Z}_r} S) $, and we thus conclude that $S$ is a potential function on $\mathcal{N}\times \mathcal{N}$.
\end{proof}

\begin{proposition} \label{prop: g under smooth maps}
Let $\phi$ be a smooth map from $\mathcal{M}$ to $\mathcal{N}$ and $\Phi$ defined as in \eqref{def: Phi}. Let  $S$ be a potential function on $\mathcal{N} \times \mathcal{N}$ so that $\Phi^* S$ is a potential function because of proposition \ref{prop: S under smooth maps}.
Let $g_\mathcal{N}$ and $g_{\mathcal{M}}$ be the  tensor fields  extracted from $S$ and $\Phi^{*}S$, respectively, following the procedure outlined in proposition \ref{prop: properties of g}.
Then, it holds that  $g_{\mathcal{M}}=\phi^{*}g_{\mathcal{N}}$.
\end{proposition}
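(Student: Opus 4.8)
The plan is to show the identity $g_{\mathcal{M}} = \phi^{*}g_{\mathcal{N}}$ by evaluating both tensors on vector fields, exploiting the intertwining relations between lifts and the map $\Phi$ established in the previous propositions. Recall that, by proposition \ref{prop: properties of g}, the tensor extracted from a potential function $T$ is $g(X,Y) = (i_{D})^{*}\!\left(L_{\mathbb{X}_{l}}L_{\mathbb{Y}_{l}}T\right)$, so that $g_{\mathcal{M}}$ is computed from $\Phi^{*}S$. I would first treat the favourable case in which $X,Y \in \mathfrak{X}(\mathcal{M})$ are $\phi$-related to vector fields $Z,W \in \mathfrak{X}(\mathcal{N})$, respectively, and only afterwards reduce the general statement to this case.

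For such $\phi$-related fields, proposition \ref{prop: X and Z phi-related implies Xl and Zl Phi-related} guarantees that $\mathbb{X}_{l}$ is $\Phi$-related to $\mathbb{Z}_{l}$ and $\mathbb{Y}_{l}$ to $\mathbb{W}_{l}$. Applying the commutation rule \eqref{eqn: lie derivative and pushforward} twice then yields
\be
\Phi^{*}\left(L_{\mathbb{Z}_{l}}L_{\mathbb{W}_{l}}S\right) = L_{\mathbb{X}_{l}}\,\Phi^{*}\left(L_{\mathbb{W}_{l}}S\right) = L_{\mathbb{X}_{l}}L_{\mathbb{Y}_{l}}\,\Phi^{*}S .
\ee
Taking $(i^{\mathcal{M}}_{D})^{*}$ of both sides and using $\Phi\circ i^{\mathcal{M}}_{D} = i^{\mathcal{N}}_{D}\circ\phi$ from \eqref{eqn: rel imm with phi}, which gives $(i^{\mathcal{M}}_{D})^{*}\circ\Phi^{*} = \phi^{*}\circ (i^{\mathcal{N}}_{D})^{*}$, I obtain
\be
g_{\mathcal{M}}(X,Y) = (i^{\mathcal{M}}_{D})^{*}\left(L_{\mathbb{X}_{l}}L_{\mathbb{Y}_{l}}\Phi^{*}S\right) = \phi^{*}\,(i^{\mathcal{N}}_{D})^{*}\left(L_{\mathbb{Z}_{l}}L_{\mathbb{W}_{l}}S\right) = \phi^{*}\left(g_{\mathcal{N}}(Z,W)\right).
\ee
Since $X$ is $\phi$-related to $Z$ one has $T_{m}\phi(X(m)) = Z(\phi(m))$, and likewise for $Y,W$, so that by the very definition of the pullback of a covariant tensor $\phi^{*}\left(g_{\mathcal{N}}(Z,W)\right) = (\phi^{*}g_{\mathcal{N}})(X,Y)$, closing the argument in this case.

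The main obstacle is that a generic vector field on $\mathcal{M}$ need not be $\phi$-related to any vector field on $\mathcal{N}$ — this already fails at points where $\phi$ has non-constant rank — so the computation above does not by itself span all tangent directions. Since $g_{\mathcal{M}} = \phi^{*}g_{\mathcal{N}}$ is a pointwise identity of $(0,2)$-tensors, I would resolve this by passing to product coordinate charts $(u^{a},v^{b})$ on $\mathcal{M}\times\mathcal{M}$ and $(x^{i},y^{j})$ on $\mathcal{N}\times\mathcal{N}$ adapted to $\phi$, writing $\phi$ as $x^{i}=\phi^{i}(u)$ and $(\Phi^{*}S)(u,v) = S(\phi(u),\phi(v))$, and then differentiating. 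The chain rule gives
\be
\left(\frac{\partial^{2}(\Phi^{*}S)}{\partial u^{a}\partial u^{b}}\right)_{u=v} = \frac{\partial\phi^{i}}{\partial u^{a}}\frac{\partial\phi^{k}}{\partial u^{b}}\left(\frac{\partial^{2}S}{\partial x^{i}\partial x^{k}}\right)_{x=y} + \frac{\partial^{2}\phi^{i}}{\partial u^{a}\partial u^{b}}\left(\frac{\partial S}{\partial x^{i}}\right)_{x=y},
\ee
and the potential-function condition \eqref{eqn: Lie der w.r.t. Xl of S}, which forces $\left(\partial S/\partial x^{i}\right)_{x=y} = 0$, kills the second, unwanted term involving the Hessian of $\phi$. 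Comparing the surviving term with the local expression \eqref{eqn: local expression of 2-tensor} of $g_{\mathcal{N}}$ identifies the components of $g_{\mathcal{M}}$ with those of $\phi^{*}g_{\mathcal{N}}$ at every point, degenerate ones included, completing the proof.
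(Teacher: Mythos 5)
Your proof is correct, and its decisive step --- the coordinate computation at the end --- is essentially the paper's own proof, which likewise consists of a direct check in product charts followed by a comparison with the local expression \eqref{eqn: local expression of 2-tensor} and the definition of the pullback. Two remarks on where you diverge. First, your opening argument via $\Phi$-related lifts and the naturality of the Lie derivative is valid, but, as you yourself observe, it only covers pairs of $\phi$-related fields; since your closing coordinate computation is self-contained and already establishes the pointwise identity of the two $(0,2)$-tensors everywhere, that first part ends up being superfluous rather than a reduction step. Second, the paper works with the mixed representative $g_{lr}$, i.e.\ with $-\bigl(\partial^{2}(\Phi^{*}S)/\partial x^{j}\partial y^{k}\bigr)_{\mathbf{x}=\mathbf{y}}$: because the two derivatives fall on different slots of $S(\phi(x),\phi(y))$, the chain rule produces no Hessian of $\phi$ at all, and the potential-function hypothesis is not needed at that stage. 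Your choice of the pure second derivative $\bigl(\partial^{2}(\Phi^{*}S)/\partial u^{a}\partial u^{b}\bigr)_{u=v}$ does generate the extra term $\bigl(\partial^{2}\phi^{i}/\partial u^{a}\partial u^{b}\bigr)\bigl(\partial S/\partial x^{i}\bigr)_{x=y}$, and you correctly kill it by observing that \eqref{eqn: Lie der w.r.t. Xl of S} forces $\bigl(\partial S/\partial x^{i}\bigr)_{x=y}=0$; item 4 of proposition \ref{prop: properties of g} guarantees that the two representatives coincide for a potential function, so either route is legitimate, the paper's being marginally cleaner and yours making explicit where the potential-function hypothesis enters.
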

\begin{proof}

The proof consists of a direct  check in coordinates.
First of all, we introduce a product coordinate chart $(x^{j},y^{k})$ in $\mathcal{M}\times\mathcal{M}$, and a product coordinate chart $(z^{l},w^{r})$ in $\mathcal{N}\times\mathcal{N}$ following what we did right before proposition \ref{prop: properties of lifted fields}.
Then, equation \eqref{eqn: local expression of 2-tensor} implies that
\be\label{eqn: smooth maps and covariant tensors 1}
\begin{split}
g_{\mathcal{M}}&=-\left(\frac{\partial^{2} \Phi^{*}S}{\partial x^{j}\partial y^{k}}\right)_{\mathbf{x}=\mathbf{y}}\mathrm{d}x^{j}\otimes\mathrm{d}x^{k}=\\
&=- \left(\frac{\partial^{2} S}{\partial z^{l}\partial w^{r}}\right)_{\mathbf{z}=\mathbf{w}=\varphi(\mathbf{x})} \frac{\partial\varphi^{l}}{\partial x^{j}}\,\frac{\partial\varphi^{r}}{\partial x^{k}} \,\mathrm{d}x^{j}\otimes\mathrm{d}x^{k},
\end{split}
\ee
and
\be\label{eqn: smooth maps and covariant tensors 2}
\begin{split}
g_{\mathcal{N}}&=- \left(\frac{\partial^{2} S}{\partial z^{l}\partial w^{r}}\right)_{\mathbf{z}=\mathbf{w} }    \mathrm{d}z^{l}\otimes\mathrm{d}z^{r}.
\end{split}
\ee
Then, the very definition of the pullback of $g_{\mathcal{N}}$ through $\varphi$  implies that
\be
\begin{split}
\varphi^{*}g_{\mathcal{N}}=\left(\left(g_{\mathcal{N}}\right)_{lr}\right)_{\mathbf{z}=\varphi(\mathbf{x})}\frac{\partial\varphi^{l}}{\partial x^{j}}\,\frac{\partial\varphi^{r}}{\partial x^{k}} \,\mathrm{d}x^{j}\otimes\mathrm{d}x^{k},
\end{split}
\ee
which, together with equation \eqref{eqn: smooth maps and covariant tensors 1} and \eqref{eqn: smooth maps and covariant tensors 2}, gives us
\be
g_{\mathcal{M}}=\phi^{*}g_{\mathcal{N}}
\ee
as claimed.

\end{proof}

Proposition \ref{prop: S under smooth maps} and proposition \ref{prop: g under smooth maps} encode  important information on the relation between the symmetry properties of $S$ and those of the associated  tensor field $g$. 
Specifically, let $G$ be a Lie group acting on $\mathcal{M}$, let us denote with $\phi_{\mathrm{g}}$ the diffeomorphism of $\mathcal{M}$ induced by $\mathrm{g} \in G$; notice that the map $\Phi_{\mathrm{g}}$ as defined in \eqref{def: Phi} realises an action of $G$ on $\mathcal{M} \times \mathcal{M}$. 
If $G$ is a symmetry group of $S$ in the sense that 
\begin{equation}
	S = \Phi_{\mathrm{g}}^* S \; \forall \,\mathrm{g}\in G
\end{equation}
then  it holds 
\be
\phi^*_{\mathrm{g}} g = g \; \forall \,\mathrm{g}\in G,
\ee
that is, $G$ is a symmetry group of the tensor field $g$.
Indeed,  proposition \ref{prop: g under smooth maps} implies that
\be
\phi^*_{\mathrm{g}} g (X,Y)=  i^*_{\mathcal{M}}(L_{_{\mathbb{X}_l}}L_{\mathbb{Y}_l} \Phi^*_{\mathrm{g}} S  )
\ee
which is in turn equal to $g (X,Y)$ for all vector fields $X,Y$ on $\mathcal{M}$ because $G$ is a symmetry group of $S$.


\begin{proposition}
Let $j=1,2$, $S_{j}$ be a divergence function on $\mathcal{M}_{j}\times\mathcal{M}_{j}$, and $g_{j}$ its associated covariant tensor.
Let $\phi\colon\mathcal{M}_{1}\ra\mathcal{M}_{2}$ be a smooth map, and let $\Phi\colon\mathcal{M}_{1}\times\mathcal{M}_{1}\ra\mathcal{M}_{2}\times\mathcal{M}_{2}$ be as in equation \eqref{def: Phi}.
If 
\be
S_{1} \geq  \Phi^{*}S_{2},
\ee
then 
\be
g_{1}(X,X)\geq \phi^{*}g_{2}(X,X)
\ee
for all vector fields $X$ on $\mathcal{M}_{1}$.
\end{proposition}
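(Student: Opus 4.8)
The plan is to deduce the inequality from the machinery already set up, by applying the positive-semidefiniteness result of Proposition~\ref{prop: divergence functions lead to positive semidefinite tensors} to the \emph{difference} $D := S_{1} - \Phi^{*}S_{2}$ rather than to $S_{1}$ and $S_{2}$ separately.

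First I would verify that $\Phi^{*}S_{2}$ is itself a divergence function on $\mathcal{M}_{1}\times\mathcal{M}_{1}$: it is smooth as a composition of smooth maps, it is non-negative because $S_{2}\geq 0$, and it vanishes on the diagonal since $\Phi^{*}S_{2}(m,m)=S_{2}(\phi(m),\phi(m))=0$. Setting $D:=S_{1}-\Phi^{*}S_{2}$, the hypothesis $S_{1}\geq\Phi^{*}S_{2}$ then gives $D\geq 0$, while $D(m,m)=S_{1}(m,m)-\Phi^{*}S_{2}(m,m)=0$; hence $D$ is a divergence function, and in particular a potential function, so that the extraction of Proposition~\ref{prop: properties of g} applies to it and yields a well-defined symmetric tensor $g_{D}$.

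Next I would exploit the linearity of the extraction procedure in its argument: the local expression \eqref{eqn: local expression of 2-tensor} shows that the tensor associated with a potential function depends linearly on that function. Therefore $g_{D}=g_{1}-g_{\Phi^{*}S_{2}}$, where $g_{\Phi^{*}S_{2}}$ denotes the tensor extracted from $\Phi^{*}S_{2}$. Since $S_{2}$ is a potential function, Proposition~\ref{prop: g under smooth maps} applies and gives $g_{\Phi^{*}S_{2}}=\phi^{*}g_{2}$, so that $g_{D}=g_{1}-\phi^{*}g_{2}$.

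Finally, because $D$ is a divergence function, Proposition~\ref{prop: divergence functions lead to positive semidefinite tensors} guarantees that $g_{D}$ is positive semidefinite, i.e. $(g_{1}-\phi^{*}g_{2})(X,X)\geq 0$ for every vector field $X$ on $\mathcal{M}_{1}$, which is exactly the assertion $g_{1}(X,X)\geq\phi^{*}g_{2}(X,X)$. I do not expect a genuine obstacle here; the only points needing care are checking the hypotheses of the invoked propositions and stating the linearity remark. A fully self-contained variant that sidesteps even that remark is to set $F(t):=S_{1}(\phi^{X}_{t}(m),m)-S_{2}(\phi(\phi^{X}_{t}(m)),\phi(m))$, observe that $F\geq 0$ and $F(0)=0$ force $F''(0)\geq 0$, and identify $F''(0)$ with $(g_{1}-\phi^{*}g_{2})(X,X)(m)$ through the same second-derivative computation used in the proof of Proposition~\ref{prop: divergence functions lead to positive semidefinite tensors}.
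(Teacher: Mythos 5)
Your proposal is correct and follows essentially the same route as the paper: form the difference $S_{1}-\Phi^{*}S_{2}$, note it is a divergence function so its extracted tensor is positive semidefinite by Proposition \ref{prop: divergence functions lead to positive semidefinite tensors}, and identify that tensor with $g_{1}-\phi^{*}g_{2}$ via linearity of the extraction and Proposition \ref{prop: g under smooth maps}. The extra details you supply (verifying that $\Phi^{*}S_{2}$ is itself a divergence function, and making the linearity of the extraction explicit) are points the paper leaves implicit, but the argument is the same.
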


\begin{proof}
The function
\be
S_{12}^{\Phi}:=S_{1} - \Phi^{*}S_{2}
\ee
is clearly a divergence function, and thus proposition \ref{prop: divergence functions lead to positive semidefinite tensors} implies that the covariant tensor $g_{12}^{\phi}$ extracted from $S_{12}^{\Phi}$ is positive semidefinite.
Then, proposition \ref{prop: g under smooth maps} implies that
\be
g_{12}^{\phi} = g_{1 } - \phi^{*}g_{2},
\ee
and the fact that $g_{12}^{\phi}$ is positive semidefinite implies
\be
g_{1 }(X,X)\geq  \phi^{*}g_{2}(X,X)
\ee
for all vector fields $X$ on $\mathcal{M}_{1}$ as claimed.

\end{proof}

\section{Unfolding of quantum states, monotone metric tensors, and quantum divergence functions}\label{sec: unfolding of quantum states}

In this section we will introduce and discuss an unfolding procedure for quantum states which will be then  applied to study the family of monotone Riemannian metric tensors recalled in section \ref{sec: Cencov and Petz} and the family of relative $g$-entropies recalled in section \ref{sec: from rel entropies to mono metrics}.

The idea behind the unfolding procedure we will discuss finds its roots in the spectral theorem for self-adjoint linear operators on a Hilbert space.
This theorem states that, for any self-adjoint linear operator $\mathbf{a}$ on the finite-dimensional, complex Hilbert space $\hh$, there is an orthonormal basis  $\{ \ket{1}_{a}, \ket{2}_{a},...\ket{n}_{a}\}$ of $\hh$ such that 
\be
\mathbf{a}= \sum_{j} a^{j} \mathbf{e}_{jj}^{a} \,,
\ee
where $\mathbf{e}_{jk}^{a}=\ket{j}_{a}\bra{k}_{a}$, and $\vec{a}=(a^{1},...,a^{n})$ is a vector of real numbers,  the eigenvalues of $\mathbf{a}$, each appearing with its own algebraic multiplicity.
In particular, it is not hard to see that the vector of eigenvalues of a quantum state $\rho$ must be a probability vector because $\rho\geq 0$ and $\Tr_{\hh}(\rho)=1$.

Clearly, the representation given above depends on the basis $\{ \ket{1}_{a}, \ket{2}_{a},...\ket{n}_{a}\}$ which in turn depends on the given self-adjoint operator $\mathbf{a}$.
However, since orthonormal bases in $\hh$ are mapped into one another by means of unitary operators, we may always fix an orthonormal basis $\{ \ket{1}, \ket{2},...\ket{n}\}$ and then find a unitary operator $\mathbf{U}_{\mathbf{a}}$ such that
\be\label{eqn: spectral decomposition}
\mathbf{a}= \mathbf{U}_{\mathbf{a}}\left(\sum_{j} a^{j} \,\mathbf{e}_{jj} \right)\mathbf{U}_{\mathbf{a}}^{\dagger},
\ee
with $\mathbf{e}_{jk} = \ket{j}\bra{k} $.
Once the basis $\{ \ket{1}, \ket{2},...\ket{n}\}$ is fixed, equation \eqref{eqn: spectral decomposition} allows us to write, in a non unique fashion, any self-adjoint linear operator on $\hh$ in terms of a real n-vector and a unitary operator on $\hh$.
In particular, for a quantum state $\rho$, we obtain the expression
\be\label{eqn: diagonal state}
	\rho = \mathbf{U} \varrho \mathbf{U}^\dagger \qquad \mbox{ with } \qquad \varrho = \sum_{j} p^{j}\,\mathbf{e}_{jj},
\ee 
and since $\vec{p}=(p_{1},...,p_{n})$  must be a probability vector in $\mathbb{R}^{n}$, we may conclude that quantum states can be thought of as a noncommutative counterpart of probability vectors.

Equation \eqref{eqn: diagonal state} also tells us that unitary operators and probability vectors are enough to redundantly characterize quantum states, or, stated differently, that  there is a continuous surjective map $\pi$ from the space
\be\label{eqn: unfolding space}
\overline{\mathcal{M}}(\hh):=\SUh\times \overline{\Delta_{n}},
\ee 
where $\overline{\Delta_{n}}$ is the  n-simplex, to the space $\overline{\stsp}(\hh)$ of quantum states on $\hh$  given by
\be\label{eqn: unfolding map}
\pi(\mathbf{U},\vec{p})\,:=\,\mathbf{U}\,\varrho\,\mathbf{U},
\ee
with $\varrho$ as in equation \eqref{eqn: diagonal state}.

The existence of the map $\pi$ is what we refer to as an unfolding procedure for quantum states,  in the sense that the space of quantum states is unfolded into the product space $\overline{\mathcal{M}}(\hh)$ whose elements may be thought of a kind of over-complete parametrization of quantum states.

The fact that the unfolding space  $\overline{\mathcal{M}}(\hh)$   contains the  classical probability simplex seems to point out that $\overline{\mathcal{M}}(\hh)$ could be the appropriate space in which to investigate the relation between some classical and quantum properties.
Specifically, since we are interested in the Riemannian metric properties characteristic of quantum information geometry, we will investigate what happens to the monotone Riemannian  metric tensors classified by Petz when we pull them back to  a suitable submanifold of  the unfolding space.

We will see in subsection \ref{subsec: unfolded monotone metric tensors} that every monotone Riemannian metric tensor gets splitted into the sum of a purely classical contribution, which coincides with the classical Fisher-Rao metric tensor, and a quantum contribution that may be written as a weighted sum of the Cartan-Killing metric tensor on $\SUh$ where the weights depend on the classical part (and vanish for a suitable Cartan subalgebra).
We believe that this neat decomposition enforces the idea that monotone Riemannian metric tensors are truly a quantum generalisation of the Fisher-Rao metric tensor.

Then, in subsection \ref{subsec: unfolding rel entropies and mono metrics}, we will  pull-back  the relative g-entropies   to be two-point functions on a suitable submanifold of the unfolding space.  Then we will compute the resulting covariant tensor associated with them, by  following the intrinsic reformulation of  what is usually done in information geometry, as described  in section \ref{sec: from 2-functions to covariant tensors}.
We thus obtain  an alternative proof of the results presented in \cite{L-R-1999} relating relative g-entropies with monotone Riemannian metric tensors.

Finally, in subsection \ref{subsec: universal geodesics}, we exploit the unfolded geometry to characterise a family of geodesics which are universal for the family of monotone Riemannian metric tensor in the sense that they are common to all of them.
It turns out that these geodesics are precisely the projections of the geodesics of the Fisher-Rao metric tensor through the unfolding map. Since all  monotone Riemannian metric tensors share the same Fisher-Rao component, this explains why these universal geodesics exist.

Before  proceeding further, we need to briefly investigate the geometry of the unfolding space $\overline{\mathcal{M}}(\hh)$ introduced in equation \eqref{eqn: unfolding space}. 
The first thing we note is that $\overline{\mathcal{M}}(\hh)$ is an unfolding space for the whole space of quantum states $\overline{\stsp(\hh)}$, while the natural setting of information geometry is  the smooth manifold $\stsp(\hh)$ of invertible quantum states.
Therefore, we are led to consider the restriction of $\pi$ to the space
\be\label{eqn: unfolding space 2}
\mathcal{M}(\hh):=\SUh\times\Delta_{n},
\ee
where $\Delta_{n}$ is the open interior of the n-simplex.
With an  abuse of notation, we denote the restriction to $\mathcal{M}(\hh)$ of the map $\pi$ introduced in equation \eqref{eqn: unfolding map} again with $\pi$.
This map is a smooth surjective map, and   the following proposition characterises the kernel of its tangent map at each point.

\begin{proposition}\label{prop: the unfolding map of the invertible density matrices is a surjective differentiable map}
The map $\pi \colon\mathcal{M}(\hh)\rightarrow\stsp(\hh)$ is differentiable, and the kernel of its tangent map at $(\mathbf{U}\,,\vec{p})\in\mathcal{M}_{n}$ is given by $(\imath\mathbf{H}\,,\vec{0})$, where $\mathbf{H}$ is a traceless, self-adjoint operator on $\hh$ $[\mathbf{H}\,,\varrho]=\mathbf{0}$.

\begin{proof*} 
A tangent vector $V_{(\mathbf{U},\vec{p})}$ at $(\mathbf{U},\vec{p})$ can always be written as
\be
V_{(\mathbf{U},\vec{p})}=(\imath\mathbf{H},\vec{a})
\ee
where $\mathbf{H}\in\bh$ is self-adjoint and traceless, and $\vec{a}$ is such that $\sum_{j}a^{j}=0$.
Every such tangent vector can be written as the tangent vector at $t=0$ of the   curve $\gamma_{t}$ on $\mathcal{M}(\hh)$ given by
\be
\gamma_{t}(\mathbf{U}\,,\vec{p})=(\mathbf{U}\exp(\imath t\mathbf{H}),\vec{p}_{t})\,
\ee
where $\vec{p}_{t}$ is any curve in the interior of the  simplex $\Delta_{n}$ starting at $\vec{p}_{0}=\vec{p}$ and such that $\left.\frac{\mathrm{d} \vec{p}_{t}}{\mathrm{d}t}\right|_{t=0}=\vec{a}$.
Then, the tangent map of $\pi$ at $(\mathbf{U}\,,\vec{p})$ is then
\be
\begin{split}\label{eqn:tangent_of_the_projection_map_pi}
T_{(\mathbf{U},\vec{p})}\pi\left(V_{(\mathbf{U},\vec{p})}\right)&=\frac{\mathrm{d}}{\mathrm{d} t}\left( \pi\left(\gamma(\mathbf{U},\vec{p})\right)\right)_{t=0}= \\
&=\frac{\mathrm{d}}{\mathrm{d}t}\left(\mathbf{U}\exp(\imath t\mathbf{H})\,\rho_{0}(t),\exp(-\imath t\mathbf{H})\mathbf{U}^{\dagger}\right)_{t=0}=\\
&=\mathbf{U}\,\left(\imath\,[\mathbf{H}\,,\varrho] + \sum_{j=1}^{n}a^{j}\mathbf{e}_{jj}\right)\,\mathbf{U}^{\dagger}\,,
\end{split}
\ee
from which it follows that the tangent vector $V_{(\mathbf{U},\vec{p})}$  is sent to the zero tangent vector $\mathbf{0}$ at $\rho=\pi(\mathbf{U},\vec{p})$  if and only if $\vec{a}=\vec{0}$ and $[\mathbf{H}\,,\varrho]=\mathbf{0}$ as claimed.
\end{proof*}

\end{proposition}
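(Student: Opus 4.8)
The plan is to verify smoothness first and then compute the differential of $\pi$ explicitly by differentiating a conveniently chosen curve. Smoothness is immediate, since the assignment $\vec{p}\mapsto\varrho=\sum_{j} p^{j}\mathbf{e}_{jj}$ is linear, multiplication in $\SUh$ is smooth, and conjugation $\mathbf{U}\mapsto\mathbf{U}(\cdot)\mathbf{U}^{\dagger}$ is polynomial in the matrix entries; hence $\pi$ is a composition of smooth maps. The real content is the description of $\ker T_{(\mathbf{U},\vec{p})}\pi$, and for this I would first write a general tangent vector at $(\mathbf{U},\vec{p})$ as a pair $(\imath\mathbf{H},\vec{a})$, where $\imath\mathbf{H}\in\suh$ (so $\mathbf{H}$ is traceless and self-adjoint) encodes a tangent vector to $\SUh$ at $\mathbf{U}$ through the curve $t\mapsto\mathbf{U}\exp(\imath t\mathbf{H})$, and $\vec{a}$ with $\sum_{j}a^{j}=0$ encodes a tangent vector to $\Delta_{n}$ at $\vec{p}$.

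Next I would realise this tangent vector as the velocity at $t=0$ of the curve $\gamma_{t}=(\mathbf{U}\exp(\imath t\mathbf{H}),\vec{p}_{t})$, where $\vec{p}_{t}$ is any curve in $\Delta_{n}$ through $\vec{p}$ with velocity $\vec{a}$. Differentiating $\pi(\gamma_{t})=\mathbf{U}\exp(\imath t\mathbf{H})\,\varrho_{t}\,\exp(-\imath t\mathbf{H})\,\mathbf{U}^{\dagger}$ at $t=0$ by the Leibniz rule separates the result into a contribution from the unitary factor, which produces the commutator $\imath[\mathbf{H},\varrho]$, and a contribution from the probability factor, which produces $\sum_{j}a^{j}\mathbf{e}_{jj}$, giving
\be
T_{(\mathbf{U},\vec{p})}\pi(\imath\mathbf{H},\vec{a})=\mathbf{U}\Bigl(\imath[\mathbf{H},\varrho]+\sum_{j=1}^{n}a^{j}\mathbf{e}_{jj}\Bigr)\mathbf{U}^{\dagger}.
\ee

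The decisive step is to read off when this image vanishes. Because conjugation by $\mathbf{U}$ is a linear isomorphism, the kernel condition reduces to $\imath[\mathbf{H},\varrho]+\sum_{j}a^{j}\mathbf{e}_{jj}=\mathbf{0}$, and here I would exploit that $\varrho$ is diagonal in the fixed basis $\{\ket{j}\}$. A one-line index computation gives $[\mathbf{H},\varrho]_{jk}=H_{jk}(p^{k}-p^{j})$, so the commutator has identically vanishing diagonal, whereas $\sum_{j}a^{j}\mathbf{e}_{jj}$ is purely diagonal; the two terms lie in complementary subspaces and must vanish separately. This forces $\vec{a}=\vec{0}$ together with $[\mathbf{H},\varrho]=\mathbf{0}$, which is precisely the asserted characterisation of the kernel. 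The only subtlety — the main, and rather mild, obstacle — is that $[\mathbf{H},\varrho]=\mathbf{0}$ does not make $\mathbf{H}$ diagonal when $\vec{p}$ has repeated entries; the relation $H_{jk}(p^{k}-p^{j})=0$ only annihilates the matrix elements of $\mathbf{H}$ linking distinct eigenvalues, so the kernel is parametrised by the self-adjoint operators commuting with $\varrho$, whose dimension depends on the degeneracy pattern of $\vec{p}$.
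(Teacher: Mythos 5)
Your proposal is correct and follows essentially the same route as the paper: realising the tangent vector $(\imath\mathbf{H},\vec{a})$ as the velocity of the curve $(\mathbf{U}\exp(\imath t\mathbf{H}),\vec{p}_{t})$, differentiating $\pi$ along it, and reading off the kernel from $\mathbf{U}\bigl(\imath[\mathbf{H},\varrho]+\sum_{j}a^{j}\mathbf{e}_{jj}\bigr)\mathbf{U}^{\dagger}$. You in fact make explicit a step the paper leaves implicit, namely that the commutator is purely off-diagonal in the eigenbasis of $\varrho$ while $\sum_{j}a^{j}\mathbf{e}_{jj}$ is purely diagonal, so the two contributions must vanish separately.
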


From proposition \ref{prop: the unfolding map of the invertible density matrices is a surjective differentiable map} it follows that the fibres of the projection $\pi$ do not share the same dimension, and this indeed reflects the fact that the isotropy subgroup of a given quantum state  with respect to  the action of the unitary group depends on the degeneracy of the eigenvalues of said quantum state.
In particular, for  the \emph{maximally mixed state} $\rho_{mm}$, i.e., the state which is proportional to the identity in $\bh$, the isotropy group becomes the whole special unitary group. In particular, every point of $\mathcal{M}(\mathcal{H})$ of the kind $(\mathbf{U},\vec{p}_{1/n})$, with $\vec{p}_{1/n} = (1/n,1/n,\dots,1/n)$ and for all $\mathbf{U} \in \SUh$, has $\rho_{mm}$ as its image via $\pi$. Moreover, the image via the tangent map $T_{(\mathbf{U},\vec{p}_{1/n})} \pi$ of the tangent space $T_{(\mathbf{U},\vec{p}_{1/n})} \mathcal{M}$ has minimal dimension and coincides with the dimension of $\Delta_{n}$, as it is  clear from equation \eqref{eqn:tangent_of_the_projection_map_pi}.
In the following, for technical reasons, we will need to restrict our considerations to the open submanifold
\be\label{eqn: open submanifold of unfolded space}
\mathcal{M}_{<}(\hh)\,:=\,\SUh\times\mathscr{C}_{n},
\ee
where 
\be\label{def: Weyl chamber}
 \mathscr{C}_n:= \left\{ \mathbf{p} \in \Delta_n \,\, \big|  \,\, p^{1} <p^{2} < \dots < p^{n} \right\}.
\ee
This submanifold is particularly relevant because  the kernel of $T_{(\mathbf{U},\vec{p})}\pi$ at each $(\mathbf{U},\vec{p})\in\mathcal{M}_{<}(\hh)$ is minimal.
With an  abuse of notation, we will denote the restriction of $\pi$ to $\mathcal{M}_{<}(\hh)$ again with $\pi$.
Clearly, the image of $\mathcal{M}_{<}(\hh)$ through $\pi$ is the open, dense submanifold $\stsp_{md}(\hh)$ of faithful quantum states whose spectrum has the minimal degeneracy, that is, all those quantum states whose eigenvalues are all  distinct.

We will now introduce a global differential calculus on the manifold $\mathcal{M}(\hh)$ in terms of a basis of global differential one-forms. 
This basis will be of capital importance when we will look at monotone quantum metrics and quantum divergences from the unfolded perspective introduced here.
Since $\mathcal{M}_{<}(\hh)$ is an open submanifold of $\mathcal{M}(\hh)$, the same global differential calculus  applies to it.

The main idea is that of introducing a global basis of differential forms and of vector fields  that is adapted to the product structure of $\mathcal{M}(\hh)$.
In particular, since $\SUh$ is a Lie group, we naturally have a basis of left-invariant differential one-forms $\{\theta^{j}\}_{j=1,...,n^{2}-1}$ and a basis of globally defined left-invariant vector fields $\{X_{j}\}_{j=1,...,n^{2}-1}$ which is dual to $\{\theta^{j}\}_{j=1,...,n^{2}-1}$.
The Cartesian coordinates $p^{1},...,p^{n}$ of the ambient space $\mathbb{R}^{n}$ define smooth functions on $\Delta_{n}$.
Thus, by taking their differentials, we obtain a set of differential one-forms on $\Delta_{n}$, $\{\mathrm{d}p^{j}\}_{j=1,..,n}$ ,  which is an over-complete basis for the module of differential one-forms on $\Delta_{n}$.
This implies  that the $\mathrm{d}p^{j}$'s  are not functionally independent on $\Delta_{n}$ because
\be\label{eqn: functional dependence of dp}
\sum_{j=1}^{n}\,p^{j}=1\,\Longrightarrow\;\sum_{j=1}^{n}\mathrm{d}p^{j}=0.
\ee
Now, we take the pullback of the $\theta^{j}$'s through the canonical projection $\mathrm{pr}_{1}(\mathbf{U},\vec{p})=\mathbf{U}$ (denoted again by $\theta^{j}$), and the pullback of the $\mathrm{d}p^{j}$'s through  the canonical projection $\mathrm{pr}_{2}(\mathbf{U},\vec{p})=\vec{p}$ (denoted again by $\mathrm{d}p^{j}$), so to obtain an over-complete basis $\{\theta^{j},\mathrm{d}p^{k}\}$ of differential one-forms on $\mathcal{M}(\hh)$.
Clearly, the basis $\{\theta^{j},\mathrm{d}p^{k}\}$ depends on the choice of left-invariant forms on $\SUh$, and we will now introduce a  particularly relevant choice, that generalise   Pauli matrices $\sigma_{1}, \sigma_{2}$,  $\sigma_{3}$ 
\begin{equation}
	\begin{split}
		 \sigma_1 & = \ket{1}\bra{2} + \ket{2} \bra{1}, \\
		 \sigma_2 & = i \left(\ket{1}\bra{2} - \ket{2} \bra{1}\right)  \\
		 \sigma_3& = \ket{1}\bra{1} - \ket{2} \bra{2}   
	\end{split}
\end{equation}
 to higher dimensional cases.
To this end, we define the family $\{i \tau^1_{jk}, i\tau^2_{jk}, i\tau^3_{lm} \}$ of operators on $\hh$,  with $ 1 \le j < k \le n$  and $ 1 \le l = m-1 \le n-1 $, setting
\begin{equation}\label{eqn: basis generalized Pauli}
	\begin{split}
		\tau^1_{jk} & = \ket{j}\bra{k} + \ket{k} \bra{j}, \\
		\tau^2_{jk} & = i \left(\ket{j}\bra{k} - \ket{k} \bra{j}\right), \\
		\tau^3_{lm} & = \ket{l}\bra{l} - \ket{m} \bra{m}.
	\end{split}
\end{equation}
It is not hard to show that $\{i \tau^1_{jk}, i\tau^2_{jk}, i\tau^3_{lm} \}$ provides a basis of the Lie algebra $\mathfrak{su} (\mathcal{H})$ of $\SUh$.
Moreover, when $\mathrm{dim}(\hh)=2$, it is readily seen that $\tau_{12}^{1}=\sigma_{1}$, $\tau_{12}^{2}=\sigma_{2}$, and $\tau_{12}^{3}=\sigma_{3}$, thus clarifying the sense in which the family $\{ \tau^1_{jk}, \tau^2_{jk}, \tau^3_{lm} \}$ is a generalization of Pauli matrices.
Given the basis $\{ i\tau^1_{jk}, i\tau^2_{jk}, i\tau^3_{lm} \}$ of $\mathfrak{su} (\mathcal{H})$, we immediately obtain a basis $\{ \theta^1_{jk}, \theta^2_{jk}, \theta^3_{lm} \} $ of left-invariant 1-forms on $\SUh$ by means of the relation
\begin{equation} \label{eqn: decomposition Cartan form}
	U^\dagger d U =    \sum_{\substack{j = 1 \\ k > j}}^{n} i\tau^1_{jk} \, \theta_1^{jk} + i\tau^2_{jk} \, \theta_2^{jk} + \sum_{\substack{l=1 \\ m = l + 1}}^{n-1} i\tau^3_{lm} \, \theta_3^{lm},
\end{equation}
where $U^\dagger dU$ is the Cartan 1-form of $\SUh$.
Finally, a basis of left-invariant vector fields on $\SUh$ is obtained by considering the dual basis of  $\{ \theta^1_{jk}, \theta^2_{jk}, \theta^3_{lm} \} $.

\subsection{Unfolding of monotone metrics}\label{subsec: unfolded monotone metric tensors}

One of the main objectives of this work is to investigate how the unfolding perspective introduced before along the lines of \cite{C-DC-L-M-M-V-V-2018, M-M-V-V-2017} modifies our understanding of the geometrical structures of quantum information geometry.
In particular, here we will see what happens to the quantum monotone metric tensor $\GG^{\hh}_{f}$ reviewed in section \ref{sec: Cencov and Petz} when we pull it back to the unfolding space $\mathcal{M}(\hh)=\SUh\times\Delta_{n}$, with $\mathrm{dim}_{\mathbb{C}}(\hh)=n$.
The result will be a decomposition of the pullback into the sum of two contributions, one which is purely classical in the sense that it coincides with the Fisher-Rao metric tensor on $\Delta_{n}$, and the other which is exquisitely non-classical.

Recall that a tangent vector at $(\mathbf{U},\vec{p})\in\SUh\times\Delta_{n}$ can be written as $(   i\mathbf{H} ,\vec{a})$ where $\mathbf{H}\in\mathcal{B}_{sa}(\hh)$ and $\sum_{j}a_{j}=0$.
Then, the action of the tangent map $T_{(\mathbf{U},\vec{p})}\pi$ on  the tangent vector  $(   i\mathbf{H} ,\vec{a})$ reads
\be\label{eqn: pushforward of tangent vectors through pi}
T_{(\mathbf{U},\vec{p})}\pi(i\mathbf{H} ,\vec{a})\,=\,\Phi_{\mathbf{U}}\left(i[\mathbf{H},\varrho ] + \sum_{j=1}^{n}a^{j}\mathbf{e}_{jj} \right)\equiv\Phi_{\mathbf{U}}(V_{\mathbf{H}}^{\vec{a}}).
\ee
Therefore, by using equation \eqref{eqn: Petz metric 1},  we obtain 
\begin{eqnarray}\label{eqn: pullback monotone metric 1}
\left(\pi^{*}\GG^{\hh}_{f}\right)_{(\mathbf{U},\vec{p})}\left((i\mathbf{H} ,\vec{a}),(i\mathbf{K} ,\vec{b})\right)&= &\left(\GG^{\hh}_{f}\right)_{\pi(\mathbf{U},\vec{p})}\left(T_{(\mathbf{U},\vec{p})}\pi(i\mathbf{H} ,\vec{a}),T_{(\mathbf{U},\vec{p})}\pi(i\mathbf{K} ,\vec{b})\right) \nonumber \\
&=&\left(\GG^{\hh}_{f}\right)_{\pi(\mathbf{U},\vec{p})}\left(\Phi_{\mathbf{U}}(V_{\mathbf{H}}^{\vec{a}}),\Phi_{\mathbf{U}}(V_{\mathbf{K}}^{\vec{b}})\right) \nonumber\\
&=& \langle \Phi_{\mathbf{U}}(V_{\mathbf{H}}^{\vec{a}}),T_{\Phi_{\mathbf{U}}(\varrho )}^{f}\,\Phi_{\mathbf{U}}(V_{\mathbf{K}}^{\vec{b}}) \rangle_{\hh}^{HS} \nonumber \\
& =&\langle V_{\mathbf{H}}^{\vec{a}},\Phi_{\mathbf{U}}^{\dagger}T_{\Phi_{\mathbf{U}}(\varrho )}^{f}\,\Phi_{\mathbf{U}}(V_{\mathbf{K}}^{\vec{b}}) \rangle_{\hh}^{HS}  =\langle V_{\mathbf{H}}^{\vec{a}}, T_{ \varrho  }^{f} (V_{\mathbf{K}}^{\vec{b}}) \rangle_{\hh}^{HS}
\end{eqnarray}
where we used unitary invariance in the last step.
Recalling equation \eqref{eqn: eigendecomposition of Petz superoperator}, we have
\be\label{eqn: pullback monotone metric 2}
\begin{split}
T_{ \varrho  }^{f}\, (V_{\mathbf{K}}^{\vec{b}})&=\sum_{j,k=1}^{n}\,\left(p^{k} \,f\left(\frac{p^{j} }{p^{k} }\right)\right)^{-1}\,E_{kj} \left(i[\mathbf{K},\varrho ] + \sum_{l=1}^{n}a^{l}\mathbf{e}_{ll} \right) \\
&=\sum_{j,k=1}^{n}\,i\,K^{jk} \left(p^{k} -p^{j}  \right)\,\left(p^{k} \,f\left(\frac{p^{j} }{p^{k} }\right)\right)^{-1}\,\mathbf{e}_{jk}\,+\,\sum_{j=1}^{n}\,\frac{ b^{j}}{p^{j} }\,\mathbf{e}_{jj} , 
\end{split}
\ee
where $K^{jk} $ is the $(j,k)$-th matrix element of $\mathbf{K}$ with respect to the basis of eigenvectors of $\varrho$.
Inserting equation \eqref{eqn: pullback monotone metric 2} into equation \eqref{eqn: pullback monotone metric 1}  we finally obtain
\be\label{eqn: unfolded Petz metric 1}
\begin{split}
\left(\pi^{*}\GG^{\hh}_{f}\right)_{(\mathbf{U},\vec{p})}\left((i\mathbf{H} ,\vec{a}),(i\mathbf{K} ,\vec{b})\right)&=\sum_{j,k=1}^{n}\, \,H^{kj}  K^{jk} \left(p^{k} -p^{j}  \right)^{2}\,\left(p^{k} \,f\left(\frac{p^{j} }{p^{k} }\right)\right)^{-1} \,+\,\sum_{j=1}^{n}\,\frac{a^{j} b^{j}}{p^{j} }  .
\end{split}
\ee
Note that the first term in the right-hand-side of the last expression 
vanishes   when $p^{j}=p^{k} $, and thus, in particular, it does not get contribution from the diagonal part of $\rho=\mathbf{U}\varrho \mathbf{U}^{\dagger}$.
Loosely speaking, we may say that the first term in the right-hand-side of the last expression in equation \eqref{eqn: unfolded Petz metric 1} is  a purely non-classical contribution, while the classical contribution is encoded in   the second term.
Indeed, we see that when $[\mathbf{H},\varrho ]=0$ (or $[\mathbf{K},\varrho ]=0$) we obtain an expression which coincides with the purely classical contribution given by the Fisher-Rao metric tensor
\be\label{eqn: unfolded Petz and FR}
\left(\pi^{*}\GG^{\hh}_{f}\right)_{(\mathbf{U},\vec{p})}\left((i\mathbf{H} ,\vec{a}),(i\mathbf{K} ,\vec{b})\right)=(\GG_{FR})_{\vec{p}}\,(\vec{a},\vec{b}).
\ee
This result is not surprising in light of equation \eqref{eqn: Petz and F-R}.
However, the unfolded perspective discussed here allows for a global decomposition of the unfolded metric tensor into the sum of a classical part and a non-classical one, while equation \eqref{eqn: Petz and F-R} expresses only a pointwise relation.

Let us elaborate a little bit more on this instance.
The unfolding space $\mathcal{M}(\hh)=\SUh\times \Delta_{n}$ is a product manifold and thus   every vector field $X$ on it may always be decomposed as
\be\label{eqn: decomposition of vector fields on unfolding manifold}
X=f_{1}X_{1} + f_{2}X_{2},
\ee
where $X_{1}$ is a vector field which is   tangent to $\SUh$, $X_{2}$ is a vector field which is tangent to $\Delta_{n}$, and $f_{1}$ and $f_{2}$ are arbitrary smooth functions on the whole $\mathcal{M}(\hh)$.
Then,  equation \eqref{eqn: unfolded Petz metric 1} amounts to the fact that the pullback  to $\mathcal{M}(\hh)$ of every monotone quantum metric $\GG_{f}^{\hh}$ decomposes into the sum
\be\label{eqn: unfolded Petz metric 2}
\pi^{*}\GG_{f}^{\hh}\,=\,\GG_{f}^{\SUh} + \GG_{f}^{\Delta_{n}},
\ee
where $\GG_{f}^{\SUh}(hX,Y)=0$ for every smooth function $h$ on $\mathcal{M}(\hh)$ and every vector field  $Y$ on $\mathcal{M}(\hh)$, whenever $X$ is tangent to $\Delta_{n}$. Analogously, $\GG_{f}^{\Delta_{n}}(hX,Y)=0$ for every smooth function $h$ on $\mathcal{M}(\hh)$ and every vector field  $Y$ on $\mathcal{M}(\hh)$ whenever $X$ is tangent to $\SUh$.
Then, equation \eqref{eqn: unfolded Petz and FR} implies
\be\label{eqn: unfolded Petz metric 3}
\GG_{f}^{\Delta_{n}}=\pi^{*}\GG_{FR}^{n},
\ee
where $\GG_{FR}$ is the Fisher-Rao metric tensor on $\Delta_{n}$.
Concerning $\GG_{f}^{\SUh}$, we first note that, on using the basis $\{ \tau^1_{jk}, \tau^2_{jk}, \tau^3_{lm} \}$ of traceless, self-adjoint operators introduced in equation \eqref{eqn: basis generalized Pauli}, it is not hard to prove that
\be
\begin{split}
H^{kj}&= H_{1}^{jk} + \imath H_{2}^{jk} \quad \mbox{ for } j<k \\
H^{jk}&= H_{1}^{jk} - \imath H_{2}^{jk} \quad \mbox{ for } j<k ,
\end{split}
\ee
where $H_{1}^{jk}$ is the component of $\mathbf{H}$ along $\tau_{jk}^{1}$ and $H_{2}^{jk}$ is the component of $\mathbf{H}$ along $\tau_{jk}^{2}$, so that
 from equation \eqref{eqn: unfolded Petz metric 1}, we can write 
\be\label{eqn: unfolded Petz metric 4}
\begin{split}
\left(\GG_{f}^{\SUh}\right)_{(\mathbf{U},\vec{p})}\left((i\mathbf{H} ,\vec{a}),(i\mathbf{K} ,\vec{b})\right)&=\sum_{k>j=1}^{n} \,C^{jk}\,\left(p^{k} -p^{j}  \right)^{2}\,\left(p^{k} \,f\left(\frac{p^{j} }{p^{k} }\right)\right)^{-1}    
\end{split}
\ee
with
\be
C^{jk}=2\left(H^{jk}_{1} K^{jk}_{1} + H^{jk}_{2} K^{jk}_{2}\right).
\ee
Consequently,  on using  the  one-forms introduced in equation \eqref{eqn: decomposition Cartan form}, equation \eqref{eqn: unfolded Petz metric 4} yields 
\be\label{eqn: unfolded Petz metric 5}
\GG_{f}^{\SUh}=\sum_{k>j=1}^{n} \,2\left(p^{k} -p^{j}  \right)^{2}\,\left(p^{k} \,f\left(\frac{p^{j} }{p^{k} }\right)\right)^{-1} \left(\theta^{jk}_{1}\otimes\theta^{jk}_{1} + \theta^{jk}_{2}\otimes \theta^{jk}_{2}\right).
\ee
Putting together equation \eqref{eqn: unfolded Petz metric 2}, equation \eqref{eqn: unfolded Petz metric 3}, and equation  \eqref{eqn: unfolded Petz metric 5}, we finally obtain
\be\label{eqn: unfolded Petz metric 6}
\pi^{*}\GG_{f}^{\hh}=\sum_{k>j=1}^{n} \,2\left(p^{k} -p^{j}  \right)^{2}\,\left(p^{k} \,f\left(\frac{p^{j} }{p^{k} }\right)\right)^{-1} \left(\theta^{jk}_{1}\otimes\theta^{jk}_{1} + \theta^{jk}_{2}\otimes \theta^{jk}_{2}\right) + \pi^{*}\GG_{FR}^{n}
\ee
We thus see that the unfolded perspective allows us to globally separate the classical and non-classical contributions to $\pi^{*}\GG_{f}^{\hh}$ as claimed.


\subsection{Unfolding of relative entropies and monotone metrics} 
\label{subsec: unfolding rel entropies and mono metrics}

In this section, we will  extract the covariant tensor associated with the relative g-entropy introduced in equation \eqref{def: entropy Lev Ruskai}.
To this, we shall exploit the extraction algorithm introduced in section \ref{sec: from 2-functions to covariant tensors}, and we will see that the result is essentially the covariant tensor found in equation \eqref{eqn: unfolded Petz metric 6} by pulling back the monotone metric tensor $\GG_{f}^{\hh}$ to the unfolding space $\mathcal{M}(\hh)$.
Because of propositions \ref{prop: S under smooth maps} and  \ref{prop: g under smooth maps}, this result represents an unfolded version of the result presented in \cite{L-R-1999}.

Let us start by setting
\begin{gather}
	\rho = \pi(U, \mathbf{p}) = \sum_{j} p^j U \mathbf{e}_{jj} U^\dagger,\\
	\sigma = \pi(V, \mathbf{q}) = \sum_{j} q^j V \mathbf{e}_{jj} V^\dagger,
\end{gather}
so that the action of the super-operators $L_\sigma$ and $R_\sigma^{-1}$ can be written as
\begin{gather}\label{eqn: left and right superoperators}
	L_\sigma (A) = \sum_{j} q^j V \mathbf{e}_{jj} V^\dagger A,\\
	R_\rho^{-1} (A) = \sum_{j} (p^j)^{-1} A U \mathbf{e}_{jj} U^\dagger.
\end{gather}
Then, we define the projection  
\begin{equation}
	\Pi: \mathcal{M(H)} \times \mathcal{M(H)} \ni \big((U,\mathbf{p}),(V,\mathbf{q})\big) \mapsto \big(\pi(U,\mathbf{p}),\pi(V,\mathbf{q})\big) \in \stsp_n \times \stsp_n,
\end{equation} 
so that we can pull-back to $\mathcal{M(H)}\times\mathcal{M(H)}$ the relative g-entropy $S_{g}^{\hh}$ given in equation \eqref{def: entropy Lev Ruskai}.
To this purpose, we first note that, because of equation \eqref{eqn: left and right superoperators}, it holds
\be
L_{\sigma}R_{\rho}^{-1}(\mathbf{e}_{jk}^{\sigma\rho})=\frac{q^{j}}{p^{k}} \mathbf{e}_{jk}^{\sigma\rho},
\ee
where
\be
\mathbf{e}_{jk}^{\sigma\rho}=V\,\mathbf{e}_{jk}\,U^{\dagger}.
\ee
It follows that $L_{\sigma}R_{\rho}^{-1}$ can be diagonalized according to\footnote{Here, $\mathbf{e}_{jk}^{\sigma\rho}$ should be thought of as an element of $\bh$ endowed with the Hilbert-Schmidt inner product, and this explains the bra and ket notation.}
\be
L_{\sigma}R_{\rho}^{-1}=\sum_{j,k=1}^{n}\frac{q^{j}}{p^{k}} |\mathbf{e}_{jk}^{\sigma\rho}\rangle\langle   \mathbf{e}_{jk}^{\sigma\rho}|
\ee
and thus
\be
g\left(L_{\pi(V,\mathbf{\sigma})}R_{\pi(U,\mathbf{p})}^{-1}\right)= g\left(L_{\sigma}R_{\rho}^{-1}\right)=\sum_{j,k=1}^{n}\,g\left(\frac{q^{j}}{p^{k}}\right)\, |\mathbf{e}_{jk}^{\sigma\rho}\rangle\langle   \mathbf{e}_{jk}^{\sigma\rho}| .
\ee
It is then a matter of straightforward computation to show that the pull-back to $\mathcal{M}(\hh)\times\mathcal{M}(\hh)$ through $\Pi$, of the relative g-entropy $S_{g}^{\hh}$ given in equation \eqref{def: entropy Lev Ruskai},  reads
\be\label{eqn: pull-back of g-entropy}
\begin{split}
	\Pi^*\left(S_{g}^{\hh}\right)((U,\mathbf{p}),(V,\mathbf{q})) &=   \sum_{j,k}  g\left( \frac{q^j}{p^k} \right)  p^k    \bra{k} U^\dagger V \ket{j} \bra{j} V^\dagger U  \ket{k}.
	\end{split}
\ee

Let us check that $\Pi^*\left(S_{g}^{\hh}\right)$ is a potential function according to definition \ref{def: potential functions}, namely that  it satisfies equation \eqref{eqn: Lie der w.r.t. Xl of S} and equation \eqref{eqn: Lie der w.r.t. Xr of S}.
On computing $L_{\mathbb{Y}_r} \left( \pi^*S_{g}^{\hh}\right)$ by means of  the equality
\be 
L_{\mathbb{Y}_r} \left( \Pi^*S_{g}^{\hh}  \right) = i_{\mathbb{Y}_r}  d \left( \Pi^*S_{g}^{\hh}   \right)
\ee
following from Cartan's magic formula, and by 
using proposition \ref{prop: properties of lifted fields} and 
 equation \eqref{eqn: decomposition of vector fields on unfolding manifold}, we get
\be \label{eqn: unfolded tensor from relative g-entropy 1}
	\begin{split}
		 i_{\mathbb{Y}_r}  d \left( \Pi^*S_{g}^{\hh}   \right) & = \sum_{j,k}g'\left( \frac{q^j}{p^k}  \right)   dq^j (\mathbb{Y}_r) \bra{k} U^\dagger V \ket{j} \bra{j} V^\dagger U  \ket{k}  \\
		& + \sum_{j, k} g\left( \frac{q^j}{p^k}  \right) p^k \bra{k} U^\dagger dV (\mathbb{Y}_r) \ket{j} \bra{j} V^\dagger U  \ket{k} \\
		& -  \sum_{j, k} g\left( \frac{q^j}{p^k}  \right) p^k \bra{k} U^\dagger V \ket{j} \bra{j} V^\dagger dV (\mathbb{Y}_r) \, V^\dagger U  \ket{k} .
	\end{split}
\ee
The pull-back with respect to  the immersion $i_D$ is equivalent to set  $U=V$ and $\mathbf{p}=\mathbf{q}$, and thus
\be \label{eqn: unfolded tensor from relative g-entropy 2}
	\begin{split}
		i_D^{*}\left( L_{\mathbb{Y}_r} \left( \Pi^*S_{g}^{\hh}  \right)\right) & = g'\left(1  \right)\sum_{j=1}^{n}\,   dp^j (Y)=0
	\end{split}
\ee
because  of equation \eqref{eqn: functional dependence of dp}.
Proceeding analogously, we get 
\be \label{eqn: unfolded tensor from relative g-entropy 3}
	\begin{split}
		 i_D^{*}\left(L_{\mathbb{Y}_r} \left( \Pi^*S_{g}^{\hh}  \right)\right) & = - g'\left(1  \right)\sum_{j=1}^{n} dp^j (Y)=0,
	\end{split}
\ee
and we conclude that $\Pi^*S_{g}^{\hh}$ is a potential function according to definition \ref{def: potential functions}.
Moreover, since $\Pi$ is surjective, proposition \ref{prop: S under smooth maps} implies that $S_{g}^{\hh}$ is a potential function.

Now, following the procedure described in section \ref{sec: from 2-functions to covariant tensors}, we obtain a symmetric covariant tensor field on $\mathcal{M}(\hh)$ from the two-point function $\Pi^* S^{\hh}_{g}$ by means of 
\be\label{eqn: unfolded tensor from relative g-entropy 4}
	\GG^{\mathcal{M(H)}}_{g}(X,Y) = - i_D^*\left(L_{\mathbb{X}_l} L_{\mathbb{Y}_r} \left( \Pi^* S^{\hh}_{g} \right) \right),
\ee
where $X$ and $Y$ are arbitrary vector fields on $\mathcal{M(H)}$, and $\mathbb{X}_l, \mathbb{Y}_r$ are their left and right lift as in definition \ref{def: l and r lifts}. 
For this purpose, we start considering the first term in the right hand side of \eqref{eqn: unfolded tensor from relative g-entropy 1} given by
\be
A:=\sum_{j,k} g'\left( \frac{q^j}{p^k}  \right)  dq^j (\mathbb{Y}_r)  \bra{k} U^\dagger V \ket{j} \bra{j} V^\dagger U  \ket{k}.
\ee
We have 
\be\label{eqn: unfolded tensor from relative g-entropy 5}
\begin{split}
i_{D}^{*}\left(\mathcal{L}_{\mathbb{X}_{l}}\left(A\right)\right)&= -\sum_{j }    \frac{g''(1)}{p^{j}} \,\mathrm{d}p^{j}(X)  \,\mathrm{d}p^j (Y)=-g''(1)\,\pi^{*}\left(\GG_{FR}^{n}\right)(X,Y)  , 
\end{split}
\ee
where $\GG_{FR}^{n}$ is the Fisher-Rao metric tensor recalled in equation \eqref{eqn: Fisher-Rao metric tensor}, and where we used the fact that  $\mathrm{d}U^{\dagger}=-U^{\dagger}\mathrm{d}U\,U^{\dagger}$ to eliminate the contributions coming from the unitary group.
Then, we pass to the second term in the right hand side of \eqref{eqn: unfolded tensor from relative g-entropy 1} given by
\be
B:=\sum_{j, k} g\left( \frac{q^j}{p^k}  \right) p^k \bra{k} U^\dagger dV (\mathbb{Y}_r) \ket{j} \bra{j} V^\dagger U  \ket{k} .
\ee
In this case, we have
\be\label{eqn: unfolded tensor from relative g-entropy 6}
\begin{split}
i_{D}^{*}\left(\mathcal{L}_{\mathbb{X}_{l}}\left(B\right)\right)&=  -g'(1)\sum_{j=1}^{n}\,\mathrm{d}p^{j}(X) \bra{j} U^\dagger \mathrm{d}U (Y) \ket{j} \,+  \\
 & - g\left( 1\right)\sum_{j=1}^{k}  p^j \bra{j} U^\dagger \mathrm{d}U (Y) U^{\dagger}\mathrm{d}U(X) \ket{j}   + \\
 &+ g\left(1  \right) \sum_{j=1}^{n} p^j \,  \bra{j} U^\dagger \mathrm{d}U(Y)  \ket{j} \bra{j} U^\dagger \mathrm{d}U (X) \ket{j} +\\
& +  \sum_{k> j}   g\left( \frac{p^j}{p^k}  \right) p^k\,\bra{k} U^\dagger \mathrm{d}U (Y) \ket{j} \bra{j} U^\dagger \mathrm{d}U(X)  \ket{k}  + \\
& +  \sum_{k> j} g\left( \frac{p^k}{p^j}  \right) p^j \, \bra{j} U^\dagger \mathrm{d}U (Y) \ket{k} \bra{k} U^\dagger \mathrm{d}U(X)  \ket{j}  .
\end{split}
\ee
The third term in the right hand side of \eqref{eqn: unfolded tensor from relative g-entropy 1} given by
\be
C:= -  \sum_{j , k} g\left( \frac{q^j}{p^k}  \right) p^k \bra{k} U^\dagger V \ket{j} \bra{j} V^\dagger dV (\mathbb{Y}_r) \, V^\dagger U  \ket{k}
\ee
is computed in analogy with equation \eqref{eqn: unfolded tensor from relative g-entropy 4} leading to 
\be\label{eqn: unfolded tensor from relative g-entropy 7}
\begin{split}
i_{D}^{*}\left(\mathcal{L}_{\mathbb{X}_{l}}\left(C\right)\right)&=   g'(1)\sum_{j=1}^{n}\,\mathrm{d}p^{j}(X) \bra{j} U^\dagger \mathrm{d}U (Y) \ket{j} \,+ \\
& - g\left( 1  \right) \sum_{j =1}^{n}  p^j   \bra{j} U^\dagger \mathrm{d}U (X) \, U^\dagger \mathrm{d}U(Y)  \ket{j} +\\
& + g\left( 1\right)\sum_{j =1}^{n}  p^j \bra{j} U^\dagger \mathrm{d}U(X) \ket{j} \bra{j} U^\dagger \mathrm{d}U (Y) \,   \ket{j}\\
& +  \sum_{k> j}   g\left( \frac{p^j}{p^k}  \right) p^k\, \bra{k} U^\dagger \mathrm{d}U(X)  \ket{j} \bra{j} U^\dagger \mathrm{d}U (Y) \ket{k} + \\
& +  \sum_{k> j} g\left( \frac{p^k}{p^j}  \right) p^j \,  \bra{j} U^\dagger \mathrm{d}U(X)  \ket{k} \bra{k} U^\dagger \mathrm{d}U (Y) \ket{j} .
\end{split}
\ee
Collecting the results, equation \eqref{eqn: unfolded tensor from relative g-entropy 4} becomes
\be\label{eqn: expression metric on the unfolding space 1}
	\begin{split}
		 \GG^{ \mathcal{M(H)} }_{ g }(X,Y)   & = g''(1)\,\pi^{*} \GG_{FR}^{n} (X,Y)   \\
		 &+ g\left( 1\right)\sum_{j=1}^{k}  p^j \bra{j} U^\dagger \mathrm{d}U (X) U^{\dagger}\mathrm{d}U(Y) + U^\dagger \mathrm{d}U (Y) \, U^\dagger \mathrm{d}U(X)  \ket{j}  \\
& - 2g\left(1  \right) \sum_{j=1}^{n} p^j \,  \bra{j} U^\dagger \mathrm{d}U(X)  \ket{j} \bra{j} U^\dagger \mathrm{d}U (Y) \ket{j}  \\
& -2 \sum_{k>j} \left(g\left( \frac{p^j}{p^k}  \right) p^k + g\left( \frac{p^k}{p^j}  \right) p^j \right)  \,\,\Re\left(\bra{k} U^\dagger \mathrm{d}U (X) \ket{j} \bra{j} U^\dagger \mathrm{d}U(Y)  \ket{k}\right)  .
	\end{split}
\ee
In terms of  the basis introduced in \eqref{eqn: basis generalized Pauli}, we may use equation  \eqref{eqn: decomposition Cartan form}  to obtain
\be\label{eqn: expression metric on the unfolding space 2}
\begin{split}
\GG^{ \mathcal{M(H)} }_{ g }    & = g''(1)\,\pi^{*} \GG_{FR}^{n}   
 - 2g\left( 1\right)\sum_{k>j,m>l}\Tr(\varrho\{\tau_{jk}^{1},\tau_{lm}^{1}\})\theta^{jk}_{1}\otimes\theta^{lm}_{1}   \\
&- 2g\left( 1\right)\sum_{k>j,m>l}\Tr(\varrho\{\tau_{jk}^{1},\tau_{lm}^{2}\})\theta^{jk}_{1}\otimes\theta^{lm}_{2}    
- 2g\left( 1\right)\sum_{k>j,m-1=l}\Tr(\varrho\{\tau_{jk}^{1},\tau_{lm}^{3}\})\theta^{jk}_{1}\otimes\theta^{lm}_{3}    \\
&- 2g\left( 1\right)\sum_{k>j,m>l}\Tr(\varrho\{\tau_{jk}^{2},\tau_{lm}^{2}\})\theta^{jk}_{2}\otimes\theta^{lm}_{2}    
- 2g\left( 1\right)\sum_{k>j,m-1=l}\Tr(\varrho\{\tau_{jk}^{2},\tau_{lm}^{3}\})\theta^{jk}_{2}\otimes\theta^{lm}_{3}   \\
&- 2g\left( 1\right)\sum_{k-1=j,m-1=l}\Tr(\varrho\{\tau_{jk}^{3},\tau_{lm}^{3}\})\theta^{jk}_{3}\otimes\theta^{lm}_{3}   
 + 2g\left(1  \right)\sum_{m-1=l=1}^{n-1}(p^{l} + p^{m})\theta^{lm}_{3}\otimes\theta^{lm}_{3} \\
& +2 \sum_{k>j} \left(g\left( \frac{p^j}{p^k}  \right) p^k + g\left( \frac{p^k}{p^j}  \right) p^j \right)  \left( \theta_1^{jk} \otimes \theta_1^{jk} + \theta_2^{jk} \otimes \theta_2^{jk} \right),
\end{split}
\ee
where the Jordan product introduced in \eqref{eqn: Jordan product} as been employed.

Now, we want to compare $\GG^{ \mathcal{M(H)} }_{ g } $ with the pullback to $\mathcal{M}(\hh)$ of the monotone metric tensor $\GG_{f}^{\hh}$  given in equation \eqref{eqn: unfolded Petz metric 6}, to see if they agree for some choice of the function $g$.
To this, we first note that $\GG^{ \mathcal{M(H)} }_{ g } $ contains terms in $\theta^{jk}_{3}\otimes\theta^{lm}_{3}$, while $\Pi^{*}\GG_{f}^{\hh}$ does not.
Therefore, the first condition we must impose on $g$ in order for $\GG^{ \mathcal{M(H)} }_{ g } $ to agree with $\Pi^{*}\GG_{f}^{\hh}$ is
\be
g(1)=0.
\ee
With this condition, equation \eqref{eqn: expression metric on the unfolding space 2} simplifies to
\be\label{eqn: expression metric on the unfolding space 3}
\begin{split}
\GG^{ \mathcal{M(H)} }_{ g }    & = g''(1)\,\pi^{*} \GG_{FR}^{n}    +2 \sum_{k>j} \left(g\left( \frac{p^j}{p^k}  \right) p^k + g\left( \frac{p^k}{p^j}  \right) p^j \right)  \,\left( \theta_1^{jk} \otimes \theta_1^{jk} + \theta_2^{jk} \otimes \theta_2^{jk} \right).
\end{split}
\ee
If we focus only on the “unitary part” of $\GG^{ \mathcal{M(H)} }_{ g }$ and $\Pi^{*}\GG_{f}^{\hh}$, comparing equation \eqref{eqn: expression metric on the unfolding space 3} with equation \eqref{eqn: unfolded Petz metric 6}, we immediately see that they are equal if and only if
\be\label{eqn: f and g}
f\left(x\right) = \frac{\left(1 - x  \right)^{2}}{g\left( x\right) + x g\left(x^{-1}\right)} .
\ee
In this case, a direct computation shows that
\be
g''(1)=f(1){=}1,
\ee
where equation \eqref{eqn: Petz function} has been used.
Thus we conclude that
\be
\GG^{\mathcal{M(H)}}_g =\pi^{*}\GG^{\hh}_{f}
\ee
whenever   $g(1)=0$ and  equation \eqref{eqn: f and g} holds.
 
\subsection{Universal geodesics}\label{subsec: universal geodesics}

In this subsection we will see that all the monotone Riemannian metric tensors classified by Petz share a common family of geodesics which are, essentially, the projections of the geodesics of the Fisher-Rao metric tensor through the unfolding map $\pi$.

For technical reasons, we need to focus our attention to the open  submanifold $\mathcal{M}_{<}(\hh)$ of $\mathcal{M}(\hh)$ introduced in equation \eqref{eqn: open submanifold of unfolded space}.
Indeed, the rank of the covariant tensor $\pi^{*}\GG_{f}^{\hh}$  given in equation \eqref{eqn: unfolded Petz metric 6} is maximal on this submanifold.
Then, we define the covariant tensor $\mathcal{G}_{f}$ on  $\mathcal{M}_{<}(\hh)$ by
\be
\mathcal{G}_{f} := \pi^{*}\GG_{f}^{\hh} +  \sum\,\theta^{lm}_{3}\otimes\theta^{lm}_{3}.
\ee
Clearly, $\mathcal{G}_{f}$ is a Riemannian metric tensor on $\mathcal{M}_{<}(\hh)$, and it is a matter of direct inspection to see that $\pi\colon\mathcal{M}_{<}(\hh)\ra\stsp_{md}(\hh)$ is a Riemannian submersion between $(\mathcal{M}_{<}(\hh),\mathcal{G}_{f})$ and $(\stsp_{md}(\hh),\GG_{f}^{\hh})$.
This instance allows us to exploit the power of the theory of Riemannian submersions in order to study the Riemannian geometry of  $(\stsp_{md}(\hh),\GG_{f}^{\hh})$ in terms of the Riemannian geometry of $(\mathcal{M}_{<}(\hh),\mathcal{G}_{f})$ which is easier to handle.

In particular, we are interested in the study of the geodesics of $(\stsp_{md}(\hh),\GG_{f}^{\hh})$ .
According to \cite[Lemma 9.44]{Besse-1987}, every horizontal geodesic $(\mathcal{M}_{<}(\hh),\mathcal{G}_{f})$ gives rise to a geodesic on $(\stsp_{md}(\hh),\GG_{f}^{\hh})$ when projected through $\pi$.
A geodesic of $(\mathcal{M}_{<}(\hh),\mathcal{G}_{f})$ is horizontal  if and only if its tangent vector at each point lies in the orthogonal complement of the kernel of $T\pi$ at that point.
In particular, this means that a geodesic is horizontal if and only if its tangent vector at $(\mathbf{U},\vec{p})$ does not have components along $\mathcal{X}_{lm}^{3}(\mathbf{U},\vec{p})$, where $\mathcal{X}_{lm}^{3}$ is the dual vector field to $\theta^{lm}_{3}$.

The fact that $\mathcal{G}_{f}$ splits into the sum of the Fisher-Rao metric tensor, plus a metric tensor on the special unitary group (which is a sort of weighted version of the Cartan-Killing metric tensor, with $\vec{p}$-dependent weights),  implies that the $\vec{p}$-component of the geodesic equation is completely independent from $\mathbf{U}$ (this may easily be seen by computing the Euler-Lagrange equation of the metric Lagrangian associated with $\mathcal{G}_{f}$ following, for instance, \cite{M-F-LV-M-R-1990}).
Consequently, every geodesic $\gamma_{(\mathbf{U},\vec{p})}$ of $(\mathcal{M}_{<}(\hh),\mathcal{G}_{f})$ whose initial tangent vector  is  $(\mathbf{0},\vec{a})$ is of the type 
\be
\gamma_{(\mathbf{U},\vec{p})}(t)= (\mathbf{U},\vec{p}_{\vec{a}}(t)) 
\ee
where $\vec{p}_{\vec{a}}(t)$ is the geodesic of the Fisher-Rao metric tensor starting at $\vec{p}(0)=\vec{p}$ with initial tangent vector $\vec{a}$.
Recalling that the Fisher-Rao metric tensor is the pull-back, through the square-root map, of the round metric tensor on an open portion of the n-sphere  to the n-simplex \cite{G-I-2001}, every geodesic $\vec{p}_{\vec{a}}(t)$ is given by
\be\label{eqn: universal geodesics}
p^{j}_{\vec{a}}(t)\,=\,\cos^{2}\left(\frac{t}{2}\,||\vec{a}||_{FR}\right)\,p^{j} + \frac{\sin^{2}\left(\frac{t}{2}\,||\vec{a}||_{FR}\right)}{||\vec{a}||^{2}_{FR}}\,\frac{a^{j}\,a^{j}}{p^{j}} + \frac{\sin\left(t\,||\vec{a}||_{FR}\right)}{||\vec{a}||_{FR}}\,a^{j}\,.
\ee
Clearly, every  geodesic $\gamma_{(\mathbf{U},\vec{p})}$ is horizontal, and thus 
\be\label{eqn: universal geodesics 2}
\pi\circ\gamma_{(\mathbf{U},\vec{p})}(t)=\mathbf{U}\left(\sum_{j=1}^{n}p^{j}_{\vec{a}}(t)\mathbf{e}_{jj}\right)\mathbf{U}^{\dagger}
\ee
is a geodesic of $(\stsp_{md}(\hh),\GG_{f}^{\hh})$ for every operator monotone function $f$ characterizing the monotone Riemannian metric tensor $\GG_{f}^{\hh}$, i.e., it is a \grit{universal geodesic} for the family of monotone Riemannian metric tensors.

\section{Conclusions}\label{sec: conclusions}

In this contribution we analysed the Riemannian aspects of quantum information geometry from the point of view of the unfolding of quantum states. The latter were described  in terms of probability vectors and unitary operators built out of the spectral theorem, as described in section \ref{sec: unfolding of quantum states}.

We argued that this point of view helps in making the comparison between classical and quantum information geometry more transparent. Indeed, we thoroughly discussed how every monotone Riemannian metric tensor $\GG_{f}^{\hh}$ falling in Petz's classification, when considered from the unfolded perspective on the unfolding space $\mathcal{M}(\hh):=\SUh\times\Delta_{n}$ introduced in equation \eqref{eqn: unfolding space 2}, splits into the sum of two contributions. The first one   is essentially the purely classical Fisher-Rao metric tensor, whereas the   second tensor is a kind of weighted sum of the Cartan-Killing form on $\SUh$,   whose weights depend on the operator monotone function characterising $\GG_{f}^{\hh}$ and on  points on the classical simplex $\Delta_{n}$ (see equation \eqref{eqn: unfolded Petz metric 6}).
This leads us to conclude that all the possible monotone Riemannian metric tensors on the manifold $\stsp(\hh)$ of invertible quantum states share the same classical part.
This instance, though conceptually clear in the standard formulation of quantum information geometry, becomes also “pictorially”  clear in the unfolded perspective thanks to equation  \eqref{eqn: unfolded Petz metric 6}.

Equation  \eqref{eqn: unfolded Petz metric 6} may be the starting point for studying other metric and curvature  properties of the monotone Riemannian metric tensors, exploiting the theory of Riemannian submersions.
Indeed, in subsection \ref{subsec: universal geodesics} we already made a step in this direction by showing the existence of a family of geodesics which are common to all the monotone Riemannian metric tensors. They   are essentially the  geodesics of the classical Fisher-Rao metric, which are made quantum  by conjugation  with a fixed unitary operator (see equation \eqref{eqn: universal geodesics} and \eqref{eqn: universal geodesics 2}).

Of course, much more  could  be done by exploiting the unfolded perspective discussed here.
For instance, it would be interesting to exploit the differential geometry of  $\mathcal{M}(\hh):=\SUh\times\Delta_{n}$, in order to compute the Riemann tensor and  the other curvature-related tensors of the unfolded metrics, so that, again exploiting the theory of Riemannian submersions, we can get insights on the corresponding objects on $\stsp(\hh)$.

Given the splitting into classical and quantum part of the unfolded metrics, this perspective  will give a clear description of the classical contributions described in terms of the geometry of the Fisher-Rao metric tensor.
Moreover, being  $\SUh$  a well-studied Lie group, this  may be of practical help in performing actual computations.

We also discussed how the unfolded perspective applies to the framework of relative g-entropies, and their associated Riemannian metric tensors.
In particular, by exploiting the coordinate-free approach recalled in section \ref{sec: from 2-functions to covariant tensors}, we gave an alternative proof of the fact that relative g-entropies lead to monotone Riemannian metric tensors.
In this context, the obvious next step would be that of computing the family of dually-related connections determined by the relative g-entropies in the unfolded perspective.
Our conjecture is that also the skewness tensor describing this dual structure will split into a purely classical part related to the geometry of the  Fisher-Rao metric tensor, and a quantum part along the special unitary group.

Another intriguing future direction of investigation would be that of testing the unfolded perspective in more applied contexts like, for instance, quantum estimation theory, quantum metrology, and quantum tomography where, hopefully,  working with probability vectors and unitary operators instead of positive operators may lead to practical advantages.

In conclusion, we hope this work succeeds   to convey the message that the unfolded perspective here discussed actually presents some technical and theoretical features that deserve further study.

\section*{Acknoledgements}

\addcontentsline{toc}{section}{Acknowledgements}

The authors would like to thank Prof. G. Marmo for innumerable discussions and suggestions on the subjects of this work.
F. M. Ciaglia acknowledges that this work has been supported by the Madrid Government (Comunidad de Madrid-Spain) under the Multiannual Agreement with UC3M in the line of “Research Funds for Beatriz Galindo Fellowships” (C\&QIG-BG-CM-UC3M), and in the context of the V PRICIT (Regional Programme of Research and Technological Innovation).
F. Di Cosmo thanks the UC3M, the European Commission through the Marie Sklodowska-Curie COFUND Action (H2020-MSCA-COFUND-2017-GA 801538) and Banco Santander for their financial support through the CONEX-Plus Program.
P. Vitale acknowledges partial support by HPC – Centro Nazionale di Ricerca in High Performance Computing, Big Data and Quantum Computing, funded by European Union – NextGenerationEU. Her research is also supported by  Programme STAR Plus, financially supported by UniNA and Compagnia di San Paolo.

\addcontentsline{toc}{section}{References}


\end{document}